\newtheorem{definition}{Definition}
\newtheorem{remark}{Remark}
\newtheorem{lemma}{Lemma}
\newtheorem{claim}{Claim}
\newtheorem{theorem}{Theorem}
\newtheorem{corollary}{Corollary}
\newtheorem{proposition}{Proposition}
\newtheorem{assumption}{Assumption}
\newenvironment{customlem}[1]
  {\innercustomlem}
  {\endinnercustomlem}
\definecolor{bleudefrance}{rgb}{0.19, 0.55, 0.91}
\definecolor{ao(english)}{rgb}{0.0, 0.5, 0.0}
\newcommand{\addcite}[0]{\ifthenelse{\boolean{showcomments}}
{\textcolor{purple}{(add cite(s)) }}{}}%
\newcommand{\addcites}[0]{\ifthenelse{\boolean{showcomments}}
{\textcolor{purple}{(add cite(s)) }}{}}%
\newcommand{\enrique}[1]{  \ifthenelse{\boolean{showcomments}}
{\todo[inline,color=bleudefrance]{Enrique: #1}}{}}
\newcommand{\emmargin}[1]{\ifthenelse{\boolean{showcomments}}{\marginpar{\color{bleudefrance}\tiny EM: #1}}{}}
\newcommand{\juan}[1]{  \ifthenelse{\boolean{showcomments}}
{\todo[inline,color=pink]{Juan: #1}}{}}
\newcommand{\hancheng}[1]{  \ifthenelse{\boolean{showcomments}}
{\todo[inline,color=orange]{Hancheng: #1}}{}}
\newcommand{\agustin}[1]{  \ifthenelse{\boolean{showcomments}}
{\todo[inline,color=purple!60!white]{Agustin: #1}}{}}
\newcommand{\aem}[1]{
\ifthenelse{\boolean{showedits}}
{\added[id=EM]{#1}}
{\!#1\hspace{-4.75pt}}
}
\newcommand{\repem}[2]{
\ifthenelse{\boolean{showedits}}
{\replaced[id=EM]{#1}{#2}}
{\!#1\hspace{-4.75pt}}
}
\newcommand{\dem}[1]{
\ifthenelse{\boolean{showedits}}
{\deleted[id=EM]{#1}}
{}
}
\newcommand{\hl}[1]{#1}
\def\BibTeX{{\rm B\kern-.05em{\sc i\kern-.025em b}\kern-.08em
    T\kern-.1667em\lower.7ex\hbox{E}\kern-.125emX}}
\begin{document}


\title{
Learning to Act Safely
with Limited Exposure and Almost Sure Certainty}



%
\author{Agustin Castellano, Hancheng Min, Juan Andres Bazerque,
and Enrique Mallada 
\thanks{}\thanks{}\thanks{}\thanks{}\thanks{}
\thanks{A preliminary version of Section \ref{sec:bandits} was first presented in \cite{castellano2021learning}. A more in-depth discussion of Section \ref{sec:relaxed-rl} can be found in \cite{l4dc}.}
\thanks{A. Castellano, H. Min and E. Mallada are with the Department of Electrical and Computer Engineering at Johns Hopkins University, Baltimore, MD 21218 USA (e-mail: \{acaste11,hanchmin,mallada\}@jhu.edu).
 }
\thanks{
J. A. Bazerque is with the Department of Electrical Engineering at University of Pittsburgh, PA, 15213 (e-mail: juanbazerque@pitt.edu).
}
\thanks{This work was supported by NSF through grants CAREER 1752362, CPS 2136324, and TRIPODS 1934979}
}

\maketitle



\begin{abstract}
This paper puts forward the concept that learning to take safe actions in unknown environments, even with probability one guarantees, can be achieved without the need for an unbounded number of exploratory trials.
This is indeed possible, provided that one is willing to navigate trade-offs between optimality, level of exposure to unsafe events, and the maximum detection time of unsafe actions. We illustrate this concept in two complementary settings.
We first focus on the canonical multi-armed bandit problem and study the intrinsic trade-offs of learning safety in the presence of uncertainty. 
Under mild assumptions on sufficient exploration, we provide an algorithm that provably detects all unsafe machines in an (expected) finite number of rounds. The analysis also unveils a trade-off between the number of rounds needed to secure the environment and the probability of discarding safe machines. 

 We then consider the problem of finding optimal policies for a Markov Decision Process (MDP) with almost sure constraints. 
 We show that the action-value function satisfies a barrier-based decomposition which allows for the identification of feasible policies independently of the reward process. 
Using this decomposition, we develop a Barrier-learning algorithm, that identifies such unsafe state-action pairs in a finite expected number of steps. Our analysis further highlights a trade-off between the time lag for the underlying MDP necessary to detect unsafe actions, and the level of exposure to unsafe events. 
Simulations corroborate our theoretical findings, further illustrating the aforementioned trade-offs, and suggesting that safety constraints can speed up the learning process.


\end{abstract}

\begin{IEEEkeywords}
Uncertain systems, randomized algorithms, Markov processes, iterative learning control, optimal control
\end{IEEEkeywords}


\section{Introduction}\label{sec:intro}


Motivated by the success of machine learning in achieving super human performance, e.g., in vision
~\cite{rawat2017deep}, speech~\cite{hinton2012deep},
 and video games~\cite{silver2016mastering},
there has been recent interest in developing learning-enabled technology that can implement highly complex actions for safety-critical autonomous systems, such as self-driving cars, robots, etc. However, without proper safety guarantees such systems will rarely be deployed. There is therefore the need to develop analysis tools and algorithms that can provide such guarantees during, and after, training.
Efforts to provide such guarantees can be broadly grouped in two lines of work with somehow complementary success. 

The first approach leverages model-based techniques, based on Lyapunov stability~\cite{haddad2011nonlinear} and robust control~\cite{zhou1998essentials}, to provide (worst-case) safety guarantees based on a nominal model and assumptions on uncertainty and disturbances
\cite{2017.Berkenkamp,perkins2002lyapunov,2015.Berkenkamp, marvi}. 
In such settings, safety is usually specified in terms of stability, robust stability, or the existence of some invariant or control invariant sets.
\hl{Moreover, due to the worst-case approach to uncertainty, these methods tend to suffer poor performance in settings where the uncertainty is large, or the system performance is highly sensitive to model-uncertainty.}


The second line of work, in which ours naturally lies   , seeks to provide safety guarantees in model-free settings by adding constraints to the learning problem~\cite{Garcia,2018.Dalal,2018.Wachi,2020.Xu,2019.Paternainyfa,dxb,2020.Wachi,2020.HasanzadeZonuzy}. In this way, safety specifications can be further extended, beyond typical control notions, at the expense of introducing uncertainty and risk in the safety guarantees. \hl{More precisely, constraints in this body of work are probabilistic, either in expected value, critical value-at-risk, or high-probability, which do not allow for settings with hard constraints that need to be satisfied with probability one (w.p.1). Moreover, the satisfaction guarantees for these constraints are also generally provided probabilistically, via (expected) regret, or in high probability.} 

\hl{Naturally, both methodologies have their advantages. On the one hand, model-based methods can impose probability one constraints, and can also guarantee their satisfaction w.p.1. On the other hand, model-free methods can handle problems with high degree of uncertainty, and are particularly suited for cases where constraints and objectives are difficult to formalize.} Our work aims for robust safety guarantees more similar to the first approach (i.e., in an almost-sure sense), while borrowing methodologies from the second one (i.e., in a model-free setting), in an attempt to ``get the best of both worlds.''


\subsection{Contributions}
\hl{One of the driving arguments of our work is that, due to the logical (safe/unsafe) nature of safety assessment,  finding safe actions is a problem fundamentally different (and easier to solve)  than  finding the best one. To that end, we model safety assessment by a \emph{damage signal} $D_t\in\{0,1\}$, with $1$ indicating an unsafe event. We develop safety-assessment algorithms that learn which actions (or state-action pairs) satisfy safety specifications---w.p.1---both on Multi-armed Bandits (MABs)~\cite{lattimore2020bandit} and Markov Decision Processes (MDPs)~\cite{sutton2018reinforcement}.}

\hl{The algorithms we develop for both MABs and MDPs learn the set of all feasible policies. In both cases we provide \emph{explicit finite bounds} on the (expected) time needed to learn this set. It is important to note, though, that since our approach is model-free, unsafe events during training are unavoidable. That being said, during training our algorithms also \emph{limit the number of constraint violations}. We refer to the latter as \emph{exposure} throughout the paper, and define it both for MABs and MDPs.} 
\hl{Under both settings we also provide {relaxed} formulations of the problems that allow for less restrictive constraints. Specifically, we consider scenarios that admit policies that endure some damage along the trajectory. } 

\hl{Early developments for MABs were first presented in~\cite{castellano2021learning}. New to this work are extensions of all the results to $\lambda$-soft strategies (Definition \ref{def:lambdasoft}), as well as an improved bound for Theorem \ref{thm:mab-assured-finite-time}. The framework for MDPs was first introduced in \cite{l4dc}, but with no theoretical proofs. This paper contains all the proofs that support our theory.}

\subsection{Related work}
\hl{\textbf{Safety in multi-armed bandits} can be posed as guaranteeing that with high probability the expected cost of pulling an arm lies below a certain threshold. It has been studied in the linear setting \cite{amani2019linear,moradipari2020stage} and when both the reward and cost functions follow Gaussian Processes \cite{amani2020regret}, yielding methods similar to the celebrated Upper Confidence Bound algorithm \cite{auer} that first explore and expand a safe set and then focus on controlling regret \cite{pacchiano2021stochastic}. In contrast, in Section II we consider safety for a binary approach, where the probability of an arm failing follows a Bernoulli distribution (of unknown parameter). Instead of expanding a  safe region known a priori, our algorithm sequentially eliminates unsafe arms.}

\hl{\textbf{Safety in control} is usually specified in terms of reaching a certain region of the state space. In the context of learning, many works seek to apply Lyapunov design \cite{perkins2002lyapunov}, sometimes in tandem with Gaussian processes \cite{2017.Berkenkamp,2015.Berkenkamp} to obtain safe policies. Other proposed methods are related to using control barrier functions \cite{cbf1,cbf2} to specify safety guarantees and embed them into the cost to be optimized \cite{marvi}. While very powerful, these methods usually require knowledge of the system dynamics or a candidate Lyapunov/barrier function. Our framework on the other hand is model-free and thus assumes no prior knowledge. Safety specifications are implicitly captured by the damage signal and the transition kernel, yielding a safe region of the state space akin to what is achieved by computing backward reachable sets \cite{backward-reachable-sets, backward-reachable-sets-2}.}

\hl{\textbf{Constraints in MDPs}  are typically formulated as expectation-based constraints \cite{1998.Altman} or as critical value-at-risk \cite{cvar-mdps}. If the transitions and rewards are known, an optimal stationary policy for the first kind can be found as the solution of a linear program \cite{altman1998constrained}. 
Under unknown dynamics, typical strategies rely on primal-dual methods \cite{2020.Xu, 2019.Paternainwed, 2019.Paternainyfa,dxb, 2020.Ding, chen2016stochastic, triple-q}
, or on exploring and approximating the safety region \cite{2018.Wachi,2020.Wachi, 2018.Dalal}. The main difference in the approach of Section III is that we propose an MDP problem with almost sure constraints. 
Its particular nature proves useful in the sense that the set of feasible policies can be characterized in (expected) finite time.}

\hl{\textbf{Controller Synthesis} has also been used in the context of RL, where constraints are specified as high-level objectives given by Linear Temporal Logic \cite{hasanbeig2018,hasanbeig2019certified,hasanbeig2019reinforcement}. In these works the goal is  a policy that maximizes the probability of satisfying a set of tasks \cite{hasanbeig2019reinforcement}. 
These methods are powerful in the sense that they can capture a broad range of objectives and in some cases, even work on continuous spaces~\cite{lavaie-formal}. That being said, such methods rely on reward-shaping techniques that couple feasibility/safety with optimality. Our work decouples such problems (via the computation of a barrier function $B^*$). 
}

\subsection{Outline of the paper}
Section 
\ref{sec:multi_arm} addresses safety specifications in MABs, of the form $P(D_t=1|A_t)\leq \mu$ for some $\mu\geq 0$. 
While the \emph{``flawless''} case $\mu=0$ is rather straightforward (Section \ref{ssec:2.1}), the \emph{``relaxed''} case $\mu>0$ requires to trade-off between quickly discarding unsafe arms and accurately estimating $\prob(D_t=1|A_t)$, which we achieve with an arm-elimination algorithm based on Sequential Probability Ratio Tests (SPRTs) \cite{wald1945}. We focus on rapid (finite time) detection of unsafe actions almost surely, which naturally requires to (mildly) give up optimality by possibly discarding some safe arms (Section \ref{ssec:2.2}).

Section \ref{sec:assured_rl} deals with the problem of RL for  Constrained Markov Decision Processes.
 Dealing first with the \emph{flawless} setting, we develop a decomposition framework  (Section \ref{sec:decomposition}), based on hard barrier functions, that allows to decouple the safety assessment problem from the problem of maximizing rewards. This leads to a novel barrier learner algorithm, that is able to identify all state-action pairs that lead to unsafe events (Section \ref{ssec:3.2}). {Our analysis further shows the explicit role that the delayed consequences have in the learning process (Section \ref{ssec:3.3}).} \hl{Section \ref{sec:relaxed-rl} extends the RL setup to a relaxed setting in which we seek policies that allow for a finite number of unsafe events in any trajectory.}
 
 \hl{Numerical illustrations in Section \ref{sec:numericals} verify our theoretical analysis for MABs (Section \ref{sec:experiment-bandits}) and further suggest that, in the case of MDPs, learning the barrier first can aid in learning a task-oriented navigation goal later (Section \ref{sec:experiment-mdps}).}

\section{Multi-armed bandits}\label{sec:multi_arm}
\label{sec:bandits}
We consider the  setting of a stochastic bandit problem, with $K$ arms indexed as $a\in\{1,\ldots,K\}$. In the standard bandit problem an agent aims to devise an arm-pulling policy to optimize a  reward. Here, we switch focus to the safety problem, for which we consider that pulling some the arms could be unsafe and lead to system damage or harm to the agent. Specifically, at each round $t\geq 1$ the agent pulls an arm $A_t$ and obtains a binary-valued \emph{damage indicator} $D_t$. If $D_t$ is zero (one) this means that the action led to a safe (unsafe) result. We have that $\mathbb{E}[D_t | A_t]=\mu_{A_t}$. Each machine is therefore characterized by its \emph{safety parameter} $\mu_a$. The greater this value is, the more likely it is that pulling the machine will lead to an unsafe event. The goal of the player in this setup is to identify all the machines that are $\mu$-unsafe, which is hereby defined.

\begin{definition}
Given a \emph{safety specification} $\mu\in[0,1)$, a machine $a$ is said to be $\mu$-unsafe if and only if $\mu_a>\mu$. 
Accordingly, a machine is $\mu$-safe whenever $\mu_a\leq\mu$.
\end{definition}

The safety requirement $\mu$ is a design parameter, and is the only data that the player has access to along with the signal $D_t$. We will look at two distinct cases:
\begin{enumerate}[label=\Alph*)]
    \item \emph{Flawless setting} ($\mu=0$): in this setting any machine with positive probability of giving damage (i.e. $\mu_a>0$) is considered unsafe.
    \item \emph{Relaxed setting} ($\mu>0$): In this setting we want to identify unsafe machines with $\mu_a>\mu$. This means that we allow ``somewhat defective'' machines.
\end{enumerate}
We will focus first on the flawless setting, as it will allow us to build intuition on how to devise a proper Algorithm and on the values of metrics involved. The solution in this case is straightforward: let the agent pull each arm and avoid arms that have led to an unsafe event $D_t=1$. For the second case, we will rely on building a one-sided Sequential Probability Ratio Test (SPRT) \cite{wald1945}, that will make us try each machine a sufficient number of times; if the machine is \emph{unsafe}, the test will eventually decide on that hypothesis.

In both cases, our goal is the same. We want to \emph{detect} all the machines that are unsafe. To that end, let us define at each round $t\geq 1$ the candidate safe set $\mathcal{A}_t$, which contains all the arms that haven't been classified as unsafe. 

\begin{assumption}
Given a safety requirement $\mu$, there are $M$ $\mu$-unsafe machines,  where $1\leq M\leq K$. Without loss of generality we will assume the the first $M$ arms to be $\mu$-unsafe (i.e. $\mu_a>\mu, ~ a=1,\ldots,M$)
\end{assumption}

\begin{definition}
For each round $t\geq 1$ we define the \emph{candidate safe set} $\mathcal{A}_t$ as the set containing all the arms that have not been classified as unsafe.
\end{definition}

\hl{The set $\mathcal A_t$ is initialized as $\mathcal{A}_0:=[K]=\left\{1,\ldots K\right\}$, and sequentially trimmed down whenever an arm is found to be unsafe. To select which arm to pull at each round, we consider a \emph{strategy} $\psi$, which assigns to a set of arms  a probability of sampling each arm in the set. We define it next.}
\begin{definition}[Strategy]
\hl{A strategy $\psi:2^{[K]}\rightarrow \Delta^K$ is a function mapping sets of machines to the $K$-dimensional probability simplex, with the property that for any set $\mathcal{A}\subseteq[K]$ it holds that the support of $\psi(A)$ satisfies $\operatorname{Supp}\left(\psi\left(\mathcal{A}\right)\right)\subseteq\mathcal{A}.$}
\end{definition}
\hl{The requirement that $\operatorname{Supp}\left(\psi\left(\mathcal{A}\right)\right)\subseteq\mathcal{A}$ ensures that (for any set $\mathcal{A}$) the strategy only assigns positive mass to arms in $\mathcal{A}$. Then, given a set $\mathcal A$, the probability of sampling an arm $a$ is $\mathbb{P}_{A\sim\psi(\mathcal{A})}(A=a)$, where $A$ is the (random) variable corresponding to the arm being sampled. We will further use the notation $\mathbb{P}_\psi(A=a)$ when the set $\mathcal A$ is understood from the context.
Thus, given a set $\mathcal A_t$ at time $t\geq0$, $\psi$ induces a probability over a random action, $A_t$, to be taken with probability $\mathbb{P}_\psi(A_t=a)$.
}

\hl{We now define a class of strategies that are \emph{sufficiently exploratory}, in the sense that they sample each arm in the candidate set with positive probability.}


\begin{definition}[$\lambda$-soft strategy]\label{def:lambdasoft}
    {Given $0<\lambda\leq 1$, a strategy $\psi$ is called $\lambda$-soft if $\forall \mathcal{A}\subset [K]$ $$\mathbb{P}_{\psi}\left(A=a\right)\geq \frac{\lambda}{|\mathcal{A}|}\quad\forall a\in\mathcal{A}\,. $$}
\end{definition}
{As stated above, $\lambda$-softness is a condition on sufficient exploration of each arm. As a special case, the always-uniform strategy is $1$-soft.}

Although we recognize that detecting unsafe machines necessarily implies pulling from those unsafe arms, we want to have a notion of whether our decision rules choose machines in an efficient manner. It is with that goal in mind that we define at each round the \emph{exposure}.

\begin{definition}[Exposure]\label{def:exposure}
For $t\geq 1$ we define 
\begin{equation}
    E_t = \sum_{\tau=1}^t \mathds{1}\left(\mu_{A_\tau}>\mu\right).
\end{equation}
\end{definition}
\noindent where $\mathds{1}(x)=1$ if $x$ is true and 0 otherwise. This metric  counts the rounds in which  $\mu$-unsafe machines have been pulled, regardless of whether they led to an unsafe event or not. Notice that the exposure  inherits the randomness of the sequence of decisions $A_t$. Our results throughout this section will deal then with the expected value $\mathbb{E}[E_t]$. Ideally a good player would be one that attains low exposure---meaning it selects unsafe machines infrequently. 
\begin{remark}
\hl{Throughout the remainder of this section, we  show that for $\lambda$-soft strategies $\mathbb{E}[E_t]$ is bounded.} This marks a stark contrast with the notion of \emph{regret}, typically studied in Bandit settings~\cite{lattimore2020bandit}, where unbounded regret is unavoidable~\cite{1985.Lai}.
\end{remark}
At time $t$, the number of times arm $a$ has been pulled is:  
\begin{equation}
    N_a(t)=\sum_{\tau=1}^t \mathds{1}(A_\tau=a)\,.
\end{equation}
The following lemma states that the expected exposure coincides with the sum of the expected number of pulls over the unsafe machines.

\begin{lemma}\label{lemma:handicap_tnt}
$\mathbb{E}[{E_t}]=\sum_{a=1}^M\mathbb{E}[N_a(t)]$
\begin{proof}
\begin{align*}
    &\mathbb{E}[E_t] \!=\!  \sum_{\tau=1}^t \mathbb{E}\left[\mathds{1}\left(\mu_{A_\tau}\!>\!\mu\right)\right]\!=\!\sum_{\tau=1}^t\sum_{a=1}^K P(A_\tau\!=\!a)\mathds{1}\left(\mu_{a}\!>\!\mu\right)\\
    &
    \!=\!\sum_{\tau=1}^t\sum_{a=1}^M P(A_\tau\!=\!a)
    \!=\!\sum_{a=1}^M\sum_{\tau=1}^t \mathbb{E}\left[\mathds{1}(A_\tau\!=\!a)\right]\!=\!\sum_{a=1}^M\mathbb{E}[N_a(t)]
\end{align*}
\end{proof}
\end{lemma}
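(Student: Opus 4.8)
The plan is to reduce the claim to a pathwise identity and then take expectations, using only linearity of expectation together with the standing assumption that the $\mu$-unsafe arms are exactly $\{1,\ldots,M\}$. The crucial observation is that, for any fixed round $\tau$, the event that a $\mu$-unsafe arm is pulled coincides with the event that the chosen arm lies in $\{1,\ldots,M\}$; that is, $\mathds{1}(\mu_{A_\tau}>\mu)=\mathds{1}(A_\tau\in\{1,\ldots,M\})$. Since the events $\{A_\tau=a\}$ for distinct $a$ are disjoint, this indicator decomposes as $\mathds{1}(\mu_{A_\tau}>\mu)=\sum_{a=1}^M\mathds{1}(A_\tau=a)$.

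First I would substitute this decomposition into the definition of the exposure $E_t=\sum_{\tau=1}^t\mathds{1}(\mu_{A_\tau}>\mu)$ and interchange the two finite sums to obtain the \emph{pathwise} identity $E_t=\sum_{a=1}^M\sum_{\tau=1}^t\mathds{1}(A_\tau=a)=\sum_{a=1}^M N_a(t)$, recognizing the inner sum as the pull count $N_a(t)$. Taking expectations on both sides and invoking linearity of expectation then yields the claim immediately.

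An alternative, slightly more verbose route---the one that most likely mirrors the author's presentation---is to carry the expectation through from the start: by linearity, $\mathbb{E}[E_t]=\sum_{\tau=1}^t\mathbb{E}[\mathds{1}(\mu_{A_\tau}>\mu)]$, then expand each term by the law of total expectation over the $K$ arms, $\mathbb{E}[\mathds{1}(\mu_{A_\tau}>\mu)]=\sum_{a=1}^K P(A_\tau=a)\mathds{1}(\mu_a>\mu)$, use the assumption to drop the deterministic indicator and truncate the sum at $M$, swap the order of summation, and recollect $\sum_{\tau=1}^t P(A_\tau=a)=\mathbb{E}[N_a(t)]$.

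There is no genuine obstacle here; the result is essentially bookkeeping. The only point requiring care is the use of the assumption: the equivalence $\{\mu_{A_\tau}>\mu\}=\{A_\tau\le M\}$ relies on the unsafe set being exactly the first $M$ arms, and this is the sole place that hypothesis enters. All sums are finite, so interchanging the order of summation and moving the expectation inside the sum are justified without any convergence considerations.
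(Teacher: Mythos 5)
Your proposal is correct and essentially matches the paper's argument: your ``verbose'' second route reproduces the paper's computation line for line (linearity, total expectation over the $K$ arms, truncation to the first $M$ unsafe arms, swap of sums, and recollecting $\sum_{\tau=1}^t \mathbb{E}[\mathds{1}(A_\tau=a)]=\mathbb{E}[N_a(t)]$). Your primary pathwise route, establishing $E_t=\sum_{a=1}^M N_a(t)$ almost surely before taking expectations, is a trivially equivalent and arguably cleaner rearrangement of the same bookkeeping, so no substantive difference exists.
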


\begin{definition}
The \emph{conservation ratio} $C_{\varepsilon,t}$ is the proportion of safe machines kept at time $t$
\begin{equation}
C_{\varepsilon,t}:=\frac{|\mathcal{A}_t\cap\mathcal{A_\varepsilon^\star}|}{|\mathcal{A}_\varepsilon^\star|}\label{eq:safety_ratio}
\end{equation}
where $\mathcal{A}_t$ is the candidate safe set and $\mathcal{A}_\varepsilon^\star$ is the set containing all arms that are $\left(\mu-\varepsilon\right)$-safe:  $\mathcal{A}_\varepsilon^\star=\{a\in\mathcal{A}:\mu_a\leq\mu-\varepsilon\},$
where $0\leq \varepsilon\ll 1$ is a non-negative slack 
parameter. \hl{We let $C_{\varepsilon, t}\equiv 0$ in the event $|\mathcal{A}_\varepsilon^\star|=0$.}
\end{definition}
This ratio gives the proportion of $(\mu-\varepsilon)$-safe machines present in the candidate safe set at each time 
step. ($C_{\varepsilon,t}$ close to $1$ is desirable). The need for the conservativeness given by $\varepsilon$ is that we want to detect unsafe machines in finite time. This will become clearer when we discuss the \emph{relaxed setting}, for now it suffices to assume $\varepsilon=0$.
\subsection{Flawless setting ($\mu=0$)}\label{ssec:2.1}
We start with the simplest case imaginable, which is that of a rigorous safety requirement of $\mu=0$. In this setting, any machine that has positive probability of giving damage $D_t=1$ should be deemed unsafe. The strategy for discarding these machines is pretty straightforward: at each round $t$ select an arm $A_t$ following strategy $\psi(\mathcal{A}_t)$ and, if the resulting damage is $D_t=1$, classify the machine as unsafe by taking it out of the candidate safe set $\mathcal{A}_t$. This decision rule, summarized in Algorithm \ref{alg:flawless-inspector} has two interesting properties: \emph{i)} all unsafe machines are eventually found, and \emph{ii)} no safe machines are discarded along the way, as the the following theorem states. 
\begin{algorithm}[h]
    \SetAlgoLined
    \DontPrintSemicolon
    \KwIn{Number of arms $K$, strategy $\psi$.}
    \tcc{Initialize candidate safe set}
    $\mathcal{A}_0=\left\{1,\ldots,K\right\}$ 
    
     \For{$t=1,2,\ldots,$}{
    Pick arm $A_t \sim \psi(\mathcal{A}_t$)
    
    Observe damage $D_t$
    
    \If{$D_t=1$}{
    \tcc{trim unsafe arm} 
    $\mathcal{A}_t \leftarrow \mathcal{A}_{t-1}\setminus \{A_t\}$
    }
    }
    \caption{Flawless Inspector}
    \label{alg:flawless-inspector}
\end{algorithm}

\begin{theorem}\label{thm:mab-assured-finite-exposure}
Under Algorithm \ref{alg:flawless-inspector}, for every strategy $\psi$, the following (in)equalities hold with probability 1 for all $t\geq 1$:
\begin{align}
    \mathbb{E}[C_{0,t}] &= 1\\
    \mathbb{E}[E_t] &\leq \sum_{a=1}^M \frac{1}{\mu_a} \label{eq:handicap-bound}
\end{align}
\begin{proof}
The proof for the safety ratio $C_{0,t}$ is immediate, since Algorithm \ref{alg:flawless-inspector} can never discard a safe  machine (the event $D_t=1$ has zero probability when pulling from a flawless arm). For the remaining equalities, let $N_a$ be the number of pulls of the $a$-th arm needed to classify it as unsafe, which is well-defined \hl{(finite with probability one)} for $a=1,\ldots, M$. We have that $N_a\sim\mathrm{Geometric}(\mu_a)$, hence:
%
%
\be
    \mathbb{E}[N_a]=\sum_{n=1}^\infty \prob(N_a=n)n=\sum_{n=1}^\infty \mu_a(1-\mu_a)^{n-1}n=\frac{1}{\mu_a}\label{eq:Na}
\ee
Furthermore, for all $t$ we have $N_a(t)\leq N_a$. Taking expectation on both sides and using \eqref{eq:Na} yields $\mathbb{E}[N_a(t)]\leq\frac{1}{\mu_a}$. {Combining this with the result of Lemma \ref{lemma:handicap_tnt} gives \eqref{eq:handicap-bound}.}
\end{proof}
\end{theorem}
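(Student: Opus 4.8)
The plan is to treat the two claims separately, since the first is an almost-sure structural statement about the algorithm, while the second follows from the exposure decomposition in Lemma~\ref{lemma:handicap_tnt} combined with a geometric detection-time argument.

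For the conservation-ratio identity, I would first observe that in the flawless setting the reference set $\mathcal{A}_0^\star$ consists exactly of the arms with $\mu_a = 0$. The only mechanism by which Algorithm~\ref{alg:flawless-inspector} removes an arm from $\mathcal{A}_t$ is an observation $D_t = 1$, and pulling a safe arm yields $D_t = 1$ with probability zero. A union bound over the rounds then shows that, with probability one, no safe arm is ever discarded, so $\mathcal{A}_t \cap \mathcal{A}_0^\star = \mathcal{A}_0^\star$ for all $t$ almost surely. Consequently $C_{0,t} = 1$ almost surely, and its expectation equals one.

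For the exposure bound, I would start from $\mathbb{E}[E_t] = \sum_{a=1}^M \mathbb{E}[N_a(t)]$ supplied by Lemma~\ref{lemma:handicap_tnt}, which reduces the problem to bounding the expected number of pulls of each unsafe arm. The key device is to decouple the randomness: to each arm $a$ associate an intrinsic i.i.d.\ $\mathrm{Bernoulli}(\mu_a)$ sequence of damage outcomes, and let $N_a$ be the index of the first $1$ in that sequence, i.e.\ the number of pulls required to detect arm $a$. By construction $N_a \sim \mathrm{Geometric}(\mu_a)$, independently of the strategy $\psi$, and evaluating the resulting series gives $\mathbb{E}[N_a] = 1/\mu_a$. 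The crucial observation is the pathwise domination $N_a(t) \leq N_a$: since arm $a$ is excised from the candidate set the instant its detecting pull occurs, it is pulled at most $N_a$ times in total, no matter when or whether $\psi$ selects it. Taking expectations and summing over $a = 1, \ldots, M$ yields the claimed bound.

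The step requiring the most care is this last domination, together with establishing that $N_a$ is well defined and strategy-independent. The subtlety is that $N_a(t)$ is generated by an arbitrary, possibly adaptive rule $\psi$ acting on a shrinking action set, so one must cleanly separate the arm's own damage process---which fixes the detection threshold $N_a$---from the scheduling induced by $\psi$, which only governs the timing of the pulls. Once this separation is made explicit, $N_a(t) \leq N_a$ holds uniformly over all strategies and the remainder is routine.
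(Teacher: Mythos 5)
Your proof is correct and follows essentially the same route as the paper's: the safe-arm claim via the impossibility of discarding a $\mu_a=0$ arm, and the exposure bound via Lemma~\ref{lemma:handicap_tnt} together with $N_a\sim\mathrm{Geometric}(\mu_a)$ and the pathwise domination $N_a(t)\leq N_a$. Your explicit coupling---attaching an intrinsic i.i.d.\ Bernoulli damage sequence to each arm so that $N_a$ is well defined independently of the possibly adaptive strategy $\psi$---is a welcome rigor detail that the paper's proof leaves implicit, but it does not change the argument.
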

We now state that under a uniform strategy, Algorithm \ref{alg:flawless-inspector} finds all unsafe machines in expected finite time.

\begin{theorem}\label{thm:mab-assured-finite-time}
{Let $\psi$ be a $\lambda$-soft strategy, and assume all $M$ unsafe machines satisfy $\mu_a\geq\mu_{\text{low}}$. Then Algorithm \ref{alg:flawless-inspector} finds all unsafe machines in time $T$, whence:}
\begin{equation}
    {\mathbb{E}[T] \leq \frac{1}{\lambda\mu_{\text{low}}}\big(M + (K-M)\log(M+1)\big).}
\end{equation}
\end{theorem}
\begin{proof}
The proof is in Appendix \ref{app:mab-assured-unif}.
\end{proof}
\begin{corollary}[Sample-complexity bound]
\hl{For any $\delta\in(0,1)$, with probability at least $1-\delta$, Algorithm \ref{alg:flawless-inspector} finds all unsafe machines after at most}
$$
\hl{\left(1+\log\frac{1}{\delta}\right)\frac{1}{\lambda\mu_{\text{low}}}\big(M + (K-M)\log(M+1)\big).}
$$
\begin{proof}
    \hl{The stopping time $T$ defined in the previous theorem is a sum of geometric random variables (see Appendix \ref{app:mab-assured-unif}). We use the bound on its expected value along with a tail-bound for sums of geometric random variables \cite[Corollary 2.4]{janson2018tail} with mean $1+\log 1/\delta$.}
\end{proof}
\end{corollary}


\subsection{Relaxed setting ($\mu>0$)}\label{ssec:2.2}
We next consider the \emph{relaxed} setting, in which we allow machines that give damage $D_t$ with (possibly low) probability $\mu$. This means that in order to identify unsafe machines we can no longer discard them at first sign of damage, but must rather pull from each arm and observe multiple unsafe events in order to be confident that the machine in question is defective. To check whether an arm $a$ is defective or not we build the following hypothesis test $\mathbb{H}_a$:
\begin{equation}\label{eq:hyp-test}
    (\mathbb{H}_a)~\begin{cases}\mathcal{H}_0:&\mu_a\leq\mu-\varepsilon\\\mathcal{H}_1:&\mu_a>\mu\end{cases}
\end{equation}
in which the alternative hypothesis is that the machine is $\mu$-unsafe, and where we introduce the slack parameter $\varepsilon\in(0,\mu]$. In order to solve \eqref{eq:hyp-test}, we will devise a Sequential Probability Ratio Test (SPRT) which is based on Abraham Wald's seminal work \cite{wald1945} with the following properties:
\begin{enumerate}
    \item If the machine is unsafe ---meaning $\mathcal{H}_1$ is true--- the test will terminate in expected finite pulls $\mathbb{E}[N_a]$.
    \item If the machine is safe, the probability that the test  ---incorrectly--- decides on $\mathcal{H}_1$ is upper bounded by $\alpha$, where $\alpha\in (0,1)$ is the \emph{failure tolerance} of the test.
    \item Lowering failure tolerance $\alpha$ necessarily implies more pulls $N_a$ to detect unsafe machines.
    \item If $\mu_a\in(\mu-\varepsilon,\mu)$ the test is inconclusive.
    \item The test is one-sided: it only decides on $\mathcal{H}_1$. Similar to Algorithm \ref{alg:flawless-inspector}, whenever a machine is classified as unsafe it is not pulled any longer.
\end{enumerate}




For a fixed arm $a$, let $\mathbf{d}_a(t)=\{D_\tau: A_\tau=a, \tau\leq t\}$ be the (binary) sequence of outcomes of the $a$-th machine up to time $t$. The sequential probability ratio test relies on computing the log-likelihood ratio at each time step:

    \begin{equation}\label{eq:log-likelihood-def}
    \Lambda_a(t)=\log\frac{f_\mu(\mathbf{d}_a(t))}{f_{\mu-\varepsilon}(\mathbf{d}_a(t))}\,,
\end{equation}
where $f_\mu$ and $f_{\mu-\varepsilon}$ are the likelihood that the sequence $\mathbf{d}_a(t)$ came from independent Bernoulli trials of success rate $\mu$ and $\mu-\epsilon$ respectively. The test terminates by declaring $\mathcal{H}_1$ whenever
\begin{equation}\label{eq:log-likelihood-threshold}
    \Lambda_a(t) \geq \log(1/\alpha)\;.
\end{equation}     

By means of sufficient statistics, $\Lambda_a(T)$ can be written as a function of both $k$, the total number of outcomes of $D_t=1$ and $N_a(T)$, the total number of pulls up to time $T$. 
For a particular single arm the testing procedure is as follows.   For each round $t\geq 1$ pull the arm and (given $\mu$ and $\varepsilon$) update the log-likelihood in \eqref{eq:log-likelihood-def}. If  \eqref{eq:log-likelihood-threshold} holds, then terminate the test, otherwise observe another sample $D_t$ and repeat. 

The following two lemmas state the desired behavior of the SPRT. Namely, that \emph{i)} if the machine is unsafe, the SPRT will declare $\mathcal{H}_1$ with probability 1, \emph{ii)} if the machine is safe, the SPRT will (incorrectly) declare $\mathcal{H}_1$ with probability less than or equal to $\alpha$, and  \emph{iii)} the time of detection for unsafe machines is finite in expectation, and is well characterized in terms of the design parameters $\mu$, $\varepsilon$ and $\alpha$. 

\begin{lemma}\label{lemma:unsafe_sequences_escape}
For a fixed arm $a$ of parameter $\mu_a$, consider the sequential probability ratio test defined by \eqref{eq:hyp-test}--\eqref{eq:log-likelihood-threshold}, where $\mu$, $\varepsilon$ and $\alpha$ are given. Then:
\begin{enumerate}[label=\roman*)]
    \item If $\mathcal{H}_1$ is true, the test will (correctly) declare $\mathcal{H}_1$ with probability 1.
    \item If $\mathcal{H}_0$ is true, the test will  keep going indefinitely with probability $\geq 1-\alpha$
\end{enumerate}
\end{lemma}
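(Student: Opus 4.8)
The plan is to express the log-likelihood ratio as a random walk with independent, identically distributed increments and analyze its drift. Writing $\mathbf{d}_a$ for the sequence of $\mathrm{Bern}(\mu_a)$ outcomes recorded at the pulls of arm $a$, the sufficient-statistic form of \eqref{eq:log-likelihood-def} makes $\Lambda_a$ after $n$ pulls a sum $\Lambda_a=\sum_{i=1}^n Z_i$, where (up to the overall sign fixed by the rejection rule \eqref{eq:log-likelihood-threshold}, so that crossing the threshold corresponds to the data favouring $\mu$ over $\mu-\varepsilon$) each increment equals $\log\frac{\mu}{\mu-\varepsilon}$ when $D_i=1$ and $\log\frac{1-\mu}{1-\mu+\varepsilon}$ when $D_i=0$. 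The $Z_i$ are i.i.d.\ with mean $\delta(\mu_a):=\mu_a\log\frac{\mu}{\mu-\varepsilon}+(1-\mu_a)\log\frac{1-\mu}{1-\mu+\varepsilon}$, which is affine and strictly increasing in $\mu_a$.

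For part (i), I would invoke the strong law of large numbers: almost surely $\Lambda_a/n\to\delta(\mu_a)$. The key observation is that at $\mu_a=\mu$ the drift $\delta(\mu)$ equals the Kullback--Leibler divergence $D(\mathrm{Bern}(\mu)\,\|\,\mathrm{Bern}(\mu-\varepsilon))$, which is strictly positive since $\varepsilon>0$; combined with the monotonicity of $\delta$, this gives $\delta(\mu_a)>0$ for every $\mu_a>\mu$, i.e.\ under $\mathcal{H}_1$. Hence $\Lambda_a\to+\infty$ almost surely, so it exceeds the fixed threshold $\log(1/\alpha)$ after finitely many pulls with probability one, and the test declares $\mathcal{H}_1$.

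For part (ii), I would use the change-of-measure martingale underlying the SPRT rather than the SLLN, since here I need to control the supremum of the walk over all time, not its asymptotic slope. Consider the likelihood ratio $M_n=e^{\Lambda_a}=\prod_{i=1}^n \frac{f_\mu(D_i)}{f_{\mu-\varepsilon}(D_i)}$. When the true parameter is exactly $\mu_a=\mu-\varepsilon$, a one-line computation gives $\mathbb{E}\big[f_\mu(D)/f_{\mu-\varepsilon}(D)\big]=1$, so $M_n$ is a nonnegative martingale with $M_0=1$; for $\mu_a<\mu-\varepsilon$ the same expectation is strictly below $1$ (again affine and increasing in $\mu_a$, equal to $1$ at $\mu-\varepsilon$), so $M_n$ is a supermartingale. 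In either case Ville's maximal inequality (equivalently, Doob's inequality for the nonnegative supermartingale $M_n$) yields
\[
\mathbb{P}\Big(\sup_{n\ge 1} M_n\ge \tfrac{1}{\alpha}\Big)\le \alpha .
\]
Since crossing the detection threshold \eqref{eq:log-likelihood-threshold} is exactly the event $\{M_n\ge 1/\alpha\}$, the probability that the test ever incorrectly declares $\mathcal{H}_1$ under $\mathcal{H}_0$ is at most $\alpha$, so it keeps going indefinitely with probability at least $1-\alpha$.

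The main obstacle is part (ii): one must handle the composite null $\mu_a\le\mu-\varepsilon$ uniformly and bound the probability of ever crossing the threshold, not merely at a fixed number of pulls. Both difficulties are resolved by the same device---establishing that the likelihood-ratio process is a supermartingale for every admissible $\mu_a$ under $\mathcal{H}_0$ and applying a maximal (Ville-type) inequality---so the care lies in verifying the (super)martingale property across the whole null region rather than in any heavy computation.
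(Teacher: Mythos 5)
Your proposal is correct. For part (i) you follow essentially the same route as the paper: both arguments apply the (strong) law of large numbers to the i.i.d.\ increments of the log-likelihood walk, observe that the drift at $\mu_a=\mu$ equals $\mathrm{kl}(\mu,\mu-\varepsilon)>0$, and use monotonicity of the drift in $\mu_a$ to conclude $\Lambda\to+\infty$ almost surely under $\mathcal{H}_1$ (you also correctly normalize the sign convention, since the ratio in \eqref{eq:log-likelihood-def} is written inverted relative to the update actually used in Algorithm \ref{alg:relaxed_inspector} and the appendix). For part (ii), however, you take a genuinely different path. The paper follows Wald's classical construction: it decomposes the rejection region into disjoint ``unsafe'' cylinder sets, uses the defining threshold inequality cylinder-by-cylinder to get $P_\mu(Q_U)\geq \frac{1}{\alpha}P_{\mu-\varepsilon}(Q_U)$, combines this with $P_\mu(Q_U)=1$ from part (i) to bound the error at the boundary point $\mu_a=\mu-\varepsilon$, and then handles the rest of the composite null by a separate monotonicity argument --- showing via $k/t\geq \mu-\varepsilon$ on rejecting sequences that the binomial likelihood $f_{\mu_a}$ of each such sequence is increasing in $\mu_a$, so the error probability only decreases for $\mu_a<\mu-\varepsilon$. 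You instead observe that $M_n=e^{\Lambda_n}$ is a nonnegative martingale at $\mu_a=\mu-\varepsilon$ (and a supermartingale for every $\mu_a<\mu-\varepsilon$, since $\mathbb{E}_{\mu_a}\bigl[f_\mu(D)/f_{\mu-\varepsilon}(D)\bigr]$ is affine increasing in $\mu_a$ and equals $1$ at the boundary) and apply Ville's maximal inequality to get $\mathbb{P}\bigl(\sup_n M_n\geq 1/\alpha\bigr)\leq\alpha$ uniformly over the null in a single stroke. Both arguments are change-of-measure arguments at heart --- the cylinder-set computation is precisely an unrolled optional-stopping/maximal inequality --- but yours dispenses with the separate derivative computation for the composite null and is shorter and arguably more transparent, while the paper's elementary cylinder machinery stays self-contained in Wald's original style and its explicit sufficient-statistic manipulations are reused elsewhere (e.g., in the proposition connecting the two inspectors). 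Both yield the identical bound, so your proof is a valid, and in the composite-null step cleaner, alternative.
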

\begin{proof} 
\ifthenelse{\boolean{arxiv}}{The proof is in the Appendix \ref{app:relaxed-lemmas}.}{The proof is in the Appendix B of \cite{self-cite}.}
\end{proof}

\begin{lemma}\label{lemma:sprt_detection_time} For a fixed arm $a$ of parameter $\mu_a$, consider the sequential probability ratio test defined by \eqref{eq:hyp-test}--\eqref{eq:log-likelihood-threshold}, where $\mu$, $\varepsilon$ and $\alpha$ are given. Then, if the alternative $\mathcal{H}_1$ is true, the test is expected to terminate after $T_a$ steps, with: \begin{equation}
    \mathbb{E}[T_a]\leq 1+\frac{\log\left(1/\alpha\right)}{\emph{kl}(\mu,\mu-\varepsilon)}\;,
\end{equation}
where $\emph{kl}(\mu,\mu-\varepsilon)$ is the Kullback-Leibler divergence between Bernoulli distributions:
\begin{equation}
\emph{kl}(\mu,\mu-\varepsilon)=\mu\log\frac{\mu}{\mu-\varepsilon}+\left(1-\mu\right)\log\frac{1-\mu}{1-\mu+\varepsilon}\, .
\label{eq:kl-div}\end{equation}
\end{lemma}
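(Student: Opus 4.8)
The plan is to recognize the SPRT statistic, evaluated along the subsequence of rounds at which arm $a$ is actually pulled, as a random walk with i.i.d. increments, and then invoke Wald's identity. Restrict attention to the pulls of arm $a$ and let $S_n$ denote the value of the log-likelihood ratio $\Lambda_a$ after $n$ such pulls. Because the statistic accumulates additively in the logarithm, each observation $D$ contributes an increment $Z$ equal to $\log\frac{\mu}{\mu-\varepsilon}$ when $D=1$ and $\log\frac{1-\mu}{1-\mu+\varepsilon}$ when $D=0$. Since the damage indicators for a fixed arm are i.i.d. $\mathrm{Bernoulli}(\mu_a)$, the increments $Z_i$ are i.i.d.\ and $S_n=\sum_{i=1}^n Z_i$, with the stopping rule \eqref{eq:log-likelihood-threshold} making $T_a$ the first index $n$ at which $S_n\ge\log(1/\alpha)$.

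The key structural step is to lower bound the drift by the Kullback–Leibler term appearing in the statement. Under the true parameter the mean increment is
\begin{equation*}
\delta:=\mathbb{E}_{\mu_a}[Z]=\mu_a\log\tfrac{\mu}{\mu-\varepsilon}+(1-\mu_a)\log\tfrac{1-\mu}{1-\mu+\varepsilon},
\end{equation*}
which is affine in $\mu_a$ with positive slope $\log\frac{\mu}{\mu-\varepsilon}-\log\frac{1-\mu}{1-\mu+\varepsilon}>0$. Hence $\delta$ is increasing in $\mu_a$, and because $\mathcal{H}_1$ holds we have $\mu_a>\mu$, so $\delta\ge\mathbb{E}_\mu[Z]=\mathrm{kl}(\mu,\mu-\varepsilon)>0$, recovering exactly the expression \eqref{eq:kl-div}. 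This strict positivity of the drift, together with the boundedness of the increments, is also what guarantees that $T_a$ is integrable (so that Wald's identity applies); almost-sure finiteness is already furnished by Lemma \ref{lemma:unsafe_sequences_escape}(i).

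With $\mathbb{E}[T_a]<\infty$ and integrable i.i.d.\ increments, Wald's identity yields $\mathbb{E}[S_{T_a}]=\delta\,\mathbb{E}[T_a]$, whence $\mathbb{E}[T_a]=\mathbb{E}[S_{T_a}]/\delta\le\mathbb{E}[S_{T_a}]/\mathrm{kl}(\mu,\mu-\varepsilon)$. It then remains to bound the stopped value $\mathbb{E}[S_{T_a}]$. Writing $S_{T_a}=S_{T_a-1}+Z_{T_a}$, the walk has not yet stopped at index $T_a-1$, so $S_{T_a-1}<\log(1/\alpha)$; moreover, since the upper boundary can be crossed from below only by a positive increment, the final step is deterministically $Z_{T_a}=\log\frac{\mu}{\mu-\varepsilon}$. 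Charging this single crossing step as one additional pull of average progress $\mathrm{kl}(\mu,\mu-\varepsilon)$ gives $\mathbb{E}[S_{T_a}]\le\log(1/\alpha)+\mathrm{kl}(\mu,\mu-\varepsilon)$, and dividing by $\mathrm{kl}(\mu,\mu-\varepsilon)$ produces the claimed $\mathbb{E}[T_a]\le 1+\frac{\log(1/\alpha)}{\mathrm{kl}(\mu,\mu-\varepsilon)}$.

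I expect the overshoot to be the main obstacle. The raw magnitude of the last increment, $\log\frac{\mu}{\mu-\varepsilon}$, strictly exceeds $\mathrm{kl}(\mu,\mu-\varepsilon)$, so one cannot simply bound $S_{T_a}$ by $\log(1/\alpha)$ plus the maximal increment and still land on the clean ``$+1$'' constant; the argument must instead attribute the boundary crossing to a single step's worth of expected drift. Making this accounting rigorous (rather than invoking the standard Wald approximation that neglects overshoot) is the delicate point, and is where the bulk of the care in the proof will be needed.
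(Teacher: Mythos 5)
Your route is exactly the paper's route: restrict attention to the pulls of arm $a$, view $\Lambda_a$ as a random walk with i.i.d.\ increments, lower-bound the drift $R_a=\mathbb{E}_{\mu_a}\left[\log\frac{f_\mu(d)}{f_{\mu-\varepsilon}(d)}\right]$ by $\mathrm{kl}(\mu,\mu-\varepsilon)$ using affinity and monotonicity in $\mu_a$ (this is the paper's \eqref{eq:DKL}), invoke Wald's identity to get $\mathbb{E}[\Lambda_{T_a}]=R_a\,\mathbb{E}[T_a]$, and then bound the stopped value by splitting off the last increment with $\Lambda_{T_a-1}<\log(1/\alpha)$. Up to that point you are, if anything, more careful than the paper: you explicitly address integrability of $T_a$ before applying Wald, which the paper's proof (Appendix \ref{app:relaxed-lemmas}) does not.

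However, the step you flag as delicate and leave unproved is a genuine gap, and it is precisely the step the paper's own proof asserts without justification. In \eqref{eq:nbounded_2} the paper writes $\mathbb{E}_{\mu_a}\left[\Lambda_{T-1}+\log\frac{f_\mu(d_T)}{f_{\mu-\varepsilon}(d_T)}\right]=\mathbb{E}_{\mu_a}[\Lambda_{T-1}]+R_a$, i.e., it charges the terminal increment its \emph{unconditional} mean $R_a$. But, as you yourself observe, conditioning on stopping at $T_a$ forces $d_{T_a}=1$ (the threshold can only be crossed from below by an up-step), so the terminal increment is deterministically $\log\frac{\mu}{\mu-\varepsilon}$, which strictly exceeds $R_a\geq\mathrm{kl}(\mu,\mu-\varepsilon)$ since $R_a$ averages in the negative $D=0$ term. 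Your ``charging'' heuristic therefore cannot be made rigorous as stated; the honest conclusion of this argument is $\mathbb{E}[T_a]\leq\left(\log(1/\alpha)+\log\frac{\mu}{\mu-\varepsilon}\right)/\mathrm{kl}(\mu,\mu-\varepsilon)$, where the additive constant $\log\frac{\mu}{\mu-\varepsilon}/\mathrm{kl}(\mu,\mu-\varepsilon)>1$ replaces the stated ``$+1$''. The ``$+1$'' version can in fact fail outright in the large-$\alpha$, small-gap regime: take $\mu=1/2$, $\varepsilon=1/4$ (up-step $\log 2\approx 0.693$, down-step $\approx -0.406$, $\mathrm{kl}\approx 0.144$), $\log(1/\alpha)=0.1$, and $\mu_a$ slightly above $\mu$; then $\mathbb{E}[\Lambda_{T_a}]=\mathbb{E}[\Lambda_{T_a-1}]+\log 2\gtrsim 0.39$, so Wald gives $\mathbb{E}[T_a]\gtrsim 2.7$, exceeding the claimed bound $1+0.1/0.144\approx 1.7$. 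In short: your proposal reproduces the paper's argument, correctly locates its soft spot, and that soft spot cannot be patched by accounting alone---the overshoot genuinely costs $\log\frac{\mu}{\mu-\varepsilon}$, not one step's drift, and the lemma's constant should be weakened accordingly (the qualitative finite-detection-time conclusions downstream are unaffected).
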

\begin{proof} 
\ifthenelse{\boolean{arxiv}}{The proof is in the Appendix \ref{app:relaxed-lemmas}.}{The proof is in the Appendix B of \cite{self-cite}.}
\end{proof}

\begin{remark}
The preceding lemma elucidates the need for the slack parameter $\varepsilon$: separating the two limiting distributions enables termination of the test under $\mathcal{H}_1$ in finite time. It also unveils two fundamental trade-offs. Firstly, if the distance between the limiting distributions increases (by enlarging $\varepsilon$), then the test detects unsafe machines faster. However, it becomes inconclusive over a larger proportion of the machines, since nothing can be assured in the region $\mu_a\in(\mu-\varepsilon,\mu)$. Secondly, $\alpha$ can be increased in order to detect unsafe machines faster, though this comes at the cost of declaring $\mathcal{H}_1$ for a larger proportion of the safe machines.
\end{remark}

We build Algorithm \ref{alg:relaxed_inspector} based on this SPRT, and state its main  properties in the following theorem.


\begin{algorithm}
\KwIn{Number of arms $K$, strategy $\psi$,\\ requirement $\mu$, slack $\varepsilon$, tolerance $\alpha$.}
    \tcc{Init. safe set and ratios}
    
$\mathcal{A}_0=\left\{1,\ldots,K\right\}$,\ $\Lambda_a=0~~\forall a=1,\ldots,K$\\
\For{$t=1,2\ldots$}{
Pick arm $A_t \sim \psi(\mathcal{A}_t)$\\
Observe damage $D_t$\\
\tcc{Update log-likelihood ratio}
$\Lambda_{A_t}\leftarrow\Lambda_{A_t}+\log\frac{f_\mu(D_t)}{f_{\mu-\varepsilon}(D_t)}$\\

\If{$\Lambda_{A_t}\geq \log(1/\alpha)$}{\tcc{SPRT ends, trim unsafe arm}$\mathcal{A}_t \leftarrow \mathcal{A}_{t-1}\setminus \{A_t\}$
}

}
\caption{Relaxed Inspector}
\label{alg:relaxed_inspector}
\end{algorithm}

\begin{theorem}\label{thm:mab-relaxed-finite-time}
Under Algorithm \ref{alg:relaxed_inspector}, for every strategy $psi$, the following inequalities hold with probability 1 for all $t$:
\begin{align}
\mathbb{E}[C_{\varepsilon,t}]&\geq 1-\alpha \label{eq:rho_bound}\\
\mathbb{E}[E_t]&\leq M\left(1+\frac{\log\left(1/\alpha\right)}{\emph{kl}(\mu,\mu-\varepsilon)}\right)\label{eq:expected_handicap_bound}
\end{align}
where $\emph{kl}(\mu,\mu-\varepsilon)$ is the Kullback-Leibler divergence between Bernoulli distributions \eqref{eq:kl-div}.
\end{theorem}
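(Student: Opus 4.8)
The plan is to prove the two inequalities independently, in each case reducing the multi-arm algorithm to the single-arm SPRT already analyzed in Lemmas \ref{lemma:unsafe_sequences_escape} and \ref{lemma:sprt_detection_time}. The observation that makes this reduction legitimate is that in Algorithm \ref{alg:relaxed_inspector} the statistic $\Lambda_a$ is updated only on rounds with $A_t=a$ and depends solely on that arm's own stream of damage indicators; consequently the number of pulls of arm $a$ needed for its SPRT to cross the threshold, and the probability that it ever does, are intrinsic to arm $a$ and are unaffected by how $\psi$ interleaves the arms. This is precisely why the bounds can be claimed for \emph{every} strategy $\psi$.

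For the exposure bound \eqref{eq:expected_handicap_bound}, I would first invoke Lemma \ref{lemma:handicap_tnt} to write $\mathbb{E}[E_t]=\sum_{a=1}^M \mathbb{E}[N_a(t)]$, reducing the problem to bounding the expected number of pulls of each unsafe arm. For $a\in\{1,\dots,M\}$ the arm is $\mu$-unsafe, so $\mathcal{H}_1$ holds and Lemma \ref{lemma:sprt_detection_time} applies. Let $T_a$ denote the random number of pulls of arm $a$ at which its SPRT terminates (finite a.s.\ by Lemma \ref{lemma:unsafe_sequences_escape}(i)). Because the algorithm removes arm $a$ from $\mathcal{A}_t$ the moment its threshold is reached, the arm is never pulled more than $T_a$ times, so $N_a(t)\leq T_a$ for every $t$. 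Taking expectations and applying the bound on $\mathbb{E}[T_a]$ from Lemma \ref{lemma:sprt_detection_time} gives $\mathbb{E}[N_a(t)]\leq 1+\log(1/\alpha)/\mathrm{kl}(\mu,\mu-\varepsilon)$, and summing over the $M$ unsafe arms yields \eqref{eq:expected_handicap_bound}.

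For the conservation-ratio bound \eqref{eq:rho_bound}, I would expand the numerator of $C_{\varepsilon,t}$ as a sum of indicators, $|\mathcal{A}_t\cap\mathcal{A}_\varepsilon^\star|=\sum_{a\in\mathcal{A}_\varepsilon^\star}\mathds{1}(a\in\mathcal{A}_t)$, and take expectations termwise. Every $a\in\mathcal{A}_\varepsilon^\star$ satisfies $\mu_a\leq\mu-\varepsilon$, i.e.\ $\mathcal{H}_0$ is true, so Lemma \ref{lemma:unsafe_sequences_escape}(ii) guarantees that with probability at least $1-\alpha$ the SPRT for arm $a$ never terminates. Since the event that arm $a$ is still in $\mathcal{A}_t$ contains the event that its SPRT has never terminated, $\prob(a\in\mathcal{A}_t)\geq 1-\alpha$ for each such arm and every $t$. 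Dividing the resulting lower bound $(1-\alpha)|\mathcal{A}_\varepsilon^\star|$ by $|\mathcal{A}_\varepsilon^\star|$ gives $\mathbb{E}[C_{\varepsilon,t}]\geq 1-\alpha$.

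The hard part will not be these computations, which are just linearity of expectation combined with the monotone relation $N_a(t)\leq T_a$; rather it is the bookkeeping needed to make the reduction to the single-arm lemmas rigorous. Specifically, one must argue that, conditioned on the rounds in which arm $a$ is pulled, the observed sub-sequence of damage indicators for arm $a$ is an i.i.d.\ $\mathrm{Bernoulli}(\mu_a)$ stream identical in law to the single-machine setting of Lemmas \ref{lemma:unsafe_sequences_escape}--\ref{lemma:sprt_detection_time}, so that both the termination-time distribution and the error probability transfer verbatim even when $\psi$ is history-dependent and the pulls of distinct arms are arbitrarily interleaved.
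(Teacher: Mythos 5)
Your proposal is correct and takes essentially the same route as the paper, whose proof of Theorem \ref{thm:mab-relaxed-finite-time} simply cites Lemmas \ref{lemma:unsafe_sequences_escape} and \ref{lemma:sprt_detection_time}: you obtain \eqref{eq:rho_bound} by applying Lemma \ref{lemma:unsafe_sequences_escape}(ii) arm-by-arm to the indicator expansion of $|\mathcal{A}_t\cap\mathcal{A}_\varepsilon^\star|$, and \eqref{eq:expected_handicap_bound} by combining Lemma \ref{lemma:handicap_tnt} with the pointwise bound $N_a(t)\leq T_a$ and Lemma \ref{lemma:sprt_detection_time}. The interleaving/i.i.d.\ bookkeeping you flag at the end is precisely the (implicit) reduction the paper relies on, so your write-up is a correct, more explicit version of the paper's one-line argument.
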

\begin{proof}
The proof follows from Lemma \ref{lemma:unsafe_sequences_escape} and Lemma \ref{lemma:sprt_detection_time}.
\end{proof}
In the same spirit as for the flawless setting, we now state that under a uniform strategy Algorithm \ref{alg:relaxed_inspector} finds all unsafe machines in expected finite time.
\begin{theorem}\label{thm:mab-relaxed-finite-stop-time}
Consider a $\lambda$-soft strategy $\psi$. Let $T$ be the time it takes for Algorithm \ref{alg:relaxed_inspector} to detect all the unsafe machines under $\psi$. Then:
$$\mathbb{E}[T]\leq \frac{{M}(K-M+1)}{\lambda}\left(1+\frac{\log
(1/\alpha)}{\emph{kl}(\mu,\mu-\varepsilon)}\right)\,.$$
\end{theorem}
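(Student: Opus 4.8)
The plan is to decouple the analysis into \emph{how many pulls of unsafe arms are required} and \emph{how many rounds it takes to deliver those pulls} under uniform sampling, and then glue the two together with an optional-stopping argument. First, for each unsafe arm $a\in\{1,\dots,M\}$ let $T_a$ denote the number of times $a$ is pulled before its SPRT crosses the threshold \eqref{eq:log-likelihood-threshold}. Since the test decision depends only on arm $a$'s own i.i.d.\ outcomes (through its sufficient statistics), $T_a$ is unaffected by the interleaving induced by $\psi_{\texttt{unif}}$, so Lemma \ref{lemma:sprt_detection_time} gives $\mathbb{E}[T_a]\le 1+\log(1/\alpha)/\mathrm{kl}(\mu,\mu-\varepsilon)$, and Lemma \ref{lemma:unsafe_sequences_escape}(i) guarantees $T_a<\infty$ a.s. Letting $P=\sum_{a=1}^M T_a$ be the total number of pulls of unsafe arms over the whole run, linearity yields $\mathbb{E}[P]\le M\big(1+\log(1/\alpha)/\mathrm{kl}(\mu,\mu-\varepsilon)\big)$, which is exactly the exposure bound \eqref{eq:expected_handicap_bound} of Theorem \ref{thm:mab-relaxed-finite-time}.

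Second, I would lower-bound the per-round probability of ``making progress'', i.e.\ of pulling a still-undetected unsafe arm. At the start of any round in which $U\ge 1$ unsafe arms remain undetected, the candidate set consists of those $U$ arms together with at most the $K-M$ safe arms that have not (possibly erroneously) been trimmed, so $|\mathcal{A}_t|\le U+(K-M)$. Under $\psi_{\texttt{unif}}$ the conditional probability of pulling an undetected unsafe arm is therefore $U/|\mathcal{A}_t|\ge U/(U+K-M)\ge 1/(K-M+1)$, the bound being monotone in $U$ and hence minimized at $U=1$. Crucially this lower bound $p:=1/(K-M+1)$ holds on every round up to the termination time $T$, irrespective of how the safe arms happen to be managed.

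The final step combines the two. Writing $X_t$ for the indicator that round $t$ pulls an undetected unsafe arm and $P_t=\sum_{s\le t}X_s$, the process $P_t-pt$ is a submartingale up to $T$ because $\mathbb{E}[X_t\mid\mathcal{F}_{t-1}]\ge p$ on $\{t\le T\}$. Applying optional stopping to the stopped process at $T\wedge n$ and letting $n\to\infty$ (monotone convergence for $\mathbb{E}[T\wedge n]$, dominated convergence for $P_{T\wedge n}\le P$ using $\mathbb{E}[P]<\infty$) gives $\mathbb{E}[P]\ge p\,\mathbb{E}[T]$, i.e.\ $\mathbb{E}[T]\le (K-M+1)\,\mathbb{E}[P]$. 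Substituting the bound on $\mathbb{E}[P]$ yields precisely $\mathbb{E}[T]\le M(K-M+1)\big(1+\log(1/\alpha)/\mathrm{kl}(\mu,\mu-\varepsilon)\big)$.

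The main obstacle I anticipate is this last gluing step: the number of required unsafe pulls $P$ and the round-by-round success indicators $X_t$ are statistically dependent (the same outcome streams drive both), so one cannot simply invoke independence or a vanilla Wald identity. The clean way around this is the submartingale/optional-stopping formulation above, where the remaining effort is the integrability and truncation bookkeeping needed to pass to the limit. The other point requiring care is identifying the sharp worst-case probability $1/(K-M+1)$ rather than the cruder $1/K$, since the latter would only deliver the looser bound $MK\,L$ with $L:=1+\log(1/\alpha)/\mathrm{kl}(\mu,\mu-\varepsilon)$.
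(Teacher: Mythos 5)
Your proposal is correct and follows essentially the same route as the paper's own proof: bound the expected exposure $\mathbb{E}[E_T]=\sum_{a=1}^M\mathbb{E}[T_a]\le M\left(1+\frac{\log(1/\alpha)}{\mathrm{kl}(\mu,\mu-\varepsilon)}\right)$ via Lemma \ref{lemma:sprt_detection_time}, lower-bound the per-round probability of pulling a still-undetected unsafe arm by $1/(K-M+1)$, and convert this into $\mathbb{E}[T]\le (K-M+1)\,\mathbb{E}[E_T]$ by a Wald-type stopping-time argument. Your submartingale/optional-stopping formulation is just a more carefully justified rendering of the step the paper dispatches by citing Wald's identity, so the two proofs coincide in substance.
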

\begin{proof}
The proof can be found in the Appendix \ref{app:mab-relaxed-finite-time}.
\end{proof}

We end this section by \emph{uniting} the flawless and relaxed settings, arguing that Algorithm \ref{alg:flawless-inspector} can be seen as a particular case of the SPRT used in Algorithm \ref{alg:relaxed_inspector}.

\begin{proposition}
Given fixed $\mu\in(0,1)$ and $\varepsilon=\beta\mu$. When $\beta\rightarrow1^-$, Algorithm \ref{alg:relaxed_inspector} with any  $\alpha>0$ reduces to Algorithm \ref{alg:flawless-inspector}. 
\end{proposition}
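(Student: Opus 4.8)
The plan is to track the per-pull increment of the log-likelihood ratio $\Lambda_{A_t}$ driving the SPRT in Algorithm~\ref{alg:relaxed_inspector}, and to show that, as $\varepsilon=\beta\mu\to\mu^-$, the stopping rule \eqref{eq:log-likelihood-threshold} collapses onto the trimming rule of Algorithm~\ref{alg:flawless-inspector}, namely ``discard an arm at the first observation $D_t=1$''. First I would substitute $\varepsilon=\beta\mu$ into the two possible values of the update $\Lambda_{A_t}\leftarrow\Lambda_{A_t}+\log\frac{f_\mu(D_t)}{f_{\mu-\varepsilon}(D_t)}$. A damage event contributes $\log\frac{\mu}{\mu-\beta\mu}=-\log(1-\beta)$, whereas a safe event contributes $\log\frac{1-\mu}{1-\mu+\beta\mu}$.

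Next I would pass to the limit $\beta\to1^-$ in each increment separately. The damage increment satisfies $-\log(1-\beta)\to+\infty$, while the safe increment converges to the finite negative value $\log(1-\mu)$; indeed, for every $\beta\in(0,1)$ the safe increment lies in the bounded interval $(\log(1-\mu),0)$. Thus each damage event makes the statistic jump by an arbitrarily large positive amount, while each safe event subtracts only a bounded quantity, at most $|\log(1-\mu)|$ in magnitude.

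With these two facts I would analyze the stopping time \eqref{eq:log-likelihood-threshold} pathwise, for a fixed $\alpha\in(0,1)$ so that the threshold $\log(1/\alpha)$ is a fixed positive constant. Before the first damage event, the running value of $\Lambda_{A_t}$ is a sum of negative safe increments and therefore stays strictly below the positive threshold, so the test cannot terminate. At the first damage event — say the $m$-th pull of arm $a$, preceded by $m-1$ safe pulls — the statistic equals $(m-1)\log\frac{1-\mu}{1-(1-\beta)\mu}-\log(1-\beta)$, whose first term is bounded while the second diverges; hence for $\beta$ close enough to $1$ the threshold is crossed exactly at this pull. Consequently the limiting decision rule removes an arm from $\mathcal{A}_t$ at the first round in which it returns $D_t=1$, which is precisely the rule of Algorithm~\ref{alg:flawless-inspector}.

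The main obstacle I anticipate is the interchange of the pathwise limit with the (history-dependent) number $m-1$ of safe pulls preceding the first damage: the value of $\beta$ needed to trigger termination at that first damage event depends on the realized sample path. I would resolve this by arguing pathwise — on each realization $m$ is a fixed finite integer, so the accumulated safe contribution is bounded below by $(m-1)\log(1-\mu)$, and the diverging jump $-\log(1-\beta)$ eventually dominates it — so that the equivalence of the two stopping rules holds almost surely in the limit $\beta\to1^-$.
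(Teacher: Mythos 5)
Your proof is correct and takes essentially the same route as the paper's: both substitute $\varepsilon=\beta\mu$ into the log-likelihood ratio, note that each damage event contributes the diverging term $-\log(1-\beta)\to+\infty$ while safe pulls contribute only bounded negative amounts, and conclude that in the limit $\beta\to1^-$ the SPRT of Algorithm~\ref{alg:relaxed_inspector} trims an arm exactly at the first observation $D_t=1$, which is the rule of Algorithm~\ref{alg:flawless-inspector}. Your explicit verification that the test cannot terminate before the first damage (since $\Lambda$ is nonpositive while the threshold $\log(1/\alpha)$ is positive for $\alpha\in(0,1)$) and your pathwise treatment of the history-dependent number of safe pulls are minor refinements of details the paper leaves implicit.
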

\begin{proof}
The flawless setting only allows for perfectly safe machines, discarding any arm at first sign of damage $D_t=1$. We will show that this coincides with a Sequential Probability Ratio Test that compares $\mathcal{H}_0:\mu_a\leq 0$~vs.~ $\mathcal{H}_1:~\mu_a>\mu$. This will hold for any $\mu\in(0,1)$ and for all $\alpha>0$.

For a fixed $\mu$, consider the test defined in \eqref{eq:hyp-test} with $\varepsilon=\beta\mu$. As $\beta\rightarrow 1^-$, the null hypothesis  $\mathcal{H}_0$ becomes $\mu_a=0$. We show that the log-likelihood ratio $\Lambda_a(t)$ goes to infinity at first sign of damage, thus declaring $\mathcal{H}_1$ and terminating the SPRT. A sufficient statistic for computing the log-likelihood ratio $\Lambda_a(t)$ is counting the $k$ outcomes of $D_\tau=1$ in a total of $t$ pulls. Then $\Lambda_a(t)=k\log\frac{\mu}{\mu-\varepsilon}+(t-k)\log\frac{1-\mu}{1-\mu+\varepsilon}$. Writing $\varepsilon=\beta\mu$,  $\Lambda_a(T)=t\log\frac{1-\mu}{1-\mu(1-\beta)}+k\log\frac{1-(1-\beta)\mu}{(1-\beta)(1-\mu)}\underset{\beta\to 1^-}{\longrightarrow}\infty~ \forall k>0$.
Then the SPRT decides on the alternative $\mathcal{H}_1$ at first sign of damage, no matter how small $\alpha$ is.
\end{proof}
\section{Assured Reinforcement Learning}\label{sec:assured_rl}
\hl{This section builds on the insights given by the MABs to detect and discard unsafe policies in the context of RL. Once more, we focus  on two different settings. First the \emph{flawless setting}, which seeks policies whose probability of encountering damage at any time $(D_t=1)$ is zero. Then, the \emph{relaxed setting}, which allows for a limited number of unsafe events in a single trajectory.}

\subsection{Problem formulation and outline}
Consider a  Markov Decision Process $\mathcal{M}$ with finite state space $\mathcal{S}$,  finite action space $\mathcal{A}$,  a  reward set $\mathcal{R}$, and a damage indicator $D_t\in\{0,1\}$.  A transition kernel $p$ specifies the conditional transition probability $p(s',r,d \mid s,a) :=P(S_{t+1}=s',R_{t+1}=r,D_{t+1}=d \mid S_t=s,A_t=a)$, from state $s\in\mathcal{S}$ and under action $a\in\mathcal{A}$, to state $s'\in\mathcal{S}$ resulting in a reward $r\in\mathcal{R}$ and a damage indicator $d$. \hl{We define the \emph{assured reinforcement learning problem} as follows:}
\begin{subequations}
\begin{align}
    \hl{V^*(s):=}&\hl{\max_\pi \mathbb{E}_{\pi}\left[\sum_{t=0}^\infty\gamma^t R_{t+1} ~\big|~ S_0=s\right]}\label{eq:assured-rl1}\\
    \hl{\text{s.t.:}}&\quad \hl{\mathbb{P}_\pi\left(\sum_{t=0}^\infty D_{t+1}\leq M\mid S_0=s\right)=1}
    \label{eq:assured-rl2}
\end{align}
\label{eq:assured-rl}
\end{subequations}%
\hl{\noindent 
where $0<\gamma<1$ is a discount factor, $M\in\mathbb{N}$, and $\mathbb{E}_\pi$ and $\mathbb{P}_\pi$ denote the expectation and probability with respect to the distribution induced by $\pi$. We call this an assured RL problem because \eqref{eq:assured-rl2} is a constraint that must be satisfied w.p.1. That constraint reads as: ``under policy $\pi$, starting from $s$, no trajectory can accumulate more than $M$ units of damage''. We call this quantity $M$ an \emph{allowable budget}.}

\hl{Analyzing this novel constraint calls for some work, and we subdivide this in two scenarios. The case $M=0$ (the flawless setting) will be considered first, and we give great breadth to its analysis.  We show how to obtain feasible policies in this case by developing a barrier function (Section \ref{sec:decomposition}) that encodes for constraint satisfaction. This barrier can be learned independently of the reward process (Theorem \ref{thm:decomposition}), and we show that the optimal barrier function can be attained  in expected finite time under a Barrier-learner algorithm (sections \ref{ssec:3.2} and \ref{sec:sample-compl}). Since the optimal barrier characterizes the set of feasible policies (Remark \ref{rmk:B_opt}), all feasible policies are found in finite time (similarly as in the previous section). We then present an \emph{assured} version of the Q-learning algorithm that makes use of this barrier. Finally, in   Subsection \ref{sec:relaxed-rl} we  address the \emph{relaxed setting} ($M>0$), showing that it can be reduced to the flawless setting in a suitably augmented MDP.}


\subsection{Value function decomposition}\label{sec:decomposition}
\hl{As argued previously, we will focus firstly on the flawless setting ($M=0$ in \eqref{eq:assured-rl}), which amounts to:}
\begin{subequations}
\begin{align}
    \hl{V^*(s):=}&\hl{\max_\pi \mathbb{E}_{\pi}\left[\sum_{t=0}^\infty\gamma^t R_{t+1} ~\big|~ S_0=s\right]}\label{eq:flawless-assured-rl1}\\
    \hl{\text{s.t.:}}&\quad \hl{\mathbb{P}_\pi\left(\sum_{t=0}^\infty D_{t+1}\leq 0\mid S_0=s\right)=1}
    \label{eq:flawless-assured-rl2}
\end{align}
\end{subequations}
\hl{Notice that, since $D_t$ only takes values $0$ or $1$, \eqref{eq:flawless-assured-rl2} can be equivalently put in either of the following two ways:}
\begin{align}
    \eqref{eq:flawless-assured-rl2}
    &\hl{\iff\mathbb E_{\pi}\left[\sum_{t=0}^\infty D_{t+1}~\big|~S_0=s\right]\leq 0}\label{eq:cumulative_constrained_problem}\\
    &\hl{\iff\mathbb{P}_\pi\left(D_{t+1}=0~\big|~S_0=s\right)=1\quad\forall t. }\label{eq:a.s.constraint}
\end{align}
The representation \eqref{eq:cumulative_constrained_problem} is a cumulative constraint in expectation, which could be put in Lagrangian form (see e.g., \cite{2019.Paternainyfa}\cite{2020.Ding}) in order to solve a primal-dual problem. We take an alternative approach that will lead to finite time detection, and resort to \eqref{eq:a.s.constraint} instead. Our goal then is to solve:
\begin{subequations}
\begin{align}
    V^*(s)&:= \max_\pi \mathbb{E}_{\pi}\left[\sum_{t=0}^\infty\gamma^t R_{t+1} ~\big|~ S_0=s\right]\label{eq:CRL3}\\
    \text{s.t.:}&\quad \mathbb{P}_\pi\left(D_{t+1}=0~\big|~S_0=s\right)=1~~\forall t\label{eq:general_formulation_constrained}
\end{align}
\label{eq:general-form}
\end{subequations}
\hl{With \eqref{eq:general_formulation_constrained} as constraint, there is a natural way to extend the definition of \emph{exposure} for MDPs.
\begin{definition}[Exposure for MDPs]\label{def:exposure-mdps1}
\hl{Consider an algorithm that creates a sequence of state-action pairs $(s_i,a_i)_{i=1}^t$. The exposure at time $t$ is:}
$$
\hl{E_t\!:=\!\sum_{i=1}^t\!\mathds{1}\!\left\{\!\min_\pi\mathbb{P}_{\pi}\!\left(\bigcup_{\tau=0}^\infty\{D_{\tau+1}\!=\!1\!\mid\! S_0\!=\!s_i,A_0\!=\!a_i\}\right)\!>\!0\right\}}.
$$
\end{definition}
\hl{Intuitively, the exposure at any step $i$ is equal to one if conditioned on starting at $(s_i,a_i)$ the constraint  \eqref{eq:general_formulation_constrained} does not hold for any policy. As such, it indicates whether a state-action pair $(s_i,a_i)$ is \emph{unsafe}.}~\hl{As in the bandits case, we will show that this quantity can be bounded in expected value.}\\
\hl{Although the expression above may seem cumbersome, we will show that it can be equivalently put in terms of a hard-barrier function that encodes safety, which we develop next.}}

With the formulation of \eqref{eq:general-form}, let us  define the value function $V^\pi$ for a specific policy $\pi$, in which the constraints are embedded inside the expectation: 
\begin{align}
    V^\pi(s) &:= \mathbb{E}_{\pi}\left[\sum_{t=0}^\infty\big(\gamma^t R_{t+1}+\mathbb I\left [D_{t+1}\right]\big) ~\big|~ S_0=s\right]\label{eq:v_pi}
\end{align}
where the hard barrier index function   $\mathbb I\left [\cdot\right]$ takes the form: 
\begin{equation}\mathbb I\left [D_{t+1}\right]=\log\left(1-D_{t+1}\right)=\begin{cases}0&\text{if } D_{t+1}=0\\-\infty&\text{if } D_{t+1}=1\end{cases}\label{eq:barrier_indicator}\end{equation}
so that it is null when the transition is safe, and takes the value $-\infty$ in the event of an unsafe transition. {Being that  \eqref{eq:barrier_indicator} is  unbounded,  expectations are defined in the sense of the Lebesgue integral for functions  in the extended real line~\cite{1999.Folland}.}

The proposed value function definition \eqref{eq:v_pi} will prove useful in two senses: firstly, we will show that maximizing \eqref{eq:v_pi} is the same as \eqref{eq:CRL3}--\eqref{eq:general_formulation_constrained}. 
Secondly, the additional term in \eqref{eq:v_pi} will allow for a barrier-based decomposition of the value function, which will aid in the learning of constraints.
%
%
%
%

\begin{lemma}[Equivalence]
\label{lemma:equivalence}
Problem \eqref{eq:general-form} is equivalent to the maximization of \eqref{eq:v_pi}, that is
\begin{align}
    \max_\pi \mathbb{E}_{\pi}\left[\sum_{t=0}^\infty\big(\gamma^t R_{t+1}+\mathbb I\left[D_{t+1}\right]\big) ~\big|~ S_0=s\right]\label{eq:CRL2}
\end{align}
\begin{proof}
If a policy $\pi_0$ is unfeasible for Problem \eqref{eq:general-form}, then $\exists t : P\left(D_{t+1}=1\right)>0$. This non-zero probability renders the expected value in \eqref{eq:CRL2} to $-\infty$ for $\pi_0$.
Conversely, if a policy $\pi_1$ attains a finite objective for \eqref{eq:CRL2}, then it must necessarily hold that $D_{t+1}=0$ almost surely $\forall t$, and hence $\pi_1$ is feasible for \eqref{eq:general-form}. \hl{Therefore the feasible set of \eqref{eq:general-form} coincides with the set of policies that obtain a finite value for \eqref{eq:CRL2}.} Lastly, any policy in any of these sets must satisfy $\log\left(1-D_{t+1}\right)=0 ~\forall t$, almost surely, in which case the function being maximized is the same. Then the optimal sets of the two problems coincide. 
\end{proof}
\end{lemma}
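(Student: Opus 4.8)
The plan is to establish the equivalence by a two-part argument: first showing that the two problems share the same feasible set in an effective sense, i.e., that policies infeasible for \eqref{eq:CRL3}--\eqref{eq:general_formulation_constrained} are automatically excluded from the maximization of \eqref{eq:CRL2}; and then showing that on the common feasible set the two objectives agree. I would not need to appeal to any Lagrangian machinery — the whole argument rests on the $\{0,-\infty\}$-valued structure of the barrier index \eqref{eq:barrier_indicator}.

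First I would show that any policy $\pi_0$ that is infeasible for \eqref{eq:CRL3}--\eqref{eq:general_formulation_constrained} attains objective value $-\infty$ in \eqref{eq:CRL2}. Infeasibility means there is some $t$ with $\prob(D_{t+1}=1)>0$. On this positive-probability event the barrier term $\mathbb{I}[D_{t+1}]=\log(1-D_{t+1})$ equals $-\infty$, by \eqref{eq:barrier_indicator}. The delicate point is to argue that the expectation is genuinely $-\infty$ rather than an indeterminate $\infty-\infty$ form; here I would invoke the extended-real Lebesgue-integral convention flagged right after \eqref{eq:barrier_indicator}. Since $\mathbb{I}[\cdot]\le 0$ always, the only positive contribution to the integrand comes from the discounted reward $\sum_{t}\gamma^t R_{t+1}$, which is bounded above (bounded rewards together with $\gamma<1$), so its positive part has finite integral; meanwhile the negative part integrates to $+\infty$, and hence the total expectation is unambiguously $-\infty$. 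Consequently $\pi_0$ cannot maximize \eqref{eq:CRL2}, provided at least one feasible policy exists and thus yields a finite value.

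Second, I would show that the objectives coincide on feasible policies. If $\pi_1$ is feasible, then $D_{t+1}=0$ a.s. for every $t$, so $\mathbb{I}[D_{t+1}]=\log(1-0)=0$ a.s. for all $t$; substituting into \eqref{eq:v_pi} collapses the barrier terms and leaves exactly $\mathbb{E}_{\pi_1}\left[\sum_{t}\gamma^t R_{t+1}\right]$, the objective of \eqref{eq:CRL1}. Conversely, any policy with finite value in \eqref{eq:CRL2} must, by the first step, satisfy $D_{t+1}=0$ a.s. for all $t$, hence be feasible. Therefore the feasible sets coincide, the two objectives are identical on them, and the sets of maximizers — together with the optimal values $V^*$ — agree.

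The main obstacle is the careful handling of the $-\infty$-valued barrier inside the expectation: one must justify that the Lebesgue integral of the extended-real-valued integrand is well defined and equals $-\infty$, not an indeterminate form, which hinges precisely on the positive part being integrable. This is exactly where the boundedness of rewards and the discount $\gamma<1$ enter, and where the footnote's appeal to integration on the extended real line does the work. Everything else is routine substitution.
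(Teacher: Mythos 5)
Your proof is correct and follows essentially the same route as the paper's: infeasible policies are driven to $-\infty$ by the barrier term, feasibility for \eqref{eq:CRL2} forces $D_{t+1}=0$ a.s., and the objectives coincide on the common feasible set. Your added care in splitting the integrand into positive and negative parts to rule out an $\infty-\infty$ indeterminacy (using bounded rewards and $\gamma<1$) makes rigorous exactly what the paper delegates to its footnote on extended-real Lebesgue integration, but it is the same argument.
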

While solving \eqref{eq:general-form} is of our utmost interest, we have just shown that, to this end, we can solve \eqref{eq:CRL2} instead. In what follows we will take this one step further, and show that \eqref{eq:v_pi} admits a \textit{barrier-based decomposition} and can be cast as the sum of two value functions: one that checks only whether the policy in consideration is feasible (which will be the main focus of this work) and one that optimizes the return, provided the policy is feasible. The main idea behind this decoupling being that the search for feasible policies will be, in practice, an easier task to undergo.\\ 
To this end we define an auxiliary \textit{hard-barrier} value function  $H^\pi$ that will relate to $V^\pi$:
\begin{align}
    H^\pi(s) &= \mathbb{E}_{\pi}\left[\sum_{t=0}^\infty \log\left(1-D_{t+1}\right) ~\big|~ S_0=s\right]\label{eq:f_pi}
\end{align}
We proceed similarly for the action-value function $Q^\pi$ and its barrier  counterpart $B^\pi$:
\begin{align}
    &Q^\pi(s,a)\! =\! \mathbb{E}_{\pi}\!\left[\sum_{t=0}^\infty\big(\gamma^t R_{t+1}+\mathbb I\left[D_{t+1}\right]\big) \big| S_0=s, A_0=a\right]\nonumber\\
    &B^\pi(s,a) \!=\! \mathbb{E}_{\pi}\!\left[\sum_{t=0}^\infty\log\left(1-D_{t+1}\right) \big| S_0=s, A_0=a\right]\label{eq:b_pi}
\end{align}
Our original goal is to find  policies that are optimal for \eqref{eq:CRL2} for each possible state. By contrast,  maximizing \eqref{eq:f_pi} aims to find \textit{safe} policies, in the sense that they achieve a finite value in \eqref{eq:CRL2}. The main idea underpinning our work is that we can jointly work on optimizing \eqref{eq:f_pi}, which reduces the search over the policy space, while at the same time maximizing the return present in \eqref{eq:CRL2}.
In the following Theorem we establish a fundamental separation principle between the value functions and their auxiliary counterparts.

\begin{theorem}[Separation principle]
\label{thm:decomposition}
 Assume rewards $R_{t+1}$ are bounded almost surely for all $t$ \hl{and the discount factor satisfies $\gamma<1$}. Then, for every policy $\pi$: 
 \begin{equation}
     V^\pi(s) = V^\pi(s) + H^\pi(s) \label{eq:v_decomposition}
 \end{equation}
 \begin{equation}
     Q^\pi(s,a) = Q^\pi(s,a) + B^\pi(s,a) \label{eq:q_decomposition}
 \end{equation}
\begin{proof}
We shall prove \eqref{eq:v_decomposition} only, since  the proof  for \eqref{eq:q_decomposition} is alike. The following identities hold, as explained below.
\begin{align}
    &V^\pi(s) =\mathbb{E}_{\pi}\left[\sum_{t=0}^\infty\big(\gamma^t R_{t+1}+\log\left(1-D_{t+1}\right)\big)  ~\big|~  S_0=s\right]\nonumber\\
    &= \mathbb{E}_{\pi}\left[\sum_{t=0}^\infty\big(\gamma^t R_{t+1}+\log\left(1-D_{t+1}\right)\big)  ~\big|~  S_0=s\right]\label{eq:proof1_2}\\
    &+ \mathbb{E}_{\pi}\left[\sum_{t=0}^\infty\log\left(1-D_{t+1}\right) \mid S_0=s\right]\label{eq:proof1_3}
\end{align}
To show that $V^\pi(s)$ can be separated  in \eqref{eq:proof1_2} and \eqref{eq:proof1_3}, first suppose  policy $\pi$ is feasible for Problem \eqref{eq:CRL2}, in the sense that it attains a finite-valued objective. 
This necessarily implies that $D_{t+1}=0~~\textit{a.s.}~~\forall t$, which makes the second term in \eqref{eq:proof1_2} vanish. Conversely, suppose that the policy in consideration is infeasible. This together with the fact that rewards are bounded almost surely  yields $V^\pi(s)=-\infty$, which is the same value attained by both \eqref{eq:proof1_2} and \eqref{eq:proof1_3}.
\end{proof}
\end{theorem}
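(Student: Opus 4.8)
The plan is to prove \eqref{eq:v_decomposition} by a case split on whether $\pi$ is feasible for \eqref{eq:general_formulation_constrained}, exploiting that the barrier index \eqref{eq:barrier_indicator} is two-valued, taking only $0$ or $-\infty$. Since infinite summands appear, every expectation must be read as a Lebesgue integral on the extended real line, as flagged after \eqref{eq:barrier_indicator}. First I would record the dichotomy forced by the constraint: either $D_{t+1}=0$ almost surely for all $t$ (feasible), or $\prob(D_{t+1}=1)>0$ for some $t$ (infeasible).

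If $\pi$ is feasible, every barrier summand $\log(1-D_{t+1})$ is almost surely $0$, so the pure-barrier value $H^\pi(s)$ from \eqref{eq:f_pi} equals $0$; moreover $V^\pi(s)$ in \eqref{eq:v_pi} reduces to the expected discounted reward, which is finite by the almost-sure boundedness of $R_{t+1}$. Thus \eqref{eq:v_decomposition} reads $V^\pi(s)=V^\pi(s)+0$ and holds. If instead $\pi$ is infeasible, then on an event of positive probability some $D_{t+1}=1$, sending the inner sum to $-\infty$ there; I would then argue that $V^\pi(s)=-\infty$ and simultaneously $H^\pi(s)=-\infty$, so that \eqref{eq:v_decomposition} becomes the valid extended-real identity $-\infty=-\infty+(-\infty)$.

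The step I expect to be the main obstacle is showing that $V^\pi(s)$ is unambiguously $-\infty$ in the infeasible case, rather than an indeterminate $\infty-\infty$. This is exactly where the boundedness hypothesis enters: because $\sum_t \gamma^t R_{t+1}$ is finite almost surely, and indeed absolutely integrable, the finite reward contribution cannot cancel the $-\infty$ produced by the barrier, so the Lebesgue integral is well-defined and equals $-\infty$. Once this is secured, the extended-real arithmetic $-\infty+(-\infty)=-\infty$ closes the infeasible case. Finally, the action-value identity \eqref{eq:q_decomposition} follows by repeating the identical dichotomy under the extra conditioning on $A_0=a$, with $Q^\pi$ and $B^\pi$ from \eqref{eq:b_pi} replacing $V^\pi$ and $H^\pi$.
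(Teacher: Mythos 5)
Your proposal is correct and follows essentially the same route as the paper's own proof: a case split on feasibility, with $H^\pi(s)=0$ when $\pi$ is feasible and $V^\pi(s)=H^\pi(s)=-\infty$ when it is not, using the almost-sure boundedness of rewards to rule out an $\infty-\infty$ indeterminacy. If anything, you are slightly more explicit than the paper about why the extended-real Lebesgue integral is well-defined and equals $-\infty$ in the infeasible case, which is the one point the paper glosses over.
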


The preceding result implies non-trivial consequences. If the learning agent can interact with the environment and have access to rewards $R_{t+1}$ and queries of whether a transition has been safe (i.e. $D_{t+1}=0$), then it can separately learn both $Q^\pi(s,a)$ and $B^\pi(s,a)$. 
This is  discussed in the next remark.

\begin{remark}[Properties of the optimal barrier function $B^*$]
\label{rmk:B_opt}
\begin{align}
&B^*(s,a)=\max_\pi B^\pi(s,a)\nonumber\\ 
&= \max_\pi\mathbb{E}_{\pi}\left[\sum_{t=0}^\infty\log\left(1-D_{t+1}\right) ~\big|~ S_0=s, A_0=a\right] \label{eq:b_pi_opt}
\end{align}
\hl{By definition, the entries of $B^*$ will either be $0$ or $-\infty$. Having $B^*(s,a)=-\infty$ means that starting from $(s,a)$, no policy is safe, in the sense that there is an (albeit small) non-zero probability of encountering an unsafe event $D_{t+1}=1$, regardless of the actions taken later. Conversely, if $B^*(s,a)=0$ then, upon starting from $(s,a)$, there exists (at least one) policy that guarantees that no damage will be seen in the future. As such, the optimal barrier function $B^*$ completely characterizes the set of feasible, stationary policies $\Pi_{\texttt{safe}}$:
\begin{equation}
\Pi_{\texttt{safe}}=\left\{\pi: \pi(a|s)=0\text{~whenever~}B^*(s,a)=-\infty\right\}
\label{eq:set-feasible-policies}
\end{equation}
In this sense, prior learning of $B^*$ helps constrain the search of other known algorithms to only feasible policies.} 

\end{remark}
\hl{Since the optimal barrier function characterizes the safety of any state-action pair, it can be used to express exposure for MDPs more succintly (c.f. Definition \ref{def:exposure-mdps1}):}
\hl{\begin{remark}[Exposure in MDPs: alternate representation]\label{def:exposure-mdps2}
The exposure at time $t\geq 1$ is:
$$
\hl{E_t = \sum_{\tau=1}^t\mathds{1}\left(B^*(S_\tau,A_\tau)=-\infty\right)}.
$$
\end{remark}
}



 \subsection{Barrier learning}\label{ssec:3.2}

We now focus on the feasibility problem of learning $B^*$ from data.  To that end, we state the following optimality condition in the standard form of the Bellman's equations.

\begin{theorem}[Bellman equation for $B^*(s,a)$]
\label{thm:Bellman}
 The optimal barrier function satisfies  
 \begin{align}
      B^*(s,a)=\mathbb E\bigg[&\mathbb I\left[D_{t+1}\right]+ \max_{a'\in \mathcal A} B^*(S_{t+1},a') \big| S_t\!=\!s,A_t\!=\!a\bigg],\label{eq:bellman_for_barrier}
 \end{align}
where the expectation is taken with respect to the damage and next-state transition given by the MDP.
\end{theorem}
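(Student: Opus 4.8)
The plan is to establish the Bellman optimality equation \eqref{eq:bellman_for_barrier} by the standard principle-of-optimality route — separating the first transition from the remainder of the trajectory and invoking the Markov property — but with care taken for the extended-real-valued ($\{0,-\infty\}$) nature of the barrier function.

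First I would recall that $B^\star(s,a)=\max_\pi B^\pi(s,a)$ with $B^\pi$ as in \eqref{eq:b_pi}, and split off the $t=0$ term, writing $\sum_{t=0}^\infty \log(1-D_{t+1}) = \log(1-D_1) + \sum_{t=1}^\infty \log(1-D_{t+1})$. Conditioning on the first transition and applying the tower property together with the time-homogeneous Markov property, the tail sum starting from $S_1$ equals, in expectation, the barrier value of the continuation policy evaluated at the successor. For a stationary $\pi$ this yields the expectation recursion $B^\pi(s,a)=\mathbb E[\log(1-D_{t+1})+B^{\pi}(S_{t+1},A_{t+1})\mid S_t=s,A_t=a]$, with $A_{t+1}\sim\pi(\cdot\mid S_{t+1})$.

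Next I would pass to the optimum through the usual two inequalities. For the upper-bound direction, I would bound $B^\pi(S_{t+1},A_{t+1})\le\max_{a'}B^\star(S_{t+1},a')$ inside the expectation and then take the supremum over $\pi$ on the left-hand side. For the lower-bound direction, I would exhibit a policy that, from each successor state $s'$, plays a greedy continuation — picking $a'\in\argmax_{b}B^\star(s',b)$ and thereafter following a policy attaining $B^\star(s',a')$ — and verify that it achieves the right-hand side; standard MDP arguments guarantee that restricting to such stationary greedy continuations is without loss.

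I expect the main obstacle to be the rigorous handling of the $-\infty$ arithmetic and the interchange of the supremum (over possibly history-dependent policies) with the expectation and the first-step conditioning, since naive manipulations risk $\infty-\infty$ indeterminacies. The cleanest way around this, which I would use to anchor the argument, exploits that every entry of $B^\pi$ (and hence $B^\star$) lies in $\{0,-\infty\}$, as noted in Remark \ref{rmk:B_opt}: thus $B^\star(s,a)=0$ iff there exists a policy keeping $D_{t+1}=0$ almost surely for all $t$ starting from $(s,a)$. With this characterization the equation reduces to a purely logical equivalence — a safe continuation from $(s,a)$ exists iff the immediate transition is safe almost surely \emph{and} from every reachable successor $s'$ some safe continuation exists, i.e. $\max_{a'}B^\star(s',a')=0$ — so that both sides of \eqref{eq:bellman_for_barrier} are simultaneously $0$ or simultaneously $-\infty$, sidestepping the measure-theoretic subtleties of the infinite-horizon sum entirely.
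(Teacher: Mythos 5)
Your proof is correct, but it takes a genuinely different route from the paper's. The paper dispatches Theorem \ref{thm:Bellman} in one line, by casting the problem as negative dynamic programming with nonnegative, possibly unbounded costs $C(S_t,A_t)=-\mathbb{I}[D_{t+1}]$ and citing Proposition 4.1.1 of \cite{2012.Bertsekas}, which asserts that the Bellman optimality equation holds in exactly this setting; that route is instantaneous and inherits the measure-theoretic care for unbounded costs from the reference, but it is opaque about \emph{why} the equation holds here. Your argument is self-contained and, in its final form, sharper for this particular problem: since every $B^\pi$, and hence $B^\star$, takes values only in $\{0,-\infty\}$, the fixed-point identity collapses to the logical equivalence ``a damage-free continuation exists from $(s,a)$ iff $D_{t+1}=0$ a.s.\ in one step and every positive-probability successor $s'$ admits some $a'$ with $B^\star(s',a')=0$,'' which in the finite-state, finite-action setting is a clean viability/invariant-set statement with no extended-real arithmetic at all. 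Two remarks: first, the supremum over policies is automatically attained here (a set contained in $\{0,-\infty\}$ has supremum $0$ iff $0$ belongs to it), so your greedy-continuation construction needs no appeal to general existence theorems; second, your opening two paragraphs (first-step conditioning plus the two-inequality sandwich) are precisely where the negative-DP subtleties live --- interchanging the supremum over history-dependent policies with the first-step expectation is not innocuous for unbounded costs --- but they are superseded by your closing reduction, which stands on its own and proves both directions simultaneously. In short: the paper buys brevity by citation; you buy transparency and an elementary proof, at the cost of needing the two-valued structure, which the cited proposition does not require.
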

\begin{proof}
 It follows from Proposition 4.1.1 in  \cite[pp. 217]{2012.Bertsekas} with the value function being \emph{minimized}, and assuming possibly unbounded, non-negative costs   $C(S_t,A_t)=-\mathbb{I}[D_{t+1}]$.
\end{proof}
Besides providing a certificate for optimality, the Bellman Equations for $B^*(s,a)$ hint towards a stochastic iterative algorithm to optimize \eqref{eq:b_pi_opt} from data, the same way the Q-learning  algorithm is derived from the standard unconstrained value function \cite{2012.Bertsekas}. We will elaborate on this stochastic algorithm next.
%
%
%
As introduced in \eqref{eq:b_pi_opt}, our goal is to learn a \emph{safe} policy $\pi$ with an associated optimal Barrier function $B^*(s,a)$ that encodes the trajectories that satisfy the constraints at all times w.p.1. 
For this purpose, we first devise the following iterative algorithm that attempts to reach a fixed point satisfying \eqref{eq:bellman_for_barrier}
\begin{align}
B_{k+1}(s,a)&\!=\!\mathbb E\left[\mathbb I\left[D_{t+1}\right]+ \max_{a'\in \mathcal A} B_k(S_{t+1},a') \big| S_t\!=\!s,A_t\!=\!a \right].\label{eq:ensemble_iteration_bk}
\end{align}

Next, we appeal to the stochastic approximation machinery \cite{robbins_monro} to drop the unknown expectations yielding a data driven version of \eqref{eq:ensemble_iteration_bk}. The resulting stochastic update is given next.
%
%
\begin{algorithm}[htbp]
\SetAlgoLined
\KwIn{$B$-function and $(s_t,a_t,s_{t+1},d_{t+1})$ tuple}
\KwOut{Barrier-function $B$}
\vspace{-.5cm}
$$B(s_t,a_t)\leftarrow B(s_t,a_t)
+\log(1-d_{t+1})+\max_{a'}B(s_{t+1},a')$$
\KwRet{$B$}
\caption{\texttt{barrier\_update}}
\label{alg:barrier}
\end{algorithm}

The update in Algorithm \ref{alg:barrier} incorporates the information carried in  $d_{t+1}$, which signals whether the constraint has been violated or not during the transition from time $t$ to $t+1$. Moreover, the update does not only  consider immediate violations, but also the future effect of the action $a_t$ that is summarized in the second term  $\max_{a'}B(s_{t+1},a')$. This \emph{bootstrapping} mechanism leverages on stationarity to  collect  damage information from all previous state transitions, and summarize it in $B(s_{t+1},a')$ which predicts long-run future effect of the state-action pairs at time $t+1$. Thus, by repeating the update in Algorithm \ref{alg:barrier} with new data coming from successive system interactions, an agent can synthesize the whole information about all past constraint violations in the barrier function $B(s,a)$ for unveiling the set of unsafe policies. 

We turn now to the details of this iterative algorithm and to its performance guarantees.

\subsection{Performance analysis of Barrier-Learner}\label{ssec:3.3}
\label{sec:sample-compl}First, we consider a simple barrier learner algorithm where one can query on any state-action pair and sample transitions $(s,a)\rightarrow(s',d)$ according to the MDP kernel. The barrier learner is shown in Algorithm \ref{alg:barrier_learner}. Our analysis shows that the expected queries/samples required until all unsafe state-action pairs are detected is finite.
\begin{algorithm}[!ht]
\KwData{Constrained Markov Decision Process $\mathcal{M}$}
\KwResult{Optimal action-value function $B^*$}
Initialize $B^{(0)}(s,a)=0, \forall (s,a)\in\mathcal{S}\times \mathcal{A}$\\
\For{$t=0,1,\cdots$}{
    Draw $(s_t,a_t)\sim \mathrm{Unif}(\{(s,a):{B}^{(t)}(s,a)\neq -\infty\})$\\
    Sample transition $(s_t,a_t, s'_t,d_t)$ according to $\prob\lp S_1=s'_t,D_1=d_t|S_0=s_t,A_0=a_t\rp$\\
    $B^{(t+1)} \leftarrow \texttt{barrier\_update(}B^{(t)},s_t,a_t,s'_t,d_t\texttt{)}$\\
}
\caption{Barrier Learner Algorithm}
\label{alg:barrier_learner}
\end{algorithm}

In an MDP, an $(s,a)$ pair is unsafe ($B^*(s,a)=-\infty$) if either it immediately causes damage, i.e. $\prob(D_1=d|S_0=s,A_0=a)>0$, or it transitions to an unsafe state, namely $\prob(S_1=s'|S_0=s,A_0=a)>0$ for some $s'$ with $B^*(s',a)=-\infty,\forall a\in\mathcal{A}$. As a result, when an $(s,a)$ is taken, one may observe the damage after several steps.
We let $L$ be the \emph{lag} of the MDP, which is the maximum steps one need to wait until observing the potential damage by taking an unsafe $(s,a)$ pair. The exact definition of $L$ is given in Appendix \ref{app:pf_sample_comp_barrier_learner}. Regarding Algorithm \ref{alg:barrier_learner}, we have the following result 
\begin{theorem}\label{thm:sample_comp_barrier_learner}
    Given an MDP, let $\rho>0$ be a lower bound on all non-zero transition probabilities: $\rho\leq P(s'|s,a)~\forall (s,a,s'): P(s'|s,a)>0$. Let $T$ be earliest time when Algorithm \ref{alg:barrier_learner} detects all unsafe state-action pairs, i.e. $T:=\min\{t:B^{(t)}=B^*\}$. Then we have:
    \begin{equation}
        \expc [T]\leq (L+1)\frac{|\mathcal{S}||\mathcal{A}|}{\rho}\log\lp{|\mathcal{S}||\mathcal{A}|}+1\rp \,,\label{eq:expected_bound_barrier_learner}
    \end{equation}
    where $L$ is the lag of the MDP.
\end{theorem}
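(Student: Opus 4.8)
The plan is to reduce the problem to $L+1$ nested instances of the coupon-collector argument already used in the proof of Theorem \ref{thm:mab-assured-finite-time}, stratifying the unsafe pairs according to how many transitions must elapse before their damage becomes observable to the \texttt{barrier\_update} rule of Algorithm \ref{alg:barrier}. First I would organize the unsafe set by depth: call an unsafe pair \emph{level $0$} if taking it produces observable damage $d=1$ in a single transition, and \emph{level $\ell$} (for $\ell\geq 1$) if it is not of level $<\ell$ but transitions with positive probability to a state $s'$ all of whose actions are unsafe of level at most $\ell-1$. The lag is then $L$ equal to the maximum level over all unsafe pairs, which coincides with the formal definition deferred to Appendix \ref{app:pf_sample_comp_barrier_learner}. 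Two monotonicity facts drive the argument, both following from the additive form of the update: since every increment $\log(1-d)+\max_{a'}B(s',a')$ is non-positive and $B^{(0)}\equiv 0$, each $B^{(t)}(s,a)$ lives in $\{0,-\infty\}$ and is non-increasing in $t$; and, by induction on the first pair to be wrongly flagged, no safe pair is ever set to $-\infty$, because a safe pair observes only $d=0$ and every state it can reach retains at least one never-flagged safe action. Hence the candidate set always contains all safe pairs and shrinks only by removing genuine unsafe pairs, guaranteeing $B^{(t)}\to B^*$ so that $T$ is well defined.

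The crux is a single-step detection bound: once every unsafe pair of level $<\ell$ has been flagged, any remaining level-$\ell$ pair $(s,a)$, when it is the one sampled in a round, is flagged that same round with probability at least $\mu$. Indeed, either $(s,a)$ causes direct damage, so $d=1$ occurs with probability at least $\mu$ and $\log(1-d)=-\infty$ flags it; or it transitions with probability at least $\mu$, by the transition lower-bound assumption, to a state $s'$ whose every action is unsafe of level $\leq \ell-1$ and therefore already flagged, so that $\max_{a'}B^{(t)}(s',a')=-\infty$ flags $(s,a)$ through the bootstrapping term. Either way the per-sample detection probability of a \emph{ready} level-$\ell$ pair is at least $\mu$.

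With this in hand, let $T_\ell$ be the first time all unsafe pairs of level $\leq \ell$ are flagged, with $T_{-1}:=0$, so $T=T_L$. Conditioned on $\mathcal{F}_{T_{\ell-1}}$, every not-yet-flagged level-$\ell$ pair is ready, the candidate set has size at most $|\mathcal{S}||\mathcal{A}|$, and so a specific one of the $k$ remaining level-$\ell$ pairs is sampled and flagged in a round with probability at least $k\mu/(|\mathcal{S}||\mathcal{A}|)$. This is exactly the coupon-collector dynamics of Theorem \ref{thm:mab-assured-finite-time} with $K=|\mathcal{S}||\mathcal{A}|$ and detection rate $\mu$, giving $\expc[T_\ell-T_{\ell-1}\mid \mathcal{F}_{T_{\ell-1}}]\leq \frac{|\mathcal{S}||\mathcal{A}|}{\mu}\sum_{k=1}^{M_\ell}\frac{1}{k}\leq \frac{|\mathcal{S}||\mathcal{A}|}{\mu}\sum_{k=1}^{|\mathcal{S}||\mathcal{A}|}\frac{1}{k}$, where $M_\ell\leq|\mathcal{S}||\mathcal{A}|$ is the number of level-$\ell$ pairs. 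Summing the telescoping increments over $\ell=0,\dots,L$ and taking total expectation produces the factor $(L+1)$ in front of the coupon-collector bound, which is the claim.

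I expect the main obstacle to be rigor in the stratified telescoping step rather than the arithmetic: detection across levels genuinely interleaves, since a level-$\ell$ pair may be flagged before all level-$(\ell-1)$ pairs are whenever the particular unsafe successor it leads to is resolved early, so I must argue that freezing each level until the previous one is fully resolved only inflates the stopping time. This is a stochastic-dominance statement that I would justify by conditioning on $\mathcal{F}_{T_{\ell-1}}$ as above and invoking the monotonicity of the flagging process, ensuring the per-level coupon-collector bounds compose into a valid upper bound for $T=T_L$. Establishing the single-step probability-$\mu$ bound uniformly, covering both the direct-damage and the bootstrapped detection mechanisms under the transition lower-bound assumption, is the other point requiring care.
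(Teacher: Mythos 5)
Your proof is correct and arrives at the paper's bound through the same underlying idea --- stratify the unsafe objects by detection depth and run the coupon-collector argument of Theorem \ref{thm:mab-assured-finite-time} once per stratum with per-pull detection probability $\mu$, giving the factor $L+1$ --- but the reduction mechanics are genuinely different. The paper stratifies \emph{states}: it recursively builds $\mathcal{S}_1,\dots,\mathcal{S}_{L+1}$ (the last stratum holding the safe states, whose unsafe \emph{actions} still occupy a final stage), and then makes the reduction rigorous via two surrogate constructions: a rejection-sampling reformulation of Algorithm \ref{alg:barrier_learner} whose sampling is independent of $B^{(t)}$, and a stage-frozen learner that refuses to flag any pair outside the current stratum, together with a pathwise coupling (Claim \ref{clm:s_order_of_b_learner}, $B^{(\tau)}\leq \hat{B}^{(\tau)}$ along every fixed trajectory) showing the frozen learner is slower. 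You instead stratify \emph{pairs} by level and analyze the original algorithm directly, telescoping $T=\sum_{\ell=0}^{L}(T_\ell-T_{\ell-1})$ and bounding each increment conditionally on $\mathcal{F}_{T_{\ell-1}}$: interleaved early detections of level-$\ell$ pairs only remove coupons, and detections at other levels only shrink the candidate set (whose size you bound crudely by $|\mathcal{S}||\mathcal{A}|$, exactly as the paper does), so the per-increment coupon-collector bound holds by stochastic dominance with no surrogate algorithm or coupling needed. Your route is shorter and reasons about the actual process; the paper's buys a clean verbatim invocation of Theorem \ref{thm:mab-assured-finite-time} with a fixed machine set per stage, at the cost of the extra coupling claim. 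Two points to nail down in a full write-up: first, your readiness claim requires that every non-level-$0$ unsafe pair reaches, with probability at least $\mu$, a successor state \emph{all} of whose actions are already flagged (so that $\max_{a'}B^{(t)}(s',a')=-\infty$); this holds because otherwise some successor would retain a safe action and a safe continuation policy would exist, contradicting unsafety, and an induction along the paper's state strata shows your pair-levels are bounded by the lag $L$, so exactly $L+1$ strata appear. Second, your step "direct damage occurs with probability at least $\mu$" is not literally implied by the stated assumption, which lower-bounds only transition probabilities $\prob(S_1=s'\mid S_0=s,A_0=a)$; the paper's own proof of Theorem \ref{thm:sample_comp_barrier_learner_mod} makes the identical leap (it is valid when the damage indicator is a deterministic function of the transition triplet, as in the paper's feasible-set specialization), so this is imprecision inherited from the paper rather than a gap specific to your argument.
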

\begin{proof}[Proof Sketch]
    Theorem \ref{thm:sample_comp_barrier_learner} is proved in three steps. We refer the readers to Appendix \ref{app:pf_sample_comp_barrier_learner} for the complete proof.
    
    First, we reformulate the algorithm so that the sampling process of $(s_t,a_t)$ is independent of the current progress of the $B^{(t)}$-function: at iteration $t$, one samples an $(s_t,a_t)$ pair uniformly from the entire $\mathcal{S}\times \mathcal{A}$ set, then the algorithm chooses to either accept or reject the sample depending on the value of $B^{(t)}(s_t,a_t)$. This way, we turn to study the number of acceptance by the reformulated algorithm before it detects all unsafe state-action pairs.
    Secondly, we provide a modified algorithm with more restrictions on accepting the sample compared to the reformulated algorithm. Such restrictions allow the modified algorithm to learn unsafe state-action pairs in multiple stages: at each stage, the algorithm only allows to detect a subset of unsafe state-action pairs, but the detection process in every stage can be viewed as a safe multi-arm bandits problem discussed in Section \ref{sec:multi_arm}.
    
    Lastly, we derive an upper bound on the expected number queries of the modified algorithm until successfully detecting all unsafe state-action pairs, based on Theorem \ref{thm:mab-assured-finite-time}, which is also an upper bound on $\expc [T]$.
\end{proof}
While this bound is admittedly loose (we use a lower-bound $\rho$ for the transition probabilities and a provably slower surrogate algorithm with more restrictions), it serves the purpose of upholding our main claim in the paper that  all unsafe policies can be detected in finite time. The resulting simplicity of this bound also lets us observe the fundamental factors adding to  the detection time $T$.  Specifically, with   larger  spaces $\mathcal S$ and $\mathcal A$   more exploration is needed, with a smaller $\rho$ or longer lag $L$ unsafe actions take longer to be revealed as damaging, all three factors adding to a longer detection time. A tighter bound is presented in Appendix \ref{app:pf_sample_comp_barrier_learner} when we prove Theorem \ref{thm:sample_comp_barrier_learner}. 

\hl{This theorem has many implications. First, recalling that the optimal barrier $B^*$ fully characterizes the set of feasible policies (Remark \ref{rmk:B_opt}), we obtain \emph{all} the feasible policies in finite time. Next, as a simple corollary, we have that the \emph{exposure} (which counts the number of unsafe interactions with the environment) is also finite in expectation.}
\begin{corollary}[Expected exposure in MDPs is finite]
\hl{The expected exposure (see Remark \ref{def:exposure-mdps2}) under Algorithm \ref{alg:barrier_learner} satisfies:}
$$
\hl{\mathbb{E}[E_t]\leq (L+1)\frac{|\mathcal{S}||\mathcal{A}|}{\rho}\log\lp{|\mathcal{S}||\mathcal{A}|}+1\rp}$$
\begin{proof}
    \hl{Note that the exposure at any time $t$ is at most $t$: $E_t\leq t$ (which amounts to always sampling an unsafe $(s,a)$ pair). In particular, at the termination time $T$ of the algorithm we have $E_T\leq T$, and therefore $\mathbb{E}\left[E_T\right]\leq \mathbb{E}[T]$. The result follows from \eqref{eq:expected_bound_barrier_learner}.}
\end{proof}
\end{corollary}
\hl{We now re-use Theorem \ref{thm:sample_comp_barrier_learner} to derive a sample-complexity bound, and next discuss that the dimensionality dependence is far better than what appears in current works. }
\hl{\begin{corollary}[Sample-complexity bound]
For any $0<\delta<1$, with probability at least $1-\delta$, Algorithm \ref{alg:barrier_learner} learns the optimal barrier function $B^*$ after at most $T_\delta$ steps, with
$$
T_\delta=\left(1+\log\frac{1}{\delta}\right)(L+1)\frac{|\mathcal{S}||\mathcal{A}|}{\rho}\log\lp{|\mathcal{S}||\mathcal{A}|}+1\rp.
$$
\begin{proof}
    The stopping time $T$ defined in Theorem \ref{thm:sample_comp_barrier_learner} is a sum of geometric random variables (see Appendix \ref{app:pf_sample_comp_barrier_learner}). We use the fact that it is bounded in expectation by \eqref{eq:expected_bound_barrier_learner} together with a tail-bound for sums of geometric random variables \cite[Corollary 2.4]{janson2018tail} with mean $1+\log 1/\delta$.
\end{proof}
\end{corollary}}
\hl{\begin{remark}
    The preceding corollary shows that the optimal barrier function---and hence the set of feasible policies---can be learned with $\mathcal{O}\left(\log\frac{1}{\delta}(L+1)\frac{|\mathcal{S}||\mathcal{A}|}{\rho}\log|\mathcal{S}||\mathcal{A}|\right)$ samples. This is much more efficient than learning an $\epsilon$-optimal policy\cite{li2020breaking}, which requires  $\mathcal{O}\left(\frac{|\mathcal{S}||\mathcal{A}|}{(1-\gamma)^3\epsilon^2}\log\frac{|\mathcal{S}||\mathcal{A}|}{(1-\gamma)\epsilon\delta}\right)$ samples. Learning  an optimal policy requires good \emph{estimation} of the optimal value at each state, which brings the terms $\frac{1}{\epsilon}$ and $\frac{1}{1-\gamma}$ into play. Our framework, in contrast, cares only for \emph{detection} of unsafe state-action pairs, which can be done much faster.
\end{remark}
}
\hl{Once the optimal barrier function is known, one can optimize a policy over a smaller region, which corresponds to the reduced set of safe states $\mathcal{S}_{\texttt{safe}}$ and safe actions $\mathcal{A}_{\texttt{safe}}$. If the cardinality of these sets is much smaller than $\mathcal{S}$ and $\mathcal{A}$, the optimization process will be indeed faster. Along this line, we now illustrate how to learn both the safe region and an optimal policy by combining the barrier-learner with Q-learning.}





\subsection{Learning safely: Q-learning with a barrier}
Now that we have introduced this data-driven strategy for securing the environment, let us turn back our attention to the implications of the separation principle in Theorem \ref{thm:decomposition}. The separation principle \eqref{eq:q_decomposition} can be used to embed the safety information provided by $B(s,a)$ in the Q-function $Q(s,a)$.  Hence, we identify that the condition  $Q(s,a)=-\infty$ is equivalent to $B(s,a)=-\infty$ 
and this propagates information about safety from successor states. 
For those which do satisfy the constraints we have $B(s,a)=0$, and then $Q(s,a)$ will carry the information of the observed rewards. The following update is complementary to Algorithm \ref{alg:barrier_learner}, and amounts to the standard Q-learning algorithm for maximizing rewards at safe pairs of states and actions.

\begin{algorithm}[htbp]
\SetAlgoLined
\KwData{Step size $\eta$, discount factor $\gamma$}
\KwIn{ Functions $Q$, $B$,  and  $(s_t,a_t,s_{t+1},r_{t+1})$}
\KwOut{Q-function}
\vspace{-.5cm}
\begin{align*}
\hspace{-0.5cm}\hspace{-0.5cm}
Q(s_t,a_t)&\!\leftarrow\! (1\!-\!\eta)Q(s_t,a_t)\!+\!\eta\!\left(r_{t+1}\!+\!\gamma\max_{a'}Q(s_{t+1},a')\!\right)\\
Q(s_t,a_t)&\!\leftarrow\!B(s_t,a_t)+Q(s_t,a_t)
\end{align*}
\KwRet{$Q$}
\caption{\texttt{q\_update}}
\label{alg:reward_update}
\end{algorithm}

Next we specify how to use Algorithm \ref{alg:barrier} to ensure safety, and demonstrate the sample complexity of the learning process.  After that, we combine algorithms \ref{alg:barrier} and \ref{alg:reward_update} with the goal of  safely maximizing rewards. We will borrow the well-known convergence results of Q-learning \cite{1994.Tsitsiklis} together with our separation principle in Theorem \ref{thm:decomposition}  to provide performance guarantees for our novel Assured Q-learning algorithm.

The Barrier Learning Algorithm \ref{alg:barrier_learner} stands alone as a data-driven method to assess safety feasibility. However, our separation principle allow us to combine it with standard existing reward optimization algorithms in order to add safety.
For instance, by wrapping    Algorithm \ref{alg:barrier_learner} around the acclaimed Q-learning algorithm we obtain a Generative Assured Q-learning  method to maximize the rewards over the set of safe policies, as is shown in Algorithm \ref{alg:assured_qlearning_onebyone}.

\begin{algorithm}[!ht]
\KwData{Constrained Markov Decision Process $\mathcal{M}$}
\KwResult{Optimal action-value functions $B^*$ and $Q^*$}
Initialize $B^{(0)}(s,a)=0, \ Q^{(0)}(s,a)=0 \forall (s,a)\in\mathcal{S}\times \mathcal{A}$\\
\For{$t=0,1,\cdots$}{
    Draw $(s_t,a_t)\sim \mathrm{Unif}(\{(s,a):\mathcal{B}(s,a)\neq -\infty\})$\\
    Sample transition $(s_t,a_t, s'_t,d_t)$ according to $\prob\lp S_1=s'_t,D_1=d_t|S_0=s_t,A_0=a_t\rp$\\
    $B^{(t+1)} \leftarrow \texttt{barrier\_update(}B^{(t)},s_t,a_t,s'_t,d_t\texttt{)}$\\
$Q^{(t+1)}\leftarrow \texttt{q\_update(}B^{(t)},Q^{(t)},s_t,a_t,s'_t,r_t\texttt{)}$\\
}
\caption{Generative Assured Q-Learning }
\label{alg:assured_qlearning_onebyone}
\end{algorithm}

As a corollary of Theorem \ref{thm:sample_comp_barrier_learner}, and borrowing the convergence results of Q-learning from \cite{1994.Tsitsiklis} we establish the following result.

\begin{corollary}\label{crl:assured-converges}
With  finite state space $\mathcal S$ and action space $\mathcal A$,   bounded rewards  $R_{t}\leq C$, and diminishing step-sizes satisfying $\sum_t\eta_t=\infty$ and $\sum_t\eta_t^2<\infty$, 
the iterates $Q^{(t)}$ of the Algorithm \ref{alg:assured_qlearning_onebyone} converge almost surely to the optimal Q-function $Q^\star$ satisfying:
$$\lim_{t\to\infty} Q^{(t)}(s,a)=Q^*(s,a)\ (w.p.1)~~ \forall (s,a)\in \mathcal S\times \mathcal A$$
with:
\begin{align*}
      Q^*(s,a)\hspace{-2pt}=\hspace{-2pt}\mathbb E\bigg[&\hspace{-2pt}\log\left(1\hspace{-2pt}-\hspace{-2pt}D_{t+1}\right)
      \hspace{-2pt}+\hspace{-2pt} \max_{a'\in \mathcal A} Q^*(s',a') \big|S_t=s,A_t=a\bigg].
\end{align*}
 
  Moreover, all unsafe state-action pairs corresponding to $Q^*(s,a)=-\infty$ are detected in expected finite time $T_{s,a}$ such that: $$Q^{(t)}(s,a)=-\infty,\ \forall\ t\geq T_{s,a},$$ and:
 \begin{equation}
        \expc \left[T_{s,a}\right]\leq \frac{|\mathcal{S}|^2|\mathcal{A}|}{\mu}\log\lp{|\mathcal{S}||\mathcal{A}|}+1\rp \,.\label{eq:finite_time_assuredq}
    \end{equation}
 \end{corollary}

\begin{proof}
In order to apply the convergence results of \cite{1994.Tsitsiklis} we need iterates $Q^{(t)}$ to be finite for all $t$. But this is guaranteed by Theorem \ref{thm:sample_comp_barrier_learner} for all safe pairs $(s,a)$ such that $B^*(s,a)=0$. Indeed,    or all values such that $B^*(s,a)=0$, we have $B^{(t)}=0$ for all $t$ and the updates of the function $Q^{(t)}$ in Algorithm \ref{alg:assured_qlearning_onebyone} amount to the asynchronous Q-learning updates with finite rewards and diminishing step-size required by \cite{1994.Tsitsiklis}.  

For the pairs such that $B^*(s,a)=-\infty$, Theorem \ref{thm:decomposition} implies  $Q^*(s,a)=-\infty$  and according to Theorem \ref{thm:sample_comp_barrier_learner} there must exist a time instant $T_{s,a}$ satisfying \eqref{eq:finite_time_assuredq} such that $B^{(t)}=-\infty ~~\forall t\geq T_{s,a}$. Since by construction the q\_update in Algorithms \ref{alg:assured_qlearning_onebyone} and \ref{alg:barrier} implies that $Q^{(t)}=-\infty$ whenever $B^{(t)}=-\infty$, then $\lim_{t\to\infty}Q^{(t)}(s,a)=-\infty$ for all pairs $(s,a)$ such that     
$B^*(s,a)=Q^*(s,a)=-\infty$.
\end{proof}

The previous result applies to the specific case of diminishing step-sizes and immediate restarts after episodes of length one. However, Q-learning is widely applied with longer episodes and convergence is guaranteed provided that each state-action pair is visited infinitely often.
While one step episodes in Algorithm \ref{alg:assured_qlearning_onebyone} were used in order to simplify the proof of Corollary \ref{crl:assured-converges}, the numerical experiments of the next section will demonstrate that these safe convergence results  carry out to an episodic version  of Assured Q-learning with $\epsilon$-greedy exploratory  modes.

\subsection{Relaxed setting}\label{sec:relaxed-rl}
\hl{
We finish this section by analyzing the relaxed setting, which corresponds to $M>0$ in \eqref{eq:assured-rl}:
\begin{subequations}
\begin{align}
    \max_\pi &~\mathbb{E}_{\pi}\left[\sum_{t=0}^\infty\gamma^t R_{t+1} ~\big|~ S_0=s\right]\label{eq:budget-rl-goal}\\
    \text{s.t.:}&~ \mathbb{P}_\pi\left(\sum_{t=0}^\infty D_{t+1}\leq M\mid S_0=s\right)=1;\quad M>0\label{eq:budget-rl-constraint}.
\end{align}
\label{eq:budget-rl}
\end{subequations}
}
\hl{This setting differs from the one analyzed so far, in the sense that we allow \emph{at most} $M$ units of damage in any given trajectory.
To solve \eqref{eq:budget-rl} we will resort to state-augmentation, tracking how much damage an agent has received so far. We define the variable to augment into the state next.
}
\hl{
\begin{definition}[Safety budget]
The safety budget $K_t$ at time $t$ is defined and evolves as
\begin{subequations}
\begin{align}
    K_0:&= M\\
    K_{t+1}&= K_t-D_t \label{eq:budget-transition}
\end{align}
\end{subequations}
\label{eq:budget-def}
\end{definition}
Here, $K_t$ specifies the remaining budget at time $t$, that is, how many more units of damage an agent can sustain and still satisfy \eqref{eq:budget-rl-constraint}. We consider a state-augmented MDP $\tilde{\mathcal{M}}$ with state $\tilde{S}_t=(S_t,K_t)$ (notice with \eqref{eq:budget-transition} transitions are still Markovian), and observe that \eqref{eq:budget-rl} can be equivalently formulated as:}
\hl{
\begin{subequations}
\begin{align}
    &\max_{\tilde\pi} \mathbb{E}_{\tilde{\pi}}\left[\sum_{t=0}^\infty\gamma^t R_{t+1} ~\big|~ S_0=s, K_0=M\right]\\
    &\text{s.t.:}~ \mathbb{P}_{\tilde{\pi}}\left(K_{t+1}\geq 0\mid S_0=s,K_0=M\right)=1~~\forall t\geq 0 \label{eq:budget-rl-equiv-constraint}
\end{align}
\label{eq:budget-rl-equiv}
\end{subequations}
where the notation $\tilde\pi$ stresses that this is a policy on the augmented MDP $\tilde{\mathcal{M}}$. By defining a new binary damage signal $\tilde{D}_{t+1}$ on $\tilde{\mathcal{M}}$ we can rewrite the constraint \eqref{eq:budget-rl-equiv-constraint} as follows:}
\hl{\begin{equation}
    K_{t+1}\geq 0\iff \tilde{D}_{t+1}:=\mathds{1}\left\{K_{t+1}<0\right\}=0
\end{equation}}
\hl{Now, outstandingly, the previous state-augmented problem can be equivalently put in the following form, which \emph{fits} the flawless setting of the beginning of the section:}
\hl{\begin{subequations}
\begin{align}
    &\max_{\tilde\pi} \mathbb{E}_{\tilde{\pi}}\left[\sum_{t=0}^\infty\gamma^t R_{t+1} ~\big|~ S_0=s, K_0=M\right]\\
    &\text{s.t.:}~ \mathbb{P}_{\tilde{\pi}}\left(\tilde{D}_{t+1}= 0\mid S_0=s, K_0=M\right)=1~\forall t\geq 0 \label{eq:budget-rl-equiv-constraint2}
\end{align}
\label{eq:budget-rl-equiv2}
\end{subequations}}
\hl{We focus then on \eqref{eq:budget-rl-equiv2}, and state its main properties in the following theorem.}
\hl{
\begin{theorem}[Stationarity and equivalence]\label{thm:stat-equiv}
~
\begin{enumerate}
    \item If \eqref{eq:budget-rl-equiv2} is feasible, there exists an optimal policy that is stationary: that is $\tilde{\pi}^*(\cdot|s,k)$.
    \item In that case, $\tilde{\pi}^*$ is also optimal for \eqref{eq:budget-rl}.
\end{enumerate}
\begin{proof}
To prove (1), swap \eqref{eq:budget-rl-equiv-constraint2} for the equivalent constraint $\mathbb{E}_{\tilde{\pi}}\left[\sum_{t=0}^\infty \gamma^t \tilde{D}_{t+1} | S_0=s, K_0=M\right]\leq 0$. This fits the standard formulation for CMDPs, for which the set of stationary policies is \emph{complete} \cite{1998.Altman}. Then, if the problem is feasible, there exists at least one stationary optimal policy. To prove (2), we refer the reader to Lemma 5 and Lemma 6 in \cite{l4dc}, where the equivalence between \eqref{eq:budget-rl} and \eqref{eq:budget-rl-equiv2} is shown.
\end{proof}
\end{theorem}
}
\hl{Due to Theorem \ref{thm:stat-equiv} and the fact that \eqref{eq:budget-rl-equiv2} fits the formulation of the beginning of the section, we can similarly define an extended action-value function $\tilde{Q}^{\tilde\pi}(s,k,a)\! :=\! \mathbb{E}_{\pi}\!\left[\sum_{t=0}^\infty\big(\gamma^t R_{t+1}+\mathbb I\left[\tilde{D}_{t+1}\right]\big) \big| S_0=s, K_0=k,A_0=a\right]$, the hard-barrier $\tilde{B}^{\tilde{\pi}}(s,k,a)$,
and the separation principle still holds. Then, under this setting learning feasibility would be akin to learning the optimal barrier $\tilde{B}^*(s,k,a)\quad\forall (s,k,a)\in\mathcal{S}\times[M]\times\mathcal{A}$ where $[M]=\{0,1,\ldots,M\}$. This can still be done in expected finite time by applying the results of Theorem \ref{thm:sample_comp_barrier_learner}, along with finite exposure, defined in the extended MDP. There is a seemingly big price to pay in this case: the dimensionality increase of needing to learn $\tilde{B}^*$ for each possible budget $k$. This however, can be circumvented, and we refer the reader to \cite{l4dc} for details.}

\section{Numerical Experiments}\label{sec:numericals}
We now proceed to some numerical experiments that back up the results presented throughout the paper. We first focus on the multi-armed bandit setup and on the problem of detecting unsafe machines under a uniform strategy. Later we tackle learning the optimal barrier in a continuous control problem, and how learning this barrier---i.e. feasibility---first can make learning a task easier later.

\subsection{Multi-armed bandits}\label{sec:experiment-bandits}

We illustrate the performance of the Relaxed Inspector (Algorithm \ref{alg:relaxed_inspector}) on a setup of $K=1000$ arms, with a safety requirement of $\mu=\frac{1}{10}$. Each arm's true safety parameter is uniformly sampled between $0$ and $\frac{1}{5}$. This means that if $\epsilon\approx 0$, around half of the machines are safe and the other half unsafe. We run the Relaxed Inspector on this setting under a uniform strategy, for varying levels of both the failure tolerance $\alpha$ and the slack $\epsilon$. After all unsafe machines have been detected ---which happens and is certified to be done in finite time in virtue of Theorem \ref{thm:mab-relaxed-finite-time}--- we consider $C_{\varepsilon,\infty}$, the final conservation ratio and the normalized final exposure $\frac{1}{K}E_\infty$ .  Figure \ref{fig:bandits} shows both metrics for varying $\alpha$ and $\epsilon$. The curves in these figures show the average value obtained after $16$ independent runs. Shaded intervals correspond to $\pm \sigma/\sqrt{16}$, with $\sigma$ being the sample deviation.

These figures certify the intrinsic trade-off in our methodology: if a large value of $\alpha$ is used, one can obtain low exposure $E_t$ (less pulls of unsafe machines), but at the cost of discarding more safe machines (smaller conservation ratio $C_{\varepsilon,t}$). For further examples we refer the reader to \cite{castellano2021learning}.



\begin{figure}[ht]
    \centering
    \begin{subfigure}[t]{.49\columnwidth}
         \centering
    \includegraphics[width=\linewidth]{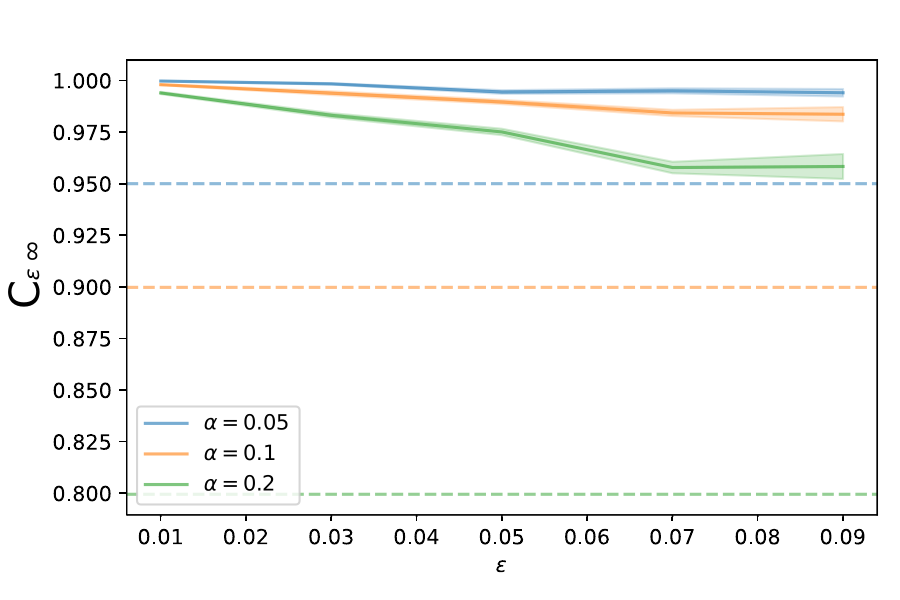}
    \caption{}
    \label{fig:bandits-rho}
     \end{subfigure}
     \begin{subfigure}[t]{.49\columnwidth}
     \includegraphics[width=\linewidth]{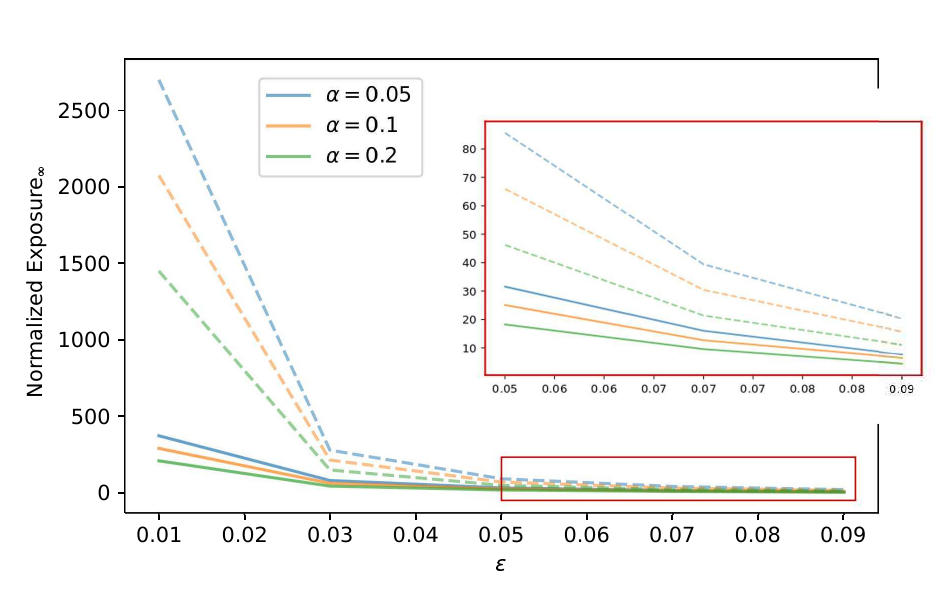}
    \caption{ }
    \label{fig:bandits-expo}
     \end{subfigure}
     \caption{Top: Final safety ratio $C_{\varepsilon,\infty}$ for $\mu=0.1$ as a function of $\epsilon$, for varying $\alpha$. More safe machines are kept when using small values of $\epsilon$ and $\alpha$, but this in turn implies longer training time. Bottom: Normalized final exposure for $\mu=0.1$ as a function of $\epsilon$, for varying tolerance level $\alpha$. Larger values of $\alpha$ and $\epsilon$ achieve lower handicap (which implies faster detection).
     }
     \label{fig:bandits}
\end{figure}

\subsection{MDPs}\label{sec:experiment-mdps}

\hl{We consider a robot navigating in 2d-space with constant velocity module $v=0.5m/s$. The robot can change its angle via first-order tracking dynamics, and the system evolves according to the following equations:
\begin{equation}
\left\{\begin{array}{ll}
\dot x = v.\cos\theta\\
\dot y = v.\sin\theta\\
\dot\theta = -(a-\theta)
\end{array}\right.\label{eq:sys_dynamics}
\end{equation}
where $x$ and $y$ are the   horizontal positions, $\theta$ is the angle with respect to the horizontal,
$s=[x, y, \theta]$ is the state of the system and the action $a$ is an  angle setpoint.\\
The $(x,y)$-space is a $5\times 5$ square with an obstacle in the middle, as depicted in the top-left of Fig. \ref{fig:env-and-barrier}. Bumping into either the obstacle or any of the outer walls is \emph{unsafe}, and results in damage $D_{t+1}=1$. 
At each time step, the agent decides an action $a\in[\theta_0-\pi, \theta_0+\pi]$ where $\theta_0$ is the angle at the beginning of the time interval. This action is input to the system \eqref{eq:sys_dynamics} for $\Delta T = 0.5s$. We uniformly discretize each component of the state-action space into $N_x=N_y=21, N_\theta=N_a=8$ values respectively.
}


\subsubsection{Learning the barrier}
\hl{We learn the barrier running episodes in which the initial state is  picked uniformly at random, and the episode  finishes either after $100$ timesteps or if the agent receives damage (bumps into the object or goes out of bounds). At each step, the agent takes a random action over the presumed safe ones: those for which $B(s,a)=0$.
Figure \ref{fig:env-and-barrier} shows the learned barrier after $2\times 10^6$ episodes.}

\begin{figure}[t]
    \centering
    \includegraphics[width=.8\linewidth]{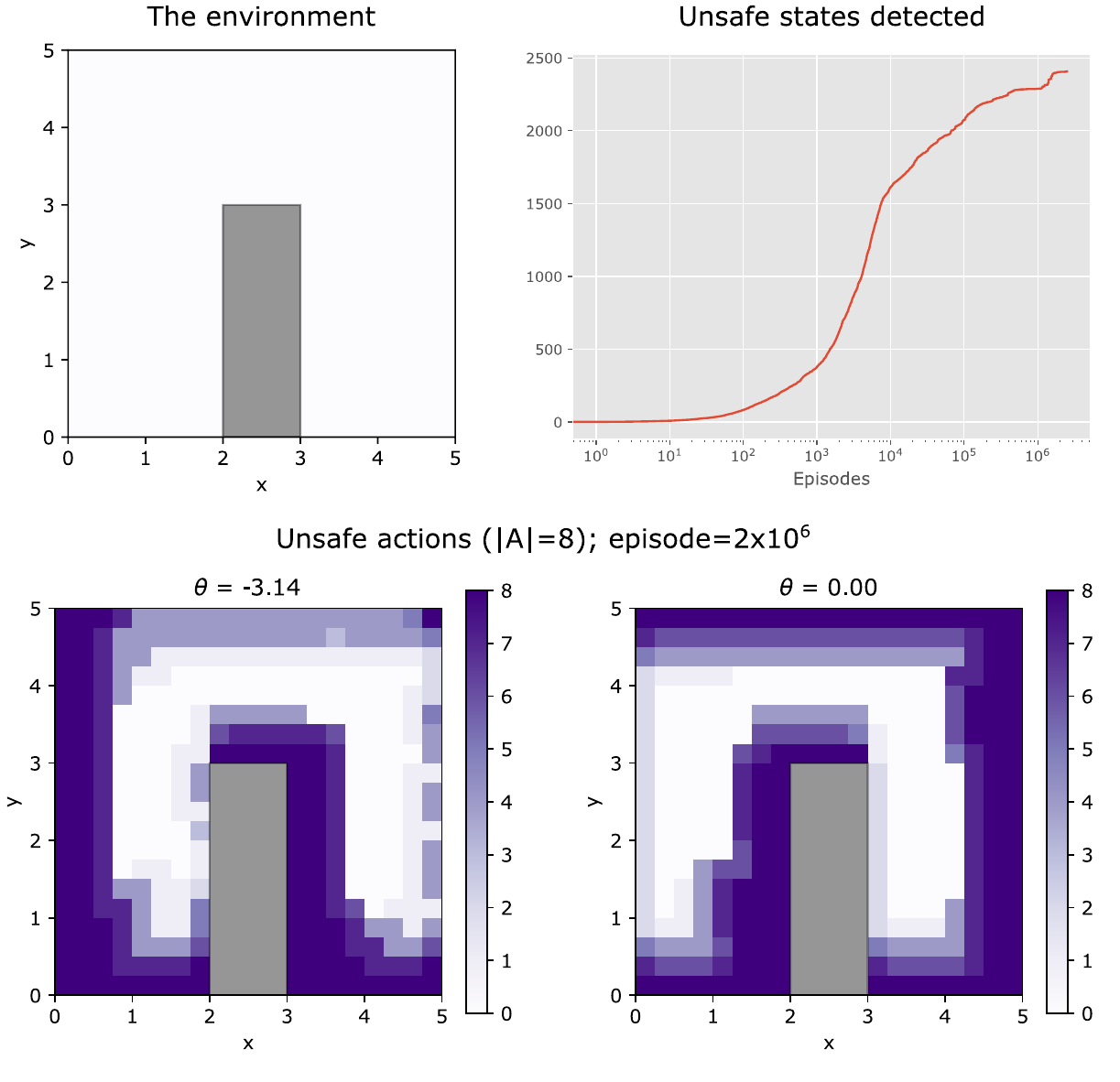}
    \caption{{Top-left: the environment. An agent navigates through it, and gets damage if it collides onto the gray obstacle or the outer walls. Top-right: learning curve for the unsafe states detected by the barrier learner (those for which $B(s,a)=-\infty~\forall a$). Bottom: visualization of the learned barrier after two million episodes. The left figure shows the barrier when the agent's angle is $\theta=-\pi$ (heading left); the right figure shows the barrier for an agent facing right. }}
    \label{fig:env-and-barrier}
\end{figure}


\subsubsection{Knowing the barrier accelerates learning}
\hl{In this task-oriented version of the previous environment, the goal is to reach a region of the space,  a circle of radius $0.5$ centered at coordinates $(4.5, 0.5)$. Each episode starts in position $(0.5,0.5)$ with heading $\theta=0$. At each time step the reward $R_t$ is minus the distance to the center of the goal. Bumping into the wall ends the episode with additional $-100$ reward. Reaching the goal ends the episode with additional $+100$ reward}

\hl{We compare a standard Q-learning agent versus its \emph{assured} counterpart: an agent that has previously learned the barrier of Fig. \ref{fig:env-and-barrier}, and that only takes (presumably) safe actions. At each step, the standard agent follows an $\epsilon$-greedy policy, while the assured agent follows a \emph{safe} $\epsilon$-greedy policy, only taking actions over the set of presumed-to-be safe actions ($B(s,a)=0$). At each step $t$ we observe a tuple $(S,A,S',R,D)$ and update the $Q-$function as $Q_{t+1}(S,A)= (1-\alpha)Q_t(S,A)+\alpha\left(R+\gamma\max_{a'}Q_t(S',A')\right)$. Both agents use $\epsilon=0.1$, $\alpha=0.1$, $\gamma=0.99$.}
\hl{Fig. \ref{fig:trajectories} shows $100$ greedy trajectories obtained with each agent during different stages of training. Assured Q-learning (on the left) rapidly learns to reach the goal, always succeeding. On the other hand, some trajectories of standard Q-learning (on the right) fail against the wall. The assured agent is conservative, leaving a bigger gap with the obstacle.}

\hl{Some quantitative metrics for these examples are shown in Fig. \ref{fig:boxplots}, that show the total reward and steps taken to reach the goal over these trajectories. Each column shows statistics using $100$ greedy trajectories as sample points. Each box spans the first ($q_1$) and third ($q_3$) quartile, with the median shown as a solid line. Whiskers above and below each box correspond to a confidence interval depicting the inter-quartile length $1.5\times(q_3-q_1)$, with outliers as black circles. Green triangles represent the mean values. The value $N$ indicates how many trajectories (out of $100$) reach the goal safely.}

\hl{Key takeaways from this experiment are that: \emph{i)} the assured agent always reaches the goal safely, while the standard one sometimes fails; \emph{ii)} when both reach the goal, the assured agent is typically faster; \emph{iii)} the assured agent is more \emph{conservative}, staying further away from the obstacles.}

\begin{figure}[t]
\centering
    \includegraphics[width=1\columnwidth]{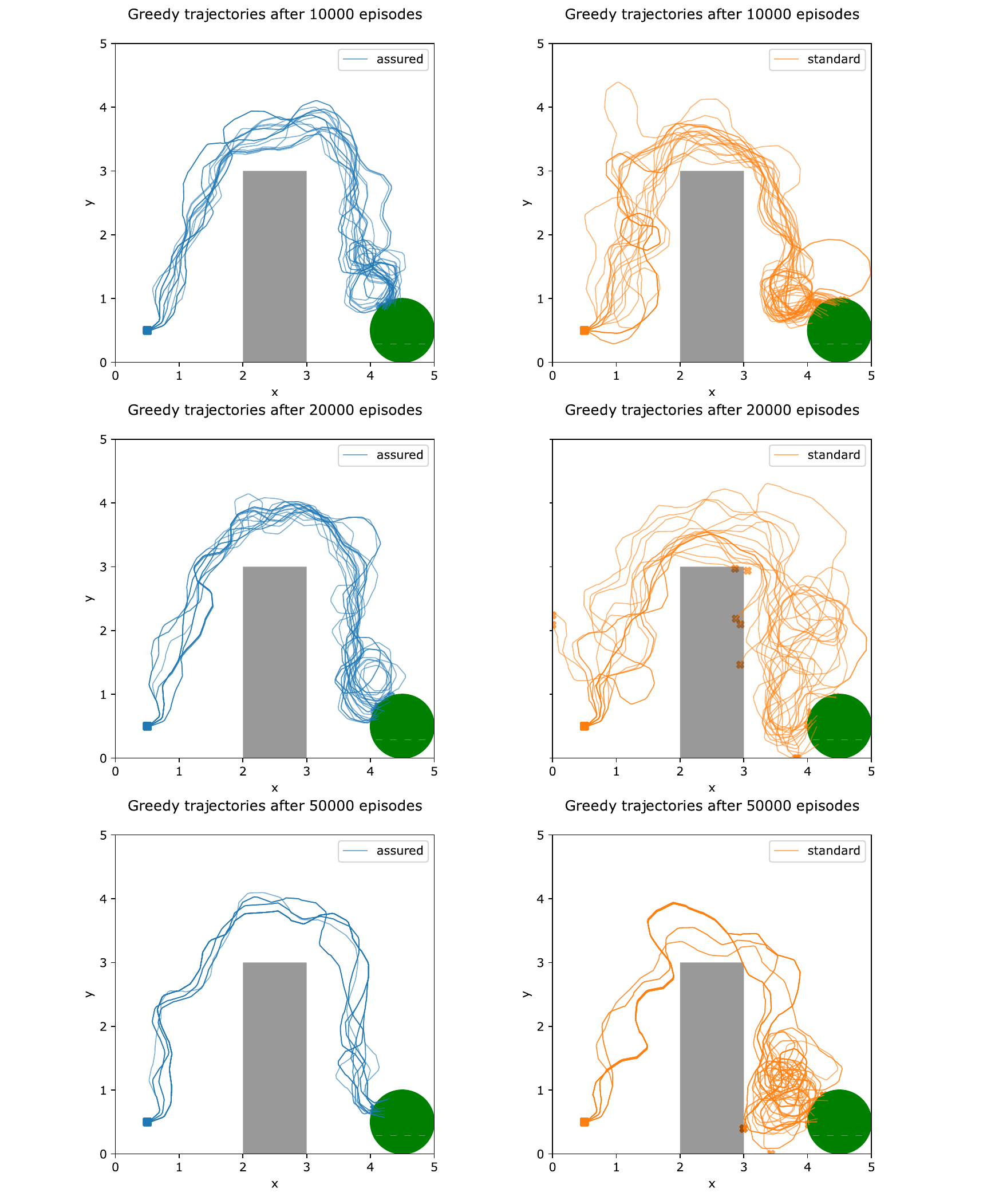}
     \caption{{Samples of greedy trajectories at different stages of training. The first column corresponds to the \emph{assured} agent, the second one to the standard Q-learning agent.The assured agent always reaches the goal while the standard agent still fails in some trajectories. Notice the conservative nature of the assured agent, who leaves a bigger margin between itself and the obstacle.}}
    \label{fig:trajectories}
\end{figure}



\begin{figure}[ht]
    \centering
    \begin{subfigure}[t]{.48\columnwidth}
         \centering
         \includegraphics[width=\textwidth]{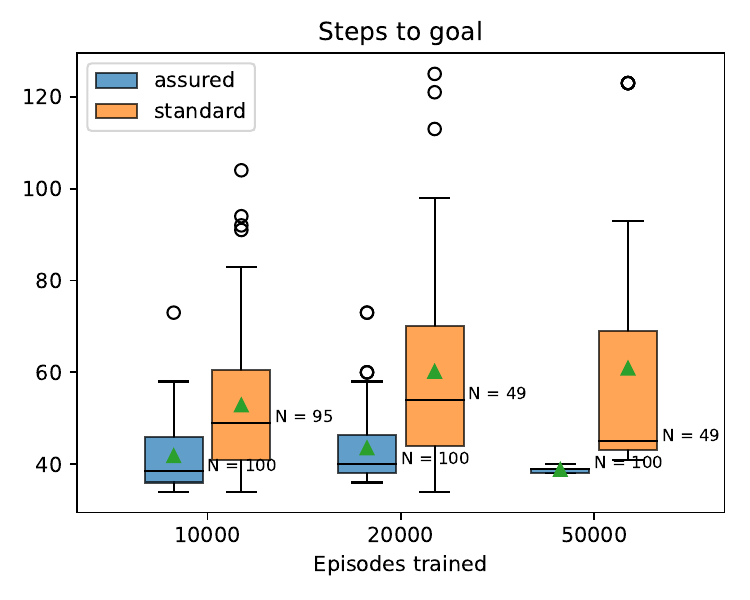}
     \end{subfigure}
     \begin{subfigure}[t]{.48\columnwidth}
         \centering
         \includegraphics[width=\textwidth]{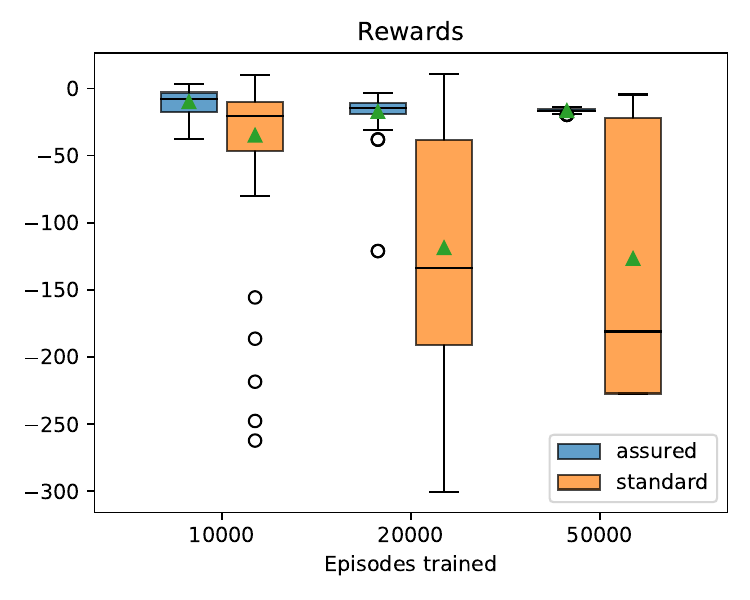}
     \end{subfigure}
     \caption{{Left: steps taken to reach the goal for both agents at different stages of training. The value $N$ shows how many trajectories (out of $100$) made it to the goal. The assured agent always reaches the goal ($N=100$), usually in fewer steps; the standard agent reaches the goal half of the time ($N=49$) after 20000 and 50000 episodes. Right: total reward along the trajectories. The assured agent's rewards are more tightly concentrated, and generally higher.
     }}
     \label{fig:boxplots}
\end{figure}
\section{Conclusions}
In this work we addressed the problem of learning to act safely in unknown environments. We made the case that learning safe policies is fundamentally different from learning optimal policies, and that it can be done separately and in a more efficient manner. By incorporating in the model a binary \emph{damage signal} that indicates constraint violations, we showed that identification of all unsafe actions (MABs) and state-action pairs (MDPs) is achieved in expected finite time with probability one guarantees. These results imply that the learner is not indefinitely exposed to damage, and could aid in the design of new algorithms that rapidly learn to act safely while jointly optimizing returns. \hl{Our experimental results for MDPs suggest that our algorithm obtains good performance in a continuous-state dynamical system, making it potentially useful for control applications.
}
\appendix
\subsection{Proof of Theorem \ref{thm:mab-assured-finite-time}}\label{app:mab-assured-unif}
{We will prove the following two inequalities:}
$${\mathbb{E}[T] \leq \frac{1}{\lambda \mu_{\text{low}}}\sum_{i=0}^{M-1}\frac{K-i}{M-i}\leq \frac{M + (K-M)\log(M+1)}{\lambda\mu_{\text{low}}}.}$$

\begin{proof}
    Each iteration of Algorithm \ref{alg:flawless-inspector} can be view as doing a Bernoulli trial with success rate being the probability of detecting an unsafe machine. {This probability evolves over time, and depends on the failure probability of each arm, and on the number of unsafe machines in the candidate set.}

{We can decompose the time $T$ as }
${T = T_1 + (T_2-T_1) + (T_3-T_2) + \ldots + T-T_{M-1},}
$
{where $T_i$ is the total time taken to detect the $i$-th machine. Then}
\begin{equation}\label{eq:time_diffs}
{\mathbb{E}[T] = \mathbb{E}[T_1] + \mathbb{E}[T_2-T_1] + \ldots + \mathbb{E}[T-T_{M-1}]}
\end{equation}
{We first bound $\mathbb{E}[T_1]$. When all $M$ malfunctioning machines are in play, $T_1$ is just the time taken to detect one of them. The probability of detecting the first machine is then}
\begin{align*}
{p_1}&{:=\mathbb{P}(\text{detect first machine})=\mathbb{P}(\text{get damage})}\\
&{=\sum_{a=1}^M\mathbb{P}(\text{get dmg}|\text{pull~}a)\mathbb{P}(\text{pull~}a)\geq \frac{\lambda}{K}\sum_{a=1}^M \mu_a}{\geq \frac{\lambda M\mu_{\text{low}}}{K}}
\end{align*}
{where in the first inequality we used the fact that the strategy $\psi$ is $\lambda$-soft, and in the second one the lower bound on $\mu_a$. Since $T_1\sim Geom(p_1)$, we thus have:}
$${\mathbb{E}[T_1] = \frac{1}{p_1}\leq \frac{1}{\lambda\mu_{\text{low}}}\cdot\frac{K}{M}}$$
{We now proceed to bound $\mathbb{E}[T_2-T_1]$. After detecting the first machine, now $p_2$ is the probability of observing damage when dealing with $M-1$ malfunctioning machines over a total of $K-1$ machines. Now $T_2-T_1\sim Geom(p_2)$ and we get:}
$$
{\mathbb{E}[T_2-T_1]\leq \frac{1}{\lambda\mu_\text{low}}\cdot\frac{K-1}{M-1}}
$$
{Proceeding similarly for the remaining stages, we end up bounding $\mathbb{E}[T]$ in \eqref{eq:time_diffs} as:}
$$
{\mathbb{E}[T] \leq \frac{1}{\lambda \mu_{\text{low}}}\sum_{i=0}^{M-1}\frac{K-i}{M-i}}
$$
{What remains to be shown is a further upper bound on this right hand side. To that end, we manipulate the sum:}
\begin{align*}
    &{\sum_{i=0}^{M-1}\frac{K-i}{M-i}}
    {= M  +(K-M)\sum_{i=0}^{M-1}\frac{1}{M-i}}\\
    &{= M+(K-M)\sum_{i=1}^M\frac{1}{i}}{\leq M + (K-M)\log(M+1)}
\end{align*}
{where on the last inequality we used the usual bound on the harmonic series $\sum_{i=1}^M\frac{1}{i}<\log(M+1)$.}
\end{proof}


\ifthenelse{\boolean{arxiv}}{
\subsection{Proof of Lemmas \ref{lemma:unsafe_sequences_escape}--\ref{lemma:sprt_detection_time}}\label{app:relaxed-lemmas}

\begin{figure}[t]
\begin{tikzpicture}[xscale=.9, yscale=.7]
    \input{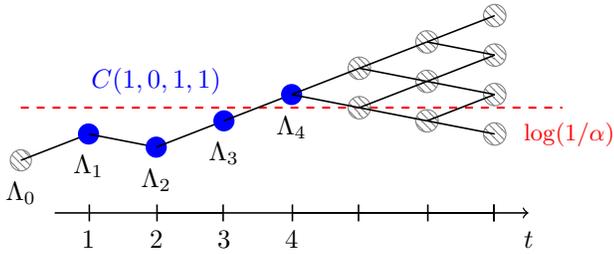}
\end{tikzpicture}
\caption{Visualization of the cylinder set $C(1,0,1,1)$. It contains all trajectories $\{x_\tau\}_{\tau=1,\ldots,\infty}$ that start with $1, 0, 1, 1.$ These 4 first points are shown in solid blue ($\Lambda_1$ through $\Lambda_4$). 
Each occurence of $d_t=1$ makes $\Lambda_i$ grow, while each $0$ makes it decrease.}
\label{fig:sprt_schematic}\end{figure}

In order to prove these lemmas we first review cylinder sets, which were first described by Wald \cite{wald1945}. Consider an infinite sequence $(d_\tau)_{\tau\geq 1}$ and define $C_\infty$ as the space of all such  sequences. The set $C(x_1,\ldots,x_t)$ is called a cylinder set of order $t$, and is defined as the subset of $C_\infty$ which collects sequences with  $d_1=x_1, \ldots, d_t=x_t$.
A cylinder set will be said to be of the \emph{unsafe type} if 

\begin{equation}\label{eq:define_cylinder1}
\Lambda_t=\log\frac{f_{\mu}\left(x_{1},\ldots,x_{t}\right)}{f_{\mu-\epsilon}\left(x_{1},\ldots,x_{t}\right)} \geq \log(1/\alpha)\,,
\end{equation}
and
\begin{equation}\label{eq:define_cylinder2}
\Lambda_\tau=\frac{f_{\mu}\left(x_{1},\ldots,x_{\tau}\right)}{f_{\mu-\epsilon}\left(x_{1}\ldots, x_{\tau}\right)}<\log(1/\alpha)\quad\forall \tau<t\;.
\end{equation}
The first condition guarantees that all sequences $\{d_\tau\}$ belonging to the unsafe cylinder $C(x_1,\ldots,x_t)$ will lead to the acceptance of $\mathcal{H}_1$. This will occur at time $t$, regardless of the future samples $d_\tau, \tau>t$. The second condition states that the threshold is not surpassed sooner than $t$. Conditions \eqref{eq:define_cylinder1}--\eqref{eq:define_cylinder2} imply that cylinders of different orders are disjoint sets, since $\Lambda_t$ exceeds the threshold for the first time at $t$ and no sooner than that, and this cannot be true for different values of $t$.

The union of all unsafe cylinder sets (of any order) defines the set of sequences that lead to deciding $H_1$.
 Let us  name this (disjoint) union as $Q_U$. 
%
%
%
%
%
%
%
 Let us also define  $Q_S$  as the complement of $Q_U$
\begin{equation}
    Q_S = Q_U^\complement\,.
\end{equation}
$Q_S$ is the set of all sequences for which $\Lambda_t<\log(1/\alpha)$ for all $t$. Since the sets are complementary, it holds that
\begin{equation}
P_{\mu_a}(Q_U+Q_S)=1 \quad\forall \mu_a\in[0, 1]\,,
\label{eq:pQ1S}
\end{equation}
with $P_{\mu_a}(Q)$ being the probability of any subset $Q$ of $C_\infty$, under the assumption that the sequence is generated by i.i.d. Bernoulli random variables with parameter $\mu_a$.

{We now turn to the proof of Lemmas  \ref{lemma:unsafe_sequences_escape}--\ref{lemma:sprt_detection_time} by restating them in terms of the cylinder sets just described. \hl{The advantage of working with cylinder sets (as will become evident shortly) is that unsafe cylinder sets are disjoint, making computations of probabilities much simpler.} }

\begin{customlem}{2}[Restated]\label{lemma:unsafe_sequences_escape_re}
Let $P_{\mu_a}(Q)$ be the probability of $Q\subset C_\infty$, under the assumption that the sequence of data is generated by i.i.d. Bernoulli random variables of parameter $\mu_a$.
Then the following statements hold:
\begin{enumerate}[label=\roman*)]
    \item Under $\mathcal{H}_1$ ($\mu_a \geq \mu$), i.i.d. sequences produced by such a distribution are correctly classified almost surely, that is
\begin{equation}
    P_{\mu_a}(Q_U)=1\,.
    \label{eq:p1Q1}
\end{equation}

\item Under $\mathcal{H}_0$ ($\mu_a\leq\mu-\epsilon$), the probability that a trajectory never rises above $\log(1/\alpha)$ is greater than $1-\alpha$, that is

$$
P_{\mu_a}(Q_S)\geq 1-\alpha\,.
$$

\end{enumerate}
\end{customlem}

\begin{proof}
We start by showing $\emph{i)}$.\\ For a given sequence $(d_1,\ldots,d_t,\cdots)$ we have
\begin{equation}
\Lambda_t = \log\prod_{i=1}^t\frac{f_\mu(d_i)}{f_{\mu-\epsilon}(d_i)}=\sum_{i=1}^t\log\frac{f_\mu(d_i)}{f_{\mu-\epsilon}(d_i)}\,.
\label{eq:log_likelihood}
\end{equation}
Dividing by $t$:
\begin{equation}
\frac{\Lambda_t}{t}=\frac{1}{t}\sum_{i=1}^t\log\frac{f_\mu(d_i)}{f_{\mu-\epsilon}(d_i)}\,.
\label{eq:lambda-sum}\end{equation}
By the Law of large numbers, as $t$ grows \eqref{eq:lambda-sum} converges to the expectation of the right hand side under $\mu_a$:
\begin{align}
\frac{\Lambda_t}{t}&\rightarrow \mathbb{E}_{\mu_a}\left[\log\frac{f_\mu(d)}{f_{\mu-\epsilon}(d)}\right]\nonumber\\
&=\mu_a \left(\log\frac{\mu}{\mu-\epsilon}-\log\frac{1-\mu}{1-(\mu-\epsilon)}\right)+\log\frac{1-\mu}{1-(\mu-\epsilon)}  \nonumber\\
&\geq\mu \left(\log\frac{\mu}{\mu-\epsilon}-\log\frac{1-\mu}{1-(\mu-\epsilon)}\right)+\log\frac{1-\mu}{1-(\mu-\epsilon)}  \nonumber\\
&=\text{kl}(\mu,\mu-\epsilon)>0\,,\label{eq:DKL} 
\end{align}
where  $\text{kl}$ is the Kullback-Leibler divergence between two Bernoulli distributions of parameters $\mu$ and $\mu-\epsilon$. The inequality holds because $\mu_a>\mu$ and the expression in brackets is positive. It follows that
$$
\lim_{t\rightarrow\infty}\Lambda_t = \infty,\quad a.s.
$$
Therefore there must exist a positive integer $t$ for which $\Lambda_t$ exceeds $\log (1/\alpha)$, so that the sequence $(d_\tau)_{\tau\geq 1}$ belongs to an  unsafe cylinder of order $t$ and thus $(d_\tau)_{\tau\geq 1}\in Q_U$, which is what we wanted to show.\qed
\\

Next, we prove \emph{ii)}, first, by showing that $P_{\mu-\epsilon}(Q_S)\geq 1-\alpha$ for the limiting case $\mu_a=\mu-\epsilon$, and then by generalizing it for $\mu_a\leq \mu-\epsilon$.
For $\mu_a=\mu-\epsilon$, the core of the proof relies in showing that when the threshold for declaring $\mathcal{H}_1$ is set to $\log(1/\alpha)$, then:
\begin{equation}
 P_\mu(Q_U) \geq \frac{1}{\alpha} P_{\mu-\epsilon} (Q_U)\; . \label{eq:pQ1}
\end{equation}

Once we prove \eqref{eq:pQ1}, we use the fact that $P_\mu(Q_U)=1$ (see  \eqref{eq:p1Q1}) and obtain:
\begin{equation}
P_{\mu-\epsilon}(Q_U)\leq \alpha \Rightarrow P_{\mu-\epsilon}(Q_S) \geq 1- \alpha\,, 
\label{eq:P0Qs}    
\end{equation}
as desired. To prove \eqref{eq:pQ1} we start by decomposing $Q_U$ as the union across time $t$ of the union of all unsafe cylinders of order $t$, that is:  $$
Q_U = \bigcup_{t=1}^\infty\bigcup_{(x_1,\ldots,x_t) \in \mathcal X_t} C(x_1,\ldots,x_t)\,,
$$
where $\mathcal X_t$ collects the tuples  $(x_1,\ldots,x_t)$ that define  unsafe cylinders of order $t$, i.e., those  satisfying \eqref{eq:define_cylinder1} and \eqref{eq:define_cylinder2}.  
By construction all the cylinder sets are disjoint, hence: 
\begin{align}
P_\mu(Q_U)&=\sum_{t=1}^\infty\sum_{(x_1,\ldots,x_t) \in \mathcal X_t} P_\mu\left(C(x_1,\ldots,x_t)\right)\label{eq:cyl-ineq1}\\
&=\sum_{t=1}^\infty\sum_{(x_1,\ldots,x_t) \in \mathcal X_t} f_\mu(x_1,\ldots,x_t)\label{eq:cyl-ineq2}\\
&\geq \sum_{t=1}^\infty\sum_{(x_1,\ldots,x_t) \in \mathcal X_t} \frac{1}{\alpha} f_{\mu-\epsilon}(x_1,\ldots,x_t)\label{eq:cyl-ineq3}\\
&=\frac{1}{\alpha}\sum_{t=1}^\infty\sum_{(x_1,\ldots,x_t) \in \mathcal X_t} P_{\mu-\epsilon}(C(x_1,\ldots,x_t))\label{eq:cyl-ineq4}\\
&=\frac{1}{\alpha}P_{\mu-\epsilon}(Q_U)\,,\label{eq:cyl-ineq5}
\end{align}
where the second identity follows from marginalizing over future trajectories (e.g.: in Fig. \ref{fig:sprt_schematic} the marginalization would be done over all the gray trajectories after $\Lambda_4$),
and the inequality holds since $\mathcal X_t$ is defined so that \eqref{eq:define_cylinder1} holds.

Now that we have \eqref{eq:pQ1},  \eqref{eq:P0Qs} follows immediately, and we move to the second part of the proof. We need to prove that $P_{\mu_a}(Q_S)\geq P_{\mu-\epsilon}(Q_S)$,  or equivalently $P_{\mu_a}(Q_U)\leq P_{\mu-\epsilon}(Q_U)$, when $\mu_a< \mu-\epsilon$. This essentially means that the probability of (incorrectly) classifying a safe machine as unsafe decreases as $\mu_a$ lowers, which is intuitively true.

To show this, notice that the log-likelihood ratio $\Lambda_t$ can be put in terms of $k$, the number of outcomes of $D_\tau=1$ over $t$ total pulls:

\begin{equation}
    \Lambda_t = k\log\frac{\mu}{\mu-\epsilon}+(t-k)\log\frac{1-\mu}{1-(\mu-\epsilon)}\,,\label{eq:lambda-as-k}
\end{equation}
and 
$k\sim\mathrm{Binomial}(t, \mu_a)$. 
What we want to show is that as $\mu_a$ lowers, the probability that $\Lambda_t\geq\log(1/\alpha)$ lowers as well, which is to be expected since $k$ will likely be lower. Assume then we declare $\mathcal{H}_1$, so departing from \eqref{eq:lambda-as-k} we can write:
$$\Lambda_t=k\lambda_0+(k-t)\lambda_1\geq\log(1/\alpha)\;,$$
where $\lambda_0=\log\frac{\mu}{\mu-\epsilon}$ and $\lambda_1=\log\frac{1-\mu+\epsilon}{1-\mu}$ are both positive. Since $\log(1/\alpha)$ is also positive we have
$k/t\geq\frac{\lambda_1}{\lambda_0+\lambda_1}=\left(1+\frac{\lambda_0}{\lambda_1}\right)^{-1}$. From the definition of $\lambda_0$ and $\lambda_1$ it yields:
\begin{align}
 \frac{k}{t}&\geq\left(1+\frac{\lambda_0}{\lambda_1}\right)^{-1}=\left(1+\frac{\log\frac{\mu}{\mu-\epsilon}}{\log\frac{1-\mu+\epsilon}{1-\mu}}\right)^{-1}\nonumber\\
 &\geq \left(1+\frac{\frac{\mu}{\mu-\epsilon}-1}{1-\frac{1-\mu}{1-\mu+\epsilon}}\right)^{-1}=\left(1+\frac{\frac{\epsilon}{\mu-\epsilon}}{\frac{\epsilon}{1-\mu+\epsilon}}\right)^{-1} \label{eq:log_ineq}\\
 &=\left(\frac{1}{\mu-\epsilon}\right)^{-1}=\mu-\epsilon > \mu_a\,, \label{eq:ktmua}
\end{align}
where the inequality in \eqref{eq:log_ineq} follows from the usual bounds of the logarithm $1-1/x\leq \log(x)\leq x-1$. Rearranging \eqref{eq:ktmua} we get $k-t\mu_a>0$. Then, the derivative of $f_{\mu_a}(d_1,\ldots,d_t)$ takes the form:
\begin{align}
&\frac{d}{d{\mu_a}} f_{\mu_a}(d_1,\ldots,d_t)=\frac{d}{d\mu_a}\binom{t}{k}\mu_a^k(1-\mu_a)^{t-k}\nonumber\\
&=\binom{t}{k} \left(k \mu_a^{k-1}(1-\mu_a)^{t-k}+\mu_a^k(-1)(t-k)(1-\mu_a)^{t-k-1}\right)\nonumber\\
&=\binom{t}{k}\mu_a^{k-1}(1-\mu_a)^{t-k-1}\left(k-t\mu_a\right)> 0\,,\label{eq:derivada}
\end{align}
where the last inequality stems from \eqref{eq:ktmua}. \hl{With \eqref{eq:derivada} we have that $f_{\mu_a}(\cdot) < f_{\mu-\epsilon}(\cdot)$ whenever we declare $\mathcal{H}_1$.
Then, going from \eqref{eq:cyl-ineq5} to \eqref{eq:cyl-ineq3} we can further lower bound the right-hand side of \eqref{eq:cyl-ineq3} by $\frac{1}{\alpha}f_{\mu_a}(x_1,\ldots,x_t)$ and arrive at $\frac{1}{\alpha}P_{\mu-\epsilon}(Q_U) \geq \frac{1}{\alpha}P_{\mu_a}(Q_U)$. Using the inequality in \eqref{eq:P0Qs} finishes the proof.
}
\end{proof}




\begin{customlem}{3} For a fixed arm $a$ of parameter $\mu_a$, consider the sequential probability ratio test defined by \eqref{eq:hyp-test}--\eqref{eq:log-likelihood-threshold}, where $\mu$, $\epsilon$ and $\alpha$ are given. Then, if the alternative $\mathcal{H}_1$ is true, the test is expected to terminate after $T$ steps, with \begin{equation}
    \mathbb{E}[T]\leq 1+\frac{\log\left(1/\alpha\right)}{\emph{kl}(\mu,\mu-\epsilon)}\;,
\end{equation}
where $\emph{kl}(\mu,\mu-\epsilon)$ is the Kullback-Leibler divergence between Bernoulli distributions
$$
\emph{kl}(\mu,\mu-\epsilon)=\mu\log\frac{\mu}{\mu-\epsilon}+\left(1-\mu\right)\log\frac{1-\mu}{1-\mu+\epsilon}\, .
$$
\end{customlem}
\begin{proof} 
Let $T$ be the smallest integer for which the test leads to the acceptance of $H_1$. Such variable is well defined and finite as a result of Lemma \ref{lemma:unsafe_sequences_escape}. Consider the sequence of damage up to time t $(d_t)_{t=1}^T$, and let $\mathbb{E}_{\mu_a}[\cdot]$ be the expectation with respect to the true distribution of the damage data (that is, $D\sim\text{Bernoulli}(\mu_a)$). Then
\begin{align}
    \mathbb{E}_{\mu_a}[\Lambda_T]&=\mathbb{E}_{\mu_a}\left[\sum_{t=1}^T \log\frac{f_\mu(d_t)}{f_{\mu-\epsilon}(d_t)}\right]\nonumber\\
    &=\mathbb{E}[T]\mathbb{E}_{\mu_a}\left[\log\frac{f_\mu(d)}{f_{\mu-\epsilon}(d)}\right]=\mathbb{E}[T]R_a\;,\label{eq:nbounded_1}
\end{align}
with $R_a=\mathbb{E}_{\mu_a}\left[\log\frac{f_\mu(d)}{f_{\mu-\epsilon}(d)}\right]$. Here we used Wald's identity \cite{wald-identity} in the second equality. Furthermore
%
%
%
%
\begin{align}
\mathbb{E}_{\mu_a}\left[\Lambda_T\right] &= \mathbb{E}_{\mu_a}\left[\Lambda_{T-1}+\log\frac{f_\mu(d_T)}{f_{\mu-\epsilon}(d_T)}\right]\nonumber\\
&= \mathbb{E}_{\mu_a}\left[\Lambda_{T-1}\right]+R_a\leq \log(1/\alpha) + R_a\,. \label{eq:nbounded_2}
\end{align}
Combining \eqref{eq:nbounded_1} and \eqref{eq:nbounded_2}:
$$
\mathbb{E}[T]\leq  1+\frac{\log\left(1/\alpha\right)}{R_a}\leq  1+\frac{\log\left(1/\alpha\right)}{\text{kl}(\mu,\mu-\epsilon)}\,,
$$
in virtue of $R_a \geq \text{kl}(\mu,\mu-\epsilon)$ for $\mu_a>\mu-\epsilon$, as was shown in \eqref{eq:DKL}. 
\end{proof}

}{} 

\subsection{Proof of Theorem \ref{thm:mab-relaxed-finite-stop-time}}\label{app:mab-relaxed-finite-time}
By lemma 3, we have
$$
    \expc[E_T]=\sum_{a=1}^M\expc[T_a]\leq {M}\left(1+\frac{\log(1/\alpha)}{\text{kl}(\mu,\mu-\epsilon)}\right)\,.
$$
Then by Wald's identity~\cite{wald-identity}, we have
$$\expc [E_T]=\expc\lhp \sum_{t=1}^T\expc[\mathds{1}\{\mu_{A_t}>\mu\}]\rhp\geq \expc\lhp T\frac{\lambda}{K-M+1}\rhp\,,$$
where the second inequality is due to the fact that before $T$, the probability of sampling an unsafe machine with a $\lambda$-soft strategy is at least $\lambda/(K-M+1)$. Recalling the upper bound for $\expc[E_T]$ in Theorem \ref{thm:mab-relaxed-finite-time}, one obtains
$$\mathbb{E}[T]\leq \frac{{M}(K-M+1)}{\lambda}\left(1+\frac{\log(1/\alpha)}{\text{kl}(\mu,\mu-\epsilon)}\right)\,.$$

\subsection{Proof of Theorem \ref{thm:sample_comp_barrier_learner}}\label{app:pf_sample_comp_barrier_learner}

We prove Theorem \ref{thm:sample_comp_barrier_learner} in three steps:
\subsubsection{Reformulation of Algorithm \ref{alg:barrier_learner}}
We reformulate Algorithm \ref{alg:barrier_learner} as in Algorithm \ref{alg:barrier_learner_re}.

\begin{algorithm}[!ht]
\KwData{Constrained Markov Decision Process $\mathcal{M}$}
Initialize $B^{(0)}(s,a)=0, \forall (s,a)\in\mathcal{S}\!\times\! \mathcal{A}$\\
\For{$\tau=0,1,\cdots$}{
    Draw $(s_\tau,a_\tau)\sim \mathrm{Unif}(\mathcal{S}\!\times\! \mathcal{A})$\\
    Sample transition $(s_\tau,a_\tau, s'_\tau,d_\tau)$ according to $\prob\lp S_1=s'_\tau,D_1=d_\tau|S_0=s_\tau,A_0=a_\tau\rp$\\
    \uIf{$B^{(\tau)}(s_\tau,a_\tau)\neq -\infty$}{$B^{(\tau+1)}\la\texttt{barrier\_update(}B^{(\tau)},s_\tau,a_\tau,s'_\tau,d_\tau\texttt{)}$}
    \Else{$B^{(\tau+1)}\la B^{(\tau)}$}
}
\caption{Barrier Learner Algorithm Reformulated}
\label{alg:barrier_learner_re}
\end{algorithm}

In the reformulated Algorithm \ref{alg:barrier_learner_re}, the sampling process is independent of $B$-function: At each iteration $\tau$, an $(s_\tau,a_\tau)$ pair is drawn uniformly from $\mathcal{S}\!\times\!\mathcal{A}$ and then a transition $(s_\tau,a_\tau,s'_\tau,d_\tau)$ is sampled according to the MDP, and the algorithm decides whether to accept such a sample depending on the value of $B(s_\tau,a_\tau)$. When we restrict ourselves to the trajectory of samples that are accepted, i.e. $$\{(s_\tau,a_\tau,s'_\tau,d_\tau): B^{(\tau)}(s_\tau,a_\tau)\neq -\infty,\ \tau=0,1,\cdots\}\,,$$
this trajectory is also a sampled trajectory of original Algorithm \ref{alg:barrier_learner}. More importantly, for such a trajectory, the probability it appears in original Algorithm \ref{alg:barrier_learner} is the same as the probability it appears as the accepted trajectory in Algorithm \ref{alg:barrier_learner_re}. With that, we define 
\be T_r:=\min\{\tau:B^{(\tau)}=B^*\}\,,\label{eq_tr_barrier_learner_re}\ee i.e. the earliest time when Algorithm \ref{alg:barrier_learner_re} detects all unsafe state-action pairs, then we have
\be
    \expc [T]=\expc\lhp \sum_{\tau=1}^{T_r}\mathds{1}\{B^{(\tau)}(s_\tau,a_\tau)\neq -\infty\}\rhp\,,\label{eq_sample_comp_re}
\ee
where $T$ is the earliest time when Algorithm \ref{alg:barrier_learner} detects all unsafe state-action pairs, as defined in Theorem \ref{thm:sample_comp_barrier_learner}. Expectations are taken with respect to the respective sampling processes of Algorithm \ref{alg:barrier_learner} and \ref{alg:barrier_learner_re}, which are different. With \eqref{eq_sample_comp_re}, it suffices to analyze the expected detection time of Algorithm \ref{alg:barrier_learner_re}. 

\subsubsection{Construction of modified algorithm}
As discussed in Section \ref{ssec:3.3}, an $(s,a)$ pair is unsafe if either it causes damage immediately or it transitions to an unsafe state with non-zero probability. If only the latter happens for such an unsafe $(s,a)$, then to be able to declare it unsafe, one must have already declared one of its succeeding states unsafe. To make such intuition precise, we recursively define disjoint subsets $\mathcal{S}_l,l=1,2,\cdots$ of the state space $\mathcal{S}$ as follow,
\begin{align}
    \mathcal{S}_1&:=\lb s\in\mathcal{S}: \prob_\pi\lp D_1=1|S_0=s\rp>0,\forall \pi\rb\,,\nonumber\\
    \mathcal{S}_l&:=\begin{cases}
        \mathcal{S}_l',& \mathcal{S}_l'\neq \emptyset\\
        \mathcal{S}\setminus{\bigcup_{k<l}\mathcal{S}_k}, &\mathcal{S}_l'=\emptyset
    \end{cases}\label{eq:def_unsafe_states_sets}\,,\\
    &\!\!\!\!\!\!\text{where}\nonumber\\
    \mathcal{S}'_l&=\lb s\in\mathcal{S}\setminus{\bigcup_{k<l}\mathcal{S}_k}:\prob_\pi\lp S_1\in\bigcup_{k<l}\mathcal{S}_k|S_0=s\rp>0,\forall \pi\rb\nonumber 
\end{align}

Observe that for any  finite MDP, its \emph{lag}  $L:=\max\{l>0:\mathcal{S}_{l+1}\neq\emptyset\}$ is finite. Following the definition \eqref{eq:def_unsafe_states_sets}, $\{\mathcal{S}_l,l=1,\cdots,L,L+1\}$ is a partition of $\mathcal{S}$. Any state $s_0\in\bigcup_{l=1}^L\mathcal{S}_l$ is unsafe because starting from $s_0$ and  under any policy $\pi$,  the MDP eventually reaches a state in $\mathcal{S}_1$ with non-zero probability, then causes damage. Furthermore, any state $s_0\in\mathcal{S}_{L+1}:=\mathcal{S}\setminus{\bigcup_{l=1}^L\mathcal{S}_l}$ is safe since $\mathcal{S}'_{L+1}=\emptyset$ implies that there exists $a_0\in\mathcal{A}$ such that $\prob( S_1\in\bigcup_{k<L+1}\mathcal{S}_k|S_0=s_0,A_0=a_0)=0$, i.e. taking action $a_0$ keeps the MDP away from the unsafe states in $\bigcup_{l=1}^L\mathcal{S}_l$. However, we note that a safe state can have unsafe actions and they can be detected by the Barrier Learner Algorithm. 
More importantly, the unsafe state sets $\mathcal{S}_l,l=1,2,\cdots,L$ satisfies that if all states in $\bigcup_{k<l}\mathcal{S}_k$ has been declared unsafe, any state-action pair in $\{(s,a):s\in\mathcal{S}_l,a\in\mathcal{A}\}$, when sampled, can be declared unsafe with non-zero probability. Base on this property, we construct an modified barrier learning algorithm using the prior information on $\mathcal{S}_l,l=1,2,\cdots,L$. 
The modified algorithm is described in Algorithm \ref{alg:barrier_learner_mod}.
\begin{algorithm}[!ht]
\KwData{Constrained Markov Decision Process $\mathcal{M}$, $\mathcal{S}_l,l=1,2,\cdots,L,L+1$ defined for $\mathcal{M}$}
Initialize $\hat{B}^{(0)}(s,a)=0, \forall (s,a)\in\mathcal{S}\!\times\! \mathcal{A}$\\
Initialize $l=1$\\
\For{$\tau=0,1,\cdots$}{
    Draw $(s_\tau,a_\tau)\sim \mathrm{Unif}(\mathcal{S}\!\times\! \mathcal{A})$\\
    Sample transition $(s_\tau,a_\tau, s'_\tau,d_\tau)$ according to $\prob\lp S_1=s'_\tau,D_1=d_\tau|S_0=s_\tau,A_0=a_\tau\rp$\\
    \uIf{$\hat{B}^{(\tau)}(s_\tau,a_\tau)\neq -\infty$ and $s_\tau\in\mathcal{S}_{l}$}{$\hat{B}^{(\tau+1)}\leftarrow\texttt{barrier\_update(}\hat{B}^{(\tau)},s_\tau,a_\tau,s'_\tau,d_\tau\texttt{)}$}
    \Else{$\hat{B}^{(\tau+1)}\la \hat{B}^{(\tau)}$}
    \If{$\hat{B}^{\tau+1}(s,a)=-\infty,\forall s\in\mathcal{S}_{l},a\in\mathcal{A}$}{$l\la l+1$}
}
\caption{Modified Barrier Learner Algorithm with Prior Information on $\mathcal{S}_l,l=1,2,\cdots,L,L+1$}
\label{alg:barrier_learner_mod}
\end{algorithm}

The modified algorithm is similar to Algorithm \ref{alg:barrier_learner_re} but it learns $\mathcal{S}_l,l=1,2,\cdots,L$ in order: At the beginning ($l=1$), it only declares $(s,a)$ pairs associated with $\mathcal{S}_1$ unsafe until all states in $\mathcal{S}_1$ are declared unsafe, after which $l$ increases to $2$. Now the algorithm only declares $(s,a)$ pairs associated with $\mathcal{S}_2$ unsafe. Finally after all states in $\bigcup_{l=1}^L\mathcal{S}_l$ are declared unsafe ($l=L+1$), the algorithm starts to learn the unsafe transitions for safe states in $\mathcal{S}_{L+1}$. Similarly, we define 
\be \hat{T}_r:=\min\{\tau:\hat{B}^{(\tau)}=B^*\}\,,\label{eq_tr_barrier_learner_mod}\ee i.e. the earliest time when Algorithm \ref{alg:barrier_learner_mod} detects all unsafe state-action pairs.
Since the modified algorithm is more restrictive on declaring unsafe state-action pair, the expected detection time of the modified algorithm is no less than that of Algorithm \ref{alg:barrier_learner_re}, as stated in the following claim.

\begin{claim}\label{clm:s_order_of_b_learner}
    Given an MDP, let $T_r$ and $\hat{T}_r$ be the earliest times when Algorithm \ref{alg:barrier_learner_re} and Algorithm \ref{alg:barrier_learner_mod}, detect all unsafe state-action pairs in this MDP, respectively, as defined in \eqref{eq_tr_barrier_learner_re} and \eqref{eq_tr_barrier_learner_mod}.
    Then, $\expc\lhp \sum_{\tau=1}^{\hat{T}_r}\mathds{1}\{\hat{B}^{(\tau)}(s_\tau,a_\tau)\neq -\infty\}\rhp$
 is lower-bounded by $\expc\lhp \sum_{\tau=1}^{T_r}\mathds{1}\{B^{(\tau)}(s_\tau,a_\tau)\neq -\infty\}\rhp\qquad\qquad$, where expectations are w.r.t.  $\{(s_\tau,a_\tau,s'_\tau,d_\tau),\tau=0,1,\cdots\}$.
    
\end{claim}
\begin{proof}
    Condition on a fixed sample trajectory $\mathcal{T}:=\{(s_\tau,a_\tau,s'_\tau,d_\tau)\}_{\tau=0}^\infty\,,$ the functions $B^{(\tau)}$ and $\hat{B}^{(\tau)}$ are deterministic. We have
    \begin{align}
        B^{(\tau)}(s,a)|\mathcal{T}\leq \hat{B}^{(\tau)}(s,a)|\mathcal{T},\forall \tau\!\geq\!0, \forall (s,a)\!\in\!\mathcal{S\!\times\! \mathcal{A}}\,,\label{eq_unsafesets_inclusion}
    \end{align}
    proved by induction: we have $$B^{(0)}(s,a)|\mathcal{T}\leq \hat{B}^{(0)}(s,a)|\mathcal{T}, \forall (s,a)\in\mathcal{S\!\times\! \mathcal{A}}\,,$$
    at initialization. Suppose that \eqref{eq_unsafesets_inclusion} holds at time $\tau=t$. If $(s_t,a_t,s'_t,d_t)$ is accepted by both algorithms, or rejected by both algorithms, we have 
    \be B^{(t+1)}(s,a)|\mathcal{T}\leq \hat{B}^{(t+1)}(s,a)|\mathcal{T},\forall (s,a)\in\mathcal{S\!\times\! \mathcal{A}}\,.\label{eq_unsafesets_inclusion_ind}\ee
    
    If $(s_t,a_t,s'_t,d_t)$ is rejected by Algorithm \ref{alg:barrier_learner_re} and accepted by Algorithm \ref{alg:barrier_learner_mod}, then we have
    $
        B^{(t)}(s_t,a_t)=-\infty,\hat{B}^{(t)}(s_t,a_t)=0\,.
    $
    \eqref{eq_unsafesets_inclusion_ind} still holds, since only $\hat{B}^{(t+1)}(s_t,a_t)$ is updated to either $0$ or $-\infty$. If $(s_t,a_t,s'_t,d_t)$ is accepted by Algorithm \ref{alg:barrier_learner_re} and rejected by Algorithm \ref{alg:barrier_learner_mod}, then we have
    $$
        B^{(t)}(s_t,a_t)=\hat{B}^{(t)}(s_t,a_t)=0\,.
    $$
    Inequality \eqref{eq_unsafesets_inclusion_ind} still holds, since only $B^{(t+1)}(s_t,a_t)$ is updated to either $0$ or $-\infty$. Now from \eqref{eq_unsafesets_inclusion}, we immediately know that condition on the fixed sample trajectory $\mathcal{T}$,
    $$
        \mathds{1}\{B(s_\tau,a_\tau)\neq -\infty\}\leq \mathds{1}\{\hat{B}(s_\tau,a_\tau)\neq -\infty\},\forall \tau=0,1,\cdots
    $$
    Notice that $T_r$ ($\hat{T}_r$) is the minimum $t$ such that $B^{(t)}$ ($\hat{B}^{(t)}$) becomes exactly the same as $B^*$. Then
    $
        T_r|\mathcal{T}\leq \hat{T}_r|\mathcal{T}\,.
    $
    Therefore one have, by law of total expectation,
    \begin{align*}
        &\;\expc\lhp \sum_{\tau=0}^{T_r}\mathds{1}\{B(s_\tau,a_\tau)\neq -\infty\}\rhp\\
        =&\;\expc\lhp\expc\lhp \left.\sum_{\tau=0}^{T_r}\mathds{1}\{B(s_\tau,a_\tau)\neq -\infty\}\rv\mathcal{T}\rhp\rhp\\
        \leq&\; \expc\lhp\expc\lhp \left.\sum_{\tau=0}^{\hat{T}_r}\mathds{1}\{\hat{B}(s_\tau,a_\tau)\neq -\infty\}\rv\mathcal{T}\rhp\rhp\\
        =&\;\expc\lhp \sum_{\tau=0}^{\hat{T}_r}\mathds{1}\{\hat{B}(s_\tau,a_\tau)\neq -\infty\}\rhp
    \end{align*}
    \vspace{-10pt}
\end{proof}
\subsubsection{Expected detection time of modified algorithm}
Lastly, we prove the following Theorem regarding the expected detection time of the modified algorithm.
\begin{theorem}\label{thm:sample_comp_barrier_learner_mod}
    Given an MDP with $\mathcal{S}_l,l=1,2,\cdots,L,L+1$ defined as in \eqref{eq:def_unsafe_states_sets}. Assume that exists $\rho>0$ such that the transition probability $\prob(S_1=s'|S_0=s,A_0=a)$, is either zero or lower bounded by $\rho$, for all $s,s'\in\mathcal{S},a\in\mathcal{A}$. 
    Let $\hat{T}_r$ be earliest time when Algorithm \ref{alg:barrier_learner_mod} detects all unsafe state-action pairs as defined in \eqref{eq_tr_barrier_learner_mod}, then we have
    $$
        \expc\lhp \sum_{\tau=1}^{\hat{T}_r}\mathds{1}\{\hat{B}^{(\tau)}(s_\tau,a_\tau)\neq -\infty\}\rhp\leq \frac{|\mathcal{S}||\mathcal{A}|}{\rho}\sum_{l=1}^{L+1}\lp\sum_{k=1}^{|\mathcal{S}_l||\mathcal{A}|}\frac{1}{k}\rp\,.
    $$
\end{theorem}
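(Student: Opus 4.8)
The plan is to analyze the modified algorithm stage by stage, using the partition $\{\mathcal{S}_l\}_{l=1}^{L+1}$ to reduce each stage to an instance of the safe multi-armed bandit problem of Section \ref{sec:multi_arm}, and then to invoke Theorem \ref{thm:mab-assured-finite-time}. First I would partition the iterations of Algorithm \ref{alg:barrier_learner_mod} according to the value of the stage counter $l$: let $N_l$ be the number of iterations spent with counter equal to $l$, so that the detection time decomposes deterministically as $\hat{T}_r=\sum_{l=1}^{L+1}N_l$. Since $\mathds{1}\{\hat{B}^{(\tau)}(s_\tau,a_\tau)\neq-\infty\}\leq 1$, it suffices to bound $\expc[N_l]$ for each stage and sum, because
\begin{align*}
\expc\lhp\sum_{\tau=1}^{\hat{T}_r}\mathds{1}\{\hat{B}^{(\tau)}(s_\tau,a_\tau)\neq-\infty\}\rhp\leq\expc[\hat{T}_r]=\sum_{l=1}^{L+1}\expc[N_l]\,.
\end{align*}

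The key structural fact I would establish is that, while the counter equals $l$, every state in $\bigcup_{k<l}\mathcal{S}_k$ has already been declared unsafe for \emph{all} of its actions (this is precisely the condition under which the counter advanced to $l$), so that $\max_{a'}\hat{B}(s',a')=-\infty$ for every $s'\in\bigcup_{k<l}\mathcal{S}_k$. Consequently, whenever an undeclared target pair $(s,a)$ with $s\in\mathcal{S}_l$ is sampled, the \texttt{barrier\_update} sets it to $-\infty$ as soon as either $d=1$ or $s'\in\bigcup_{k<l}\mathcal{S}_k$. By the definition of $\mathcal{S}_l$ in \eqref{eq:def_unsafe_states_sets} this event has positive probability under every action, and by the standing assumption any positive transition probability is at least $\mu$; hence each sampled undeclared target is declared unsafe with probability at least $\mu$. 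One also checks here that for $l\leq L$ all of $\mathcal{S}_l\times\mathcal{A}$ is unsafe, while stage $L+1$ collects at most $|\mathcal{S}_{L+1}||\mathcal{A}|$ residual unsafe actions of safe states.

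With this in hand, stage $l$ is exactly the flawless bandit of Section \ref{ssec:2.1}, with the full sample space $\mathcal{S}\times\mathcal{A}$ playing the role of the $K=|\mathcal{S}||\mathcal{A}|$ arms, of which at most $|\mathcal{S}_l||\mathcal{A}|$ target pairs must be detected, each with per-pull detection probability at least $\mu$. When $j$ targets remain, the probability that some target is declared in a single iteration is at least $j\mu/(|\mathcal{S}||\mathcal{A}|)$, so the waiting time to decrement the count is dominated by a $\mathrm{Geometric}(j\mu/(|\mathcal{S}||\mathcal{A}|))$ variable with mean $|\mathcal{S}||\mathcal{A}|/(j\mu)$. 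Summing these waiting times over $j=1,\ldots,|\mathcal{S}_l||\mathcal{A}|$ — exactly the computation underlying Theorem \ref{thm:mab-assured-finite-time} with $\mu_{\mathrm{low}}=\mu$ — yields $\expc[N_l]\leq\frac{|\mathcal{S}||\mathcal{A}|}{\mu}\sum_{k=1}^{|\mathcal{S}_l||\mathcal{A}|}\frac{1}{k}$, where monotonicity of the harmonic sum absorbs the overcount in stage $L+1$. Summing over $l=1,\ldots,L+1$ produces the claimed bound.

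The main obstacle I anticipate is the detection-probability step for the base stage $l=1$ (and for the residual actions in stage $L+1$), where detection is triggered by the damage signal $d=1$ rather than by a transition into an already-declared state; justifying the uniform lower bound $\mu$ there requires that damage events, when possible, also occur with probability at least $\mu$, which is consistent with modelling damage as a bounded transition (as in the grid-world of Section \ref{sec:experiment-barrier-learner}) but is the one place where the transition-probability assumption must be read as covering the damage indicator. A secondary point to handle with care is that the per-stage waiting times are not independent across stages, so I would make the summation rigorous by applying the tower property at the (stopping) times when the counter advances, rather than treating the $N_l$ as independent.
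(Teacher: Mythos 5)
Your proof is correct and follows essentially the same route as the paper: the same stage-wise decomposition via the sets $\mathcal{S}_l$, the same reduction of each stage to the flawless bandit of Section \ref{ssec:2.1} with per-pull detection probability at least $\mu$, and the same harmonic-sum bound from Theorem \ref{thm:mab-assured-finite-time}; the only (harmless) bookkeeping difference is that you bound the total iteration count $\expc[\hat{T}_r]$ rather than the number of accepted samples, which dominates the claimed quantity and gives the identical bound because the uniform draw is over all $|\mathcal{S}||\mathcal{A}|$ pairs. The caveat you flag for stage $l=1$ is resolved in the paper exactly as you suggest: its proof reads the assumption as also lower-bounding the probability of directly incurring damage by $\mu$.
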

\begin{proof}
    Let $\hat{T}_l,l=1,\cdots,L+1$ denote the earliest time when all unsafe state-action pairs associated with $\mathcal{S}_l$ are detected by Algorithm \ref{alg:barrier_learner_mod}, and we let $\hat{T}_0=0$. Then clearly $$\hat{T}_{l-1}< \hat{T}_l,l=1,\cdots,L+1\,,\quad \hat{T}_{L+1}=\hat{T}_r\,,$$
    and
    $
        \Delta_l:=\sum_{t=\hat{T}_{l-1}}^{\hat{T}_l-1} \mathds{1}\{\hat{B}^{(t)}(s_t,a_t)\neq -\infty\}
    $
    is the number of accepted samples by Algorithm \ref{alg:barrier_learner_mod} between $\hat{T}_{l-1}$ and $\hat{T}_l$.
    
    Notice that we can view the barrier learning process between $\hat{T}_{l-1}$ and $\hat{T}_l$ as detecting unsafe machines in the safe multi-arm bandits problem discussed in Section \ref{sec:multi_arm}: At time $\hat{T}_{l-1}$, there are in total $K_l:=\lv\lb (s,a):s\in\bigcup_{k=l}^{L+1}\mathcal{S}_k,a\in\mathcal{A}\rb\rv$ machines, and the number of unsafe machines is $$M_l:=\lv\{(s,a):s\in\mathcal{S}_l,a\in\mathcal{A}, B^*(s,a)=-\infty \}\rv\,.$$ We have
    $
        K_l\leq |\mathcal{S}||\mathcal{A}|\,,\ M_l\leq |\mathcal{S}_l||\mathcal{A}|\,.
    $
    Furthermore, condition on such an unsafe machine is pulled, i.e. an unsafe $(s,a)$ in $\mathcal{S}_l$ is accepted by Algorithm \ref{alg:barrier_learner_mod}, the probability of declaring it unsafe is at least $\rho$. Because  the $(s,a)$ pair either transitions to some $s\in\bigcup_{k=1}^{l-1}\mathcal{S}_k$ that has been declared unsafe or directly incurs damage with non-zero probability, and that probability is lower bounded by $\rho$ according to our assumption. 
    
    The acceptance of sample $(s_\tau,a_\tau,s'_\tau,d_\tau)$ is equivalent to pulling a uniformly randomly drawn arm out of arms that have not been declared unsafe. Theorem \ref{thm:mab-assured-finite-time} suggests that the expected number of such "pulling" is upper bounded as
    $$
        \expc [\Delta_l]\leq \frac{K_l}{\rho}\lp\sum_{k=1}^{M_l}\ \frac{1}{k}\rp\leq \frac{|\mathcal{S}||\mathcal{A}|}{\rho}\lp\sum_{k=1}^{|\mathcal{S}_l||\mathcal{A}|}\frac{1}{k}\rp\,.
    $$
    Finally, we have
    \begin{align*}
        &\;\expc\lhp \sum_{t=0}^{\hat{T}_r}\mathds{1}\{\hat{B}^{(t)}(s_t,a_t)\neq -\infty\}\rhp\\
        =&\;\expc \lhp \sum_{l=1}^{L+1}\Delta_l\rhp\leq \frac{|\mathcal{S}||\mathcal{A}|}{\rho}\sum_{l=1}^{L+1}\lp\sum_{k=1}^{|\mathcal{S}_l||\mathcal{A}|}\frac{1}{k}\rp\,.
    \end{align*}
\vspace{-6pt}
\end{proof}
\begin{proof}[Proof of Theorem \ref{thm:sample_comp_barrier_learner}]
Given any MDP, we have $|\mathcal{S}_l|\leq |\mathcal{S}|,\forall l=0,1,\cdots,L+1$, 
then we have
\begin{align*}
    \expc [T] &\leq\!\expc\lhp \sum_{\tau=1}^{\hat{T}_r}\mathds{1}\{\hat{B}^{(\tau)}(s_\tau,a_\tau)\neq -\infty\}\rhp\\
    &\leq\! \frac{|\mathcal{S}||\mathcal{A}|}{\rho}\sum_{l=1}^{L+1}\lp\sum_{k=1}^{|\mathcal{S}_l||\mathcal{A}|}\frac{1}{k}\rp\leq\! \frac{|\mathcal{S}||\mathcal{A}|(L+1)}{\mu}\lp\sum_{k=1}^{|\mathcal{S}||\mathcal{A}|}\frac{1}{k}\rp\,,
\end{align*}
where the first equality is from \eqref{eq_sample_comp_re} and Claim \ref{clm:s_order_of_b_learner}. \hl{The result follows by upper bounding the summation with the $\log$ inequality $\sum_{k=1}^n\frac{1}{k}\leq\log(n+1)$.}
\end{proof}

\bibliography{refs}
\bibliographystyle{ieeetr}
\vspace{-5ex}
\begin{IEEEbiography}[{\includegraphics[width=1in,height=1.25in,clip,keepaspectratio]{./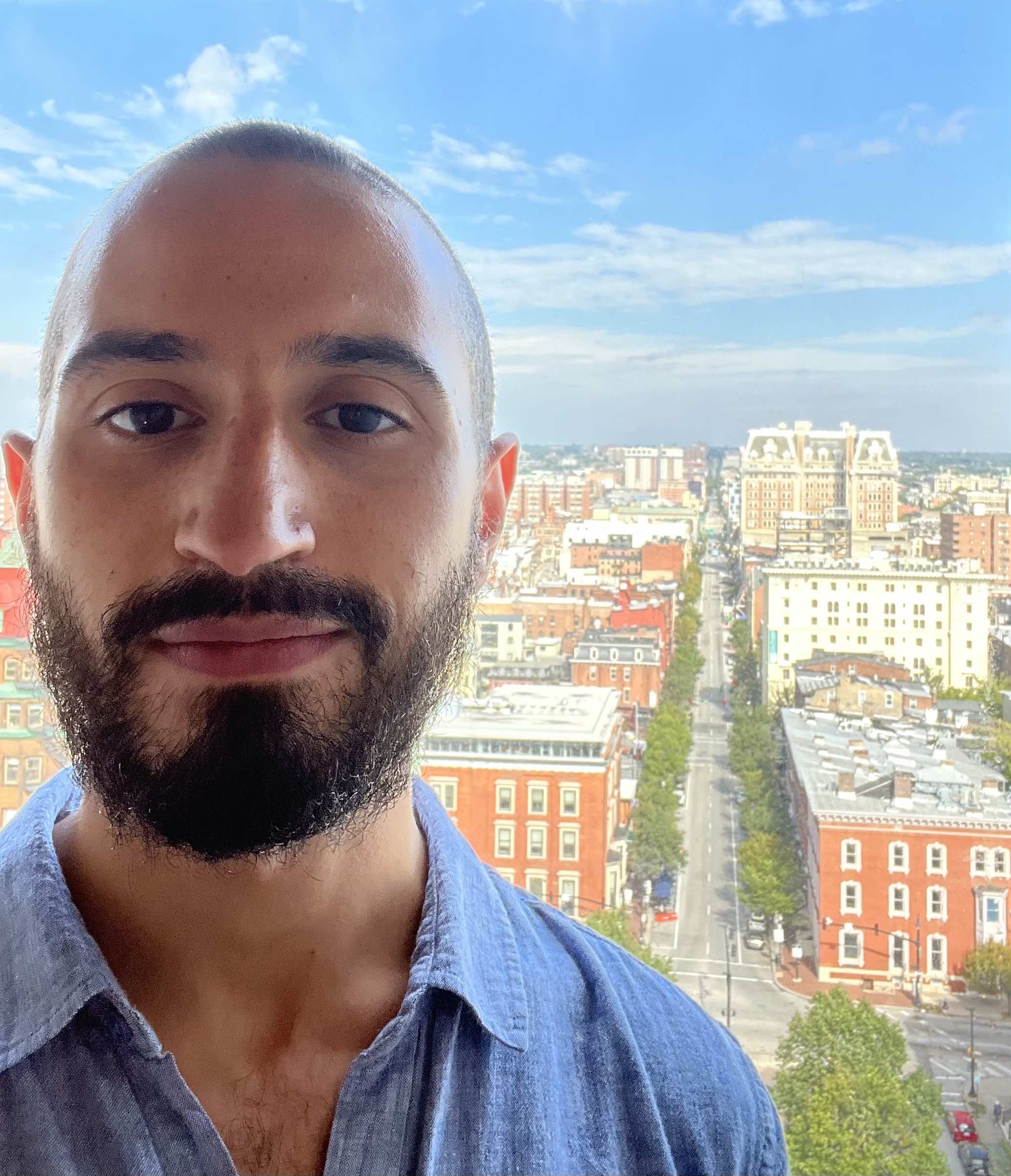}}]{Agustin Castellano} is pursuing a Ph.D. at the Department of Electrical Engineering at Johns Hopkins University. He
received the M.Sc. and his degree in Electrical Engineering from Universidad de la Rep\'ublica, Uruguay in 2021 and 2017 respectively. For his M.Sc. dissertation he was awarded the first prize given by the National Academy of Engineers. His current research interests include Reinforcement Learning theory and algorithms, with applications to power system optimization.
\end{IEEEbiography}
\vspace{-5ex}

\begin{IEEEbiography}
[{\includegraphics[width=1in,height=1.25in,clip,keepaspectratio]{./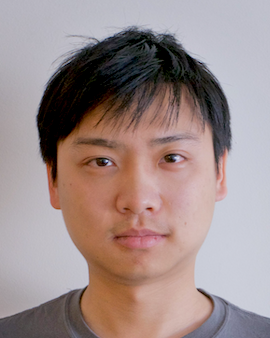}}]{Hancheng Min}
is currently working toward the Ph.D.
degree at the Department of Electrical and Computer Engineering, Johns Hopkins University. He received the B.Eng. degree in Electrical Engineering and Automation from Tongji University in 2016, and the M.S. degree in Systems Engineering from University of Pennsylvania in 2018. His research interests include analysis and control of large-scale networks, reinforcement learning and deep learning theory.
\end{IEEEbiography}
\vspace{-5ex}
\begin{IEEEbiography}[{\includegraphics[width=1in,height=1.25in,clip,keepaspectratio]{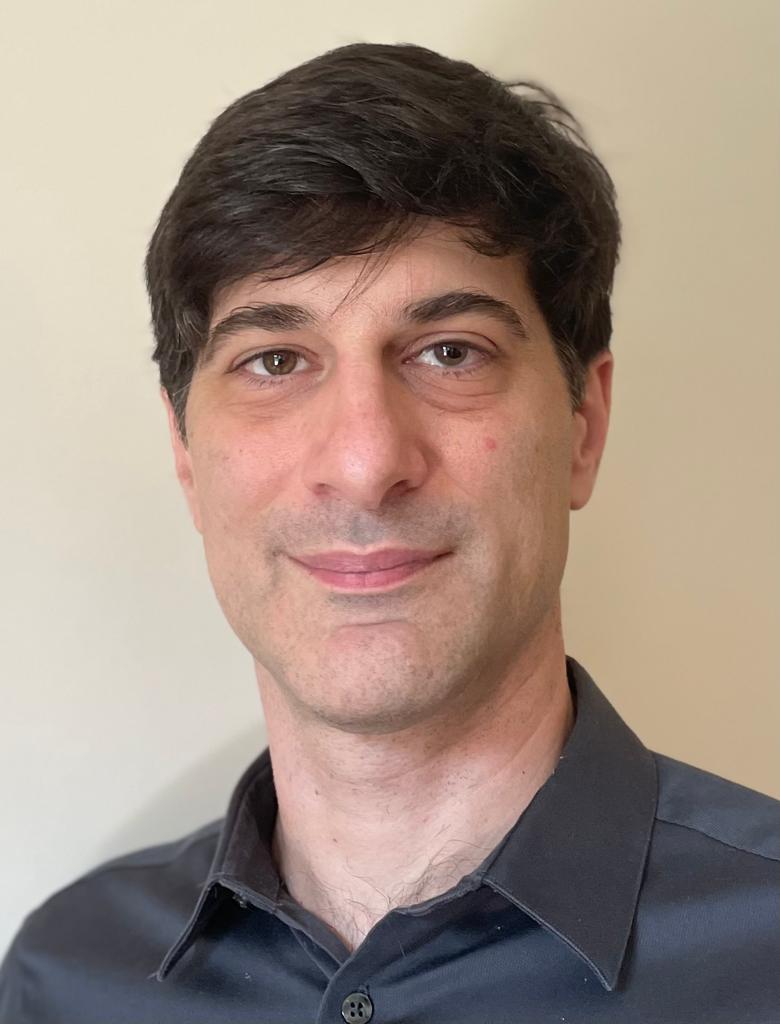}}]{Juan Andr\'es Bazerque} received the B.Sc. degree in electrical engineering from Universidad de la Republica (UdelaR), Montevideo, Uruguay, in 2003, ´
and the M.Sc. and Ph.D. degrees from the Department of Electrical and Computer Engineering, University of Minnesota (UofM), Minneapolis, in 2010 and 2013 respectively. 
After his PhD studies he entered the Department of Electrical Engineering at UdelaR as an Assistant Professor. In 2022 he moved back to the US where he joined the Department of Electrical and Computer Engineering at the University of Pittsburgh.  
His current research interests include stochastic optimization and networked systems, focusing on reinforcement learning, graph signal processing, and power systems optimization and control. 
Dr. Bazerque is the recipient of the UofM’s Master Thesis Award 2009-2010, and co-recipient of the best paper award at the 2nd International Conference on Cognitive Radio Oriented Wireless Networks and Communication 2007.
 \end{IEEEbiography}
\vspace{-5ex}

\begin{IEEEbiography}[{\includegraphics[width=1in,height=1.25in,clip,keepaspectratio]{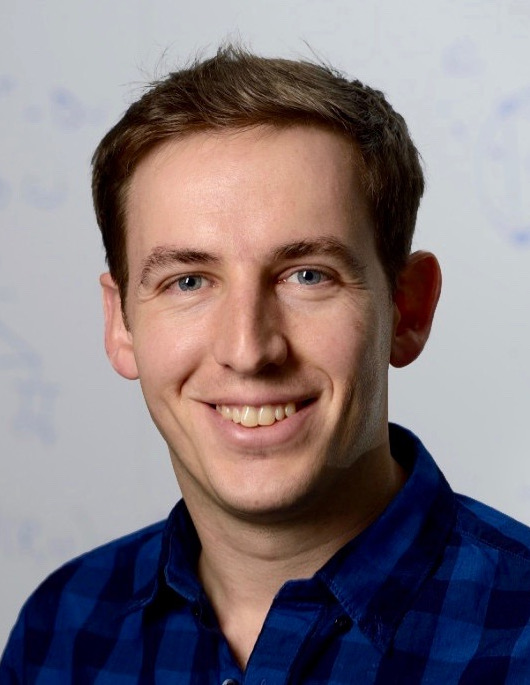}}]
{Enrique Mallada} (S'09-M'13-SM') is an Associate Professor of Electrical and Computer Engineering at Johns Hopkins University since 2022. Prior to joining Hopkins in 2016, he was a Post-Doctoral Fellow in the Center for the Mathematics of Information at Caltech from 2014 to 2016. He received his Ingeniero en Telecomunicaciones degree from Universidad ORT, Uruguay, in 2005 and his Ph.D. degree in Electrical and Computer Engineering with a minor in Applied Mathematics from Cornell University in 2014.
Dr. Mallada was awarded
the Johns Hopkins Alumni Association Teaching Award in 2021,
the Catalyst and Discovery Awards in 2020 and 2021, respectively, from Johns Hopkins University,
the NSF CAREER award in 2018,
the ECE Director's Ph.D. Thesis Research Award for his dissertation in 2014,
the Center for the Mathematics of Information (CMI) Fellowship from Caltech in 2014,
and the Cornell University Jacobs Fellowship in 2011.
His research interests lie in the areas of control, dynamical systems, and optimization, with applications to engineering networks.
\end{IEEEbiography}

\end{document}